\theoremstyle{plain}
\newtheorem{thm}{Theorem}
\newtheorem*{thm*}{Theorem}
\newtheorem{cor}[thm]{Corollary}
\newtheorem{lem}[thm]{Lemma}
\newtheorem{obs}[thm]{Observation}
\newtheorem{conj}[thm]{Conjecture}
\theoremstyle{definition}
\newtheorem{defn}[thm]{Definition}
\newtheorem*{leminterface}{Lemma~\ref{lem:interface}}
\newtheorem*{thmmainbounded}{Theorem~\ref{thm:mainbounded}}
\newtheorem*{thmDCQsubmod}{Theorem~\ref{thm:DCQsubmod}}
\newcommand{\Zinvy}{{\v{Z}}ivn{\'y}}
\mathchardef\hyph="2D
 \let\phi=\varphi
 \def\boldB{\mathbf{B}} 
 \def\boldf{\mathbf{f}} 
 \def\calA{\mathcal{A}}
\def\calB{\mathcal{B}} 
\def\calD{\mathcal{D}} 
\def\calL{\mathcal{L}}
\def\calS{\mathcal{S}}
\def\calT{\mathcal{T}}
\def\numA{\#A}
\def\uhg{H} 
\def\sig{\mathrm{sig}}
\def\arity{\mathrm{ar}}
\def\homalg{\ensuremath{\textsc{HomAlg}}}
\def\bigO{\mathrm{O}}
\def\littleo{\mathrm{o}}
\def\poly{\mathrm{poly}}
\def\comp{\mathrm{comp}}
\def\var{\mathrm{var}} 
\def\proj{\mathrm{proj}}
\let\epsilon=\varepsilon
\let\eps=\epsilon
\newcommand*\from{\colon}
\def\lab{\psi}
\newcommand{\decoracle}{\mbox{\ensuremath{\textsc{Hom}}}}
\def\LIHom{\textsc{LIHom}}
\def\ThreeSAT{$3\textsc{-SAT}$}
\newcommand{\homs}[2]{\mbox{\ensuremath{\mathrm{Hom}(#1 \to #2)}}}
\def\blanksol{\mathsf{Sol}}
\newcommand{\ans}[1]{\mbox{\ensuremath{\mathsf{Ans}(#1)}}}
\newcommand{\sol}[1]{\mbox{\ensuremath{\blanksol(#1)}}}
\def\Ans{\ans}
\def\EdgeFree{\mathrm{EdgeFree}}
\def\Trees{\mathrm{Trees}}
\newcommand{\structsize}[1]{\lVert #1 \rVert}
\newcommand{\vars}[1]{\mathrm{vars}\left(#1\right)}
\newcommand{\free}[1]{\mathrm{free}\left(#1\right)}
\newcommand{\tw}[1]{\mathrm{tw}(#1) }
\newcommand{\cn}[1]{\mathrm{cn}(#1)} 
\newcommand{\fcn}[1]{\mathrm{fcn}(#1)}
\newcommand{\hw}[1]{\mathrm{hw}(#1)}
\newcommand{\fhw}[1]{\mathrm{fhw}(#1)}
\newcommand{\aw}[1]{\mathrm{aw}(#1)}
\newcommand{\CQ}[1]{\mathrm{\#CQ}\left(#1\right)} 
\newcommand{\ECQ}[1]{\mathrm{\#ECQ}\left(#1\right)}
\newcommand{\DCQ}[1]{\mathrm{\#DCQ}\left(#1\right)}
\newcommand{\TA}{\mathrm{\#TA}}
\newcommand{\NP}{\mathrm{NP}}  
\newcommand{\RP}{\mathrm{RP}} 
\title{Approximately Counting Answers to Conjunctive Queries with Disequalities and Negations\thanks{The research leading to these results has received funding from the European Research Council (ERC) under the European Union's Horizon 2020 research and innovation programme (grant agreement No 714532). The research was also supported by the European Research Council (ERC) consolidator grant No 725978 SYSTEMATICGRAPH. Stanislav {\v{Z}}ivn{\'y}\ was supported by a Royal Society University Research Fellowship. The paper reflects only the authors' views and not the views of the ERC or the European Commission. The European Union is not liable for any use that may be made of the information contained therein.}}
\author{Jacob Focke\thanks{CISPA Helmholtz Center for Information Security} \and Leslie Ann Goldberg\thanks{Department of Computer Science, University of Oxford} \and Marc Roth\thanks{School of Electronic Engineering and Computer Science, Queen Mary University of London} \and Stanislav {\v{Z}}ivn{\'y}$^\ddagger$}
\date{4th March 2024\vspace{-5mm}}
\begin{document}
	
\maketitle
\thispagestyle{empty}

\begin{abstract}
We study the complexity of approximating the number of answers to a small query~$\varphi$ in a large database $\calD$. We establish an exhaustive classification into tractable and intractable cases if $\varphi$ is   a conjunctive query possibly including disequalities and negations:
\begin{itemize}
\item If there is a constant bound on the arity of $\varphi$, and if the randomised Exponential Time Hypothesis (rETH) holds, then the problem has a fixed-parameter tractable approximation scheme (FPTRAS) if and only if the treewidth of $\varphi$ is bounded.
\item If the arity is unbounded and $\varphi$ does not have negations, then the problem has an FPTRAS if and only if the adaptive width of $\varphi$ (a width measure strictly more general than treewidth) is bounded; the lower bound relies on the rETH as well.
    \end{itemize}
Additionally  we show that our results cannot be strengthened to achieve a fully polynomial randomised approximation scheme (FPRAS): We observe that, unless $\NP =\RP$, there is no FPRAS even if the treewidth (and the adaptive width) is $1$.
    
However, if there are neither disequalities nor negations, we prove the existence of an FPRAS for queries of bounded fractional hypertreewidth, strictly generalising the recently established FPRAS for conjunctive queries with bounded hypertreewidth due to Arenas, Croquevielle, Jayaram and Riveros (STOC 2021).
\end{abstract}

 \clearpage
 \setcounter{page}{1}

\section{Introduction}\label{sec:Intro}
The evaluation of conjunctive queries  
is amongst the most central and well-studied problems in database theory~\cite{ChandraM77,AbiteboulHV95,ArenasNew, Gottlob02:jcss-hypertree}. 
These queries are also called 
\emph{select-project-join queries}
in relational algebra and \emph{select-from-where} queries in SQL.
In this work, we study the \emph{counting} problem
associated with conjunctive queries and with extensions to conjunctive queries allowing negations, disequalities,
and unions of queries. Given a query $\varphi$ and a database $\calD$, an ``answer'' of~$\phi$ in~$\calD$ is an assignment 
of values from the universe of~$\calD$
to the free variables of~$\varphi$ that can be extended 
(by also assigning values to the existential variables of~$\phi$)
to 
an assignment satisfying~$\phi$.
For example, the universe of the database $\calD$ could be a set of people $U$, and $\calD$ has an entry $F(a,b)$ whenever two people $a,b\in U$ are ``friends''. Then an answer to the query
\begin{equation}\label{eq:friends_query}
    \phi(x) = \exists y\exists z\, F(x,y)\land F(x,z)\land (y\neq z)
\end{equation}
is a person that has at least two friends (from the people in $U$).

The counting problem  is to compute the number of answers of $\varphi$ in $\calD$. 
Our goal is to determine the parameterised complexity of this counting problem
in the situation where   the query $\varphi$ is   significantly smaller than the database $\calD$; a formal exposition is given  in Section~\ref{sec:intro_background}.

Previous work~\cite{DurandM15,ChenM16,DellRW19} established that the problem of exactly counting answers to conjunctive queries  
is \emph{extremely} difficult: Even very simple queries, such as acyclic conjunctive queries, which can be evaluated in polynomial time~\cite{Yannakakis81,GottlobLS01},  
are sufficiently powerful to encode intractable problems in their counting versions, making any non-trivial improvement over the brute-force algorithm impossible under the Strong Exponential Time Hypothesis~\cite{DellRW19}.

Therefore, the relaxation to approximate counting is necessary if efficient algorithms are sought. In this work, we 
quantify the complexity of  approximating the number of answers to
conjunctive queries
with negations and disequalities, and unions thereof, in terms of several 
natural width measures of the queries, such as treewidth, fractional hypertreewidth, and adaptive width. This leads to a complete classification 
(and a new approximation algorithm) in the bounded-arity case, to a complete classification (and another new approximation algorithm) in the unbounded-arity case when negations are excluded,
and to a new FPRAS in the unparameterised setting.
The formal setup, including the definitions of the problems and 
the approximation schemes, are introduced in Section~\ref{sec:intro_background} and we present our results in Section~\ref{sec:results}.

\subsection{Technical Background}\label{sec:intro_background} 
A \emph{signature} $\sigma$ consists of  a finite set of
relation symbols  
with specified positive  arities. 
A \emph{(relational) database} $\calD$ with signature $\sig(\calD)$ consists of 
a finite \emph{universe} $U(\calD)$\footnote{$U(\calD)$ is also often referred to as the ``domain'' of $\calD$.}
   together with, for each 
relation symbol $R\in \sig(\calD)$,
a   relation 
$R^{\calD}$ over  the universe~$U(\calD)$ with the same arity that~$\sig(\calD)$
specifies for~$R$.
The tuples in the relations $R^{\calD}$ are called the \emph{facts} of~$\calD$.
A \emph{conjunctive query} (CQ) $\phi$ with signature $\sig(\phi)$
is a formula of the form 
\begin{equation*} 
	\phi(x_1,\ldots,x_\ell) =  \exists x_{\ell+1} \cdots \exists x_{\ell+k}\psi(x_1,\dots,x_{\ell+k}),
\end{equation*}
where  $\vars{\phi}$  denotes the set of variables $\{x_1, \ldots, x_{k+\ell}\}$ of $\phi$, $\free{\phi}$ denotes the set of free (output) variables $\{x_1,\ldots,x_\ell\}$ of~$\phi$, and  $\psi$ is a conjunction of a finite number of atoms, which 
are predicates of the form 
$R(y_1,\ldots,y_j)$, where $R$ is an arity-$j$ relation symbol in~$\sig(\phi)$
and each $y_i$ is   a variable
in~$\vars{\phi}$.
Each symbol of~$\sig(\phi)$ appears in at least one predicate.
Also, each variable in~$\vars{\phi}$ appears in at least one atom.

In an \emph{extended} conjunctive query (ECQ)
there are three more types of allowable
atoms.
\begin{itemize}
	\item Equality: $y_i = y_j$.
	\item Disequality: $y_i \neq y_j$.
	\item Negated predicate: $\neg R(y_1,\ldots,y_j)$, where 
	$R$ is an arity-$j$ relation symbol in~$\sig(\phi)$
and each $y_i$ is   a variable
in~$\vars{\phi}$. 
\end{itemize}
Again, each element of~$\sig(\phi)$ appears 
at least once in~$\phi$ (as a predicate, as a negated predicate, or both).

In fact, without loss of generality we can assume
that $\phi$ has no equalities, since we can re-write $\phi$ to avoid these by replacing equal variables with a single variable. 
Thus, from now on, we assume that ECQs do not have equalities.

It is natural to extend conjunctive queries by adding disequalities and negations.
Such extended queries were studied for instance in~\cite{Papadimitriou99:JCSS, Koutris17:TCS,Gutierrez-Basulto15:WS, Cima20:AAAI, Arenas11:TCS, Koutris18:PODS}. 
Sometimes we will be interested in extending CQs by adding disequalities but
not negations. We refer to these partially-extended queries as DCQs.

Consider an ECQ~$\phi$.
The following notation of ``Solution'' captures the assignments (of elements in $U(\calD)$ to the variables of~$\phi$) 
that satisfy~$\phi$. It does not distinguish between existential and free variables, but we do that later in Definition~\ref{def:ans}.

\begin{defn} (solution, $\sol{\phi,\calD}$)\label{def:sol}
Let~$\phi$ be an ECQ   and let~$\calD$ be a database with
$\sig(\phi) \subseteq \sig(\calD)$. 
A \emph{solution} of $(\phi,\calD)$ is an 
assignment $\alpha \colon \vars{\phi} \to U(\calD)$ which
has the following property.
\begin{itemize}
\item For every predicate $R(y_1,\ldots,y_j)$ of~$\phi$, 
the tuple $(\alpha(y_1),\ldots,\alpha(y_j))$ is in $R^{\calD}$,  
\item  For every negated predicate $\neg R(y_1,\ldots,y_j)$ of~$\phi$,
the tuple $(\alpha(y_1),\ldots,\alpha(y_j))$ is not in $R^{\calD}$, and
\item For every disequality $y_i \neq y_j$ of~$\phi$ we have $\alpha(y_i)\neq \alpha(y_j)$.
\end{itemize}
We use $\sol{\phi,\calD}$ to denote the set of solutions of $(\phi,\calD)$.  
\end{defn}

In this work, we will not be interested so much in the solutions of $(\phi,\calD)$ 
but rather in their projections onto the
free (output) variables of~$\phi$.

\begin{defn} \label{def:ans} ($\proj$, answer, $\ans{\phi,\calD}$)
Let $\phi$ be an ECQ and  
let $\calD$ be a database with $\sig(\phi) \subseteq \sig(\calD)$. 
Let    $\alpha \colon \vars{\phi} \to U(\calD)$
be an assignment of elements in $U(\calD)$ to the variables of~$\phi$.
We use   $\proj(\alpha,\free{\phi})$ to denote $\alpha$'s projection onto 
the free   variables of~$\phi$. 
That is, $\proj(\alpha,\free{\phi})$
is the assignment from $\free{\phi}$ to 
$U(\calD)$ that agrees with~$\alpha$.
An \emph{answer} 
of $(\phi,\calD)$ is
an assignment $\tau \colon \free{\phi} \to U(\calD)$ 
of elements in $U(\calD)$ to the free variables of~$\phi$
which can be extended to a solution in the sense that
there is a solution~$\alpha$ of
$(\phi,\calD)$ 
with $\proj(\alpha,\free{\phi})=\tau$. We write $\ans{\phi,\calD}$ for the set of all answers of $(\phi,\calD)$.

\end{defn}

Our main focus is on the problem of approximately counting answers to extended conjunctive queries $\phi$, parameterised\footnote{This choice of the parameter allows one to 
produce fine-grained complexity results which
are appropriate for instances in which the query size is (significantly) smaller than the size of the database (see for example the section ``Why Fixed-Parameter Tractability'' in \cite[Section 1]{Marx13:jacm} for a longer discussion of this point). The notion of fixed-parameter tractability is made formal below.} by
the size of~$\phi$, which is denoted by~$\structsize{\phi}$, 
and is defined to be the sum of $|\vars{\phi}|$ and the sum of the arities of the atoms in~$\phi$.

The formal problem definition is as follows.
Let $\Phi$ be a class of ECQs.

\vbox{
\begin{description}\setlength{\itemsep}{0pt}
\setlength{\parskip}{0pt}
\setlength{\parsep}{0pt}   			
\item[\bf Name:] $\ECQ{\Phi}$ 
\item[\bf Input:]   An ECQ $\phi\in \Phi$ and a database $\calD$ with $\sig(\phi) \subseteq \sig(\calD)$.
\item[\bf Parameter:] $\structsize{\phi}$. 
\item[\bf Output:] $|\ans{\phi,\calD}|$. 
\end{description}
}

We define the problems  $\CQ{\Phi}$ and $\DCQ{\Phi}$
analogously, by requiring the input to be a CQ (in the case of $\CQ{\Phi}$) or
a DCQ (in the case of $\DCQ{\Phi}$).

The size of the encoding of the input pair $(\phi,\calD)$ is taken to be the sum of~$\structsize{\phi}$ and
the size of the encoding of $\calD$ (written $\structsize{\calD}$) which is
 defined to be $|\sig(\calD)| + |U(\calD)|$ plus the sum of the lengths of the tuples in the relations of~$\calD$.

Note that singleton unary relations in~$\calD$ can be used to implement ``constants'' in~$\phi$. To see this, for any $v\in U(\calD)$ 
let $R^{\calD}_v$ denote the relation $\{v\}$.
It is possible to refer to the constant~$v$ 
in~$\phi$ by 
including $R_v$ in $\sig(\phi)$
and constraining some variable~$x\in \vars{\phi}$ with the
predicate $R_v(x)$. 
Of course the size $\structsize{\phi}$ increases
by an additional constant amount by the introduction of
the variable~$x$ and the predicate~$R_v(x)$.
Adding all singleton unary relations to the signature of~$\calD$
does not increase~$\structsize{\calD}$ significantly, since 
$|U(\calD)|$ is already included in~$\structsize{\calD}$.

While we focus in this work on counting, 
there is also a corresponding decision problem $\textsc{ECQ}(\Psi)$
with the same input and parameter as $\ECQ{\Psi}$. The output
of $\textsc{ECQ}(\Psi)$ is a single bit,  indicating whether 
$|\ans{\phi,\calD}|>0$.  
The complexity of $\textsc{ECQ}(\Psi)$ is not fully resolved
and some special cases, such as parameterised subgraph isomorphism
are thought to be difficult to resolve~\cite[Chapter 33.1]{DowneyF13}.
We next design the notion of efficient approximation for counting problems such as $\ECQ{\Psi}$.

\paragraph{Randomised Approximation Schemes and Fixed-Parameter Tractability}

Given a value $V$ and $\eps, \delta\in (0,1)$, an $(\eps, \delta)$-\emph{approximation} of $V$ is a random variable $X$ that satisfies $\Pr(|X-V|\le \eps V)\ge 1-\delta$.

Let $\numA$ be a counting problem that, when given input $x$, asks for the value $V(x)$. Slightly overloading notation, an $(\eps, \delta)$-\emph{approximation} for $\numA$ is a randomised algorithm that, given an input $x$ to $\numA$, outputs an $(\eps, \delta)$-approximation of $V(x)$.
A \emph{fully polynomial randomised approximation scheme} (FPRAS) for $\numA$ is a randomised algorithm that, on input $x,\eps,\delta$, computes an $(\eps, \delta)$-approximation of $V(x)$ in time polynomial in $\structsize{x}$, $1/\eps$, and  $\log(1/\delta)$.

Suppose  that the counting problem $\numA$ is parameterised by a parameter $k$
(as the problem $\ECQ{\Phi}$ is parametetrised by~$\structsize{\phi}$). A \emph{fixed-parameter tractable randomised approximation scheme} (FPTRAS) for $\numA$ is a randomised algorithm that, on input $x,\eps,\delta$, computes an $(\eps, \delta)$-approximation of $V(x)$ in time $f(k)\cdot \mathsf{poly}(\structsize{x},1/\eps, \log(1/\delta))$, for some function $f\colon \mathbb{R}\to \mathbb{R}$.

Applying this definition, note that
an FPTRAS for $\ECQ{\Phi}$ has a running time bound of $f(\structsize{\phi})\cdot p(\structsize{\calD},1/\eps, \log(1/\delta))$. In other words, relative to the definition of FPRAS,
the definition of FPTRAS relaxes the condition 
that the algorithm must run in polynomial time  by allowing a super-polynomial factor in the size of the query. 
Since the query is assumed to be significantly smaller than the database, this is a very natural notion of  an efficient algorithm. Indeed, we will show that all FPTRASes for $\ECQ{\Phi}$ constructed in this work cannot be upgraded to FPRASes (subject to natural complexity hypotheses). 
The reason that they cannot be upgraded is that, even for very restricted query classes $\Phi$,  there are reductions from $\NP$-hard problems to the problem of producing an FPRAS for~$\ECQ{\Phi}$.

Considering $\ECQ{\Phi}$ as a parameterised problem allows one to interpolate between the classical complexity of the problem, in which no assumptions
are made regarding the size of the input query, and its \emph{data complexity}, in which the input query is 
fixed. 

From the viewpoint of data complexity, 
there is a brute-force polynomial-time algorithm
for counting answers to a query,  
by iterating through all assignments 
of the variables,   roughly in time $\structsize{\calD}^{\bigO(\structsize{\phi})}$.
If the query~$\phi$ is fixed, the running time 
of this brute-force algorithm is  bounded by a polynomial in the input size. In the fixed-parameter setting
the goal is instead to separate the (potentially exponential) running time in the query size from the (polynomial) running time, in the size of the database.

\subsection{Our Results}\label{sec:results}

In order to give an overview of our results, we provide an illustration in Figure~\ref{fig:my_label}.

\begin{figure}[t!]
    \centering
    		\begin{tikzpicture}[scale=1]
		\draw[ultra thick, rounded corners] (-3.75,8) -- (-3.75,0) -- (3.75,0) -- (3.75,8);
		\draw[thick, rounded corners] (-3.75,1) -- (-3.75,4.5) -- (3.75,4.5) -- (3.75,1);
		\node at (0,4.25) {\footnotesize bounded $\mathrm{tw}$ $\Leftrightarrow$ bounded $\mathrm{(f)hw}$ $\Leftrightarrow$ bounded $\mathrm{aw}$};

		\node[draw, ultra thick, rounded rectangle, fill=green, opacity=0.2] at (0,3.15) {\phantom{FPTRAS for $\mathrm{DCQ}$, $\mathrm{ECQ}$ [Theorem~\ref{thm:mainbounded}]}};
		\node at (0,3.15) {FPTRAS for $\mathrm{DCQ}$, $\mathrm{ECQ}$ [Theorem~\ref{thm:mainbounded}]};
		
		\node[draw, ultra thick, rounded rectangle, fill=red, opacity=0.2] at (0,2.15) {\phantom{No FPRAS for $\mathrm{DCQ}$, $\mathrm{ECQ}$  [Obs.\ \ref{obs:LBnoFPRAS}]}};
		\node at (0,2.15) {No FPRAS for $\mathrm{DCQ}$, $\mathrm{ECQ}$  [Obs.\ \ref{obs:LBnoFPRAS}]};
		
		\node[draw, ultra thick, rounded rectangle, fill=green, opacity=0.2] at (0,1.15) {\phantom{FPRAS for $\mathrm{CQ}$~\cite[Thm 3.2]{ArenasNew}}};
		\node at (0,1.15) {FPRAS for $\mathrm{CQ}$~\cite[Thm 3.2]{ArenasNew}};
		
		\node[draw, ultra thick, rounded rectangle, fill=red, opacity=0.2] at (0,7) {\phantom{\shortstack{No FPTRAS for\\ $\mathrm{CQ}$, $\mathrm{DCQ}$, $\mathrm{ECQ}$ [Obs.\ \ref{obs:LBmain}]}}};
		\node at (0,7) {\shortstack{No FPTRAS for\\ $\mathrm{CQ}$, $\mathrm{DCQ}$, $\mathrm{ECQ}$ [Obs.\ \ref{obs:LBmain}]}};

		\node[draw, ultra thick, rounded rectangle, fill=red, opacity=0.2] at (0,5.65) {\phantom{No FPRAS for $\mathrm{CQ}$~\cite[Cor 3.3]{ArenasNew}}};
		\node at (0,5.65) {No FPRAS for $\mathrm{CQ}$~\cite[Cor 3.3]{ArenasNew}};
		
		\node at (0,-0.5) {Bounded Arity};
		
		\begin{scope}[shift={(8,0)}]
		\draw[ultra thick, rounded corners] (-3.75,8) -- (-3.75,0) -- (3.75,0) -- (3.75,8);
		\draw[thick, rounded corners] (-3.45,0) -- (-3.45,1.5) -- (3.45,1.5) -- (3.45,0);
		\node[draw, ultra thick, rounded rectangle, fill=red, opacity=0.2] at (0,0.65) {\phantom{No FPRAS for $\mathrm{DCQ}$, $\mathrm{ECQ}$  [Obs.\ \ref{obs:LBnoFPRAS}]}};
		\node at (0,0.65) {No FPRAS for $\mathrm{DCQ}$, $\mathrm{ECQ}$  [Obs.\ \ref{obs:LBnoFPRAS}]};
		
		\node at (0,1.3) {\footnotesize bounded tw};
		\draw[thick, rounded corners] (-3.55,0) -- (-3.55,3) -- (3.55,3) -- (3.55,0);	
		\node[draw, ultra thick, rounded rectangle, fill=green, opacity=0.2] at (0,2.15) {\phantom{FPRAS for $\mathrm{CQ}$~\cite[Thm 3.2]{ArenasNew}}};
		\node at (0,2.15) {FPRAS for $\mathrm{CQ}$~\cite[Thm 3.2]{ArenasNew}};
		
		\node at (0,2.8) {\footnotesize bounded hw};
		\draw[thick, rounded corners] (-3.65,0) -- (-3.65,4.5) -- (3.65,4.5) -- (3.65,0);	
		\node[draw, ultra thick, rounded rectangle, fill=green, opacity=0.2] at (0,3.65) {\phantom{FPRAS for $\mathrm{CQ}$ [Theorem~\ref{thm:fractionalFPRAS}]}};
		\node at (0,3.65) {FPRAS for $\mathrm{CQ}$ [Theorem~\ref{thm:fractionalFPRAS}]};
		
		\node at (0,4.3) {\footnotesize bounded fhw};
		\draw[thick, rounded corners] (-3.75,1) -- (-3.75,6) -- (3.75,6) -- (3.75,1);
		\node[draw, ultra thick, rounded rectangle, fill=green, opacity=0.2] at (0,5.15) {\phantom{FPTRAS for $\mathrm{DCQ}$, $\mathrm{CQ}$ [Theorem~\ref{thm:submod}]}};
		\node at (0,5.15) {FPTRAS for $\mathrm{DCQ}$, $\mathrm{CQ}$ [Theorem~\ref{thm:submod}]};
		
		\node at (0,5.8) {\footnotesize bounded aw};
		\node[draw, ultra thick, rounded rectangle, fill=red, opacity=0.2] at (0,7) {\phantom{\shortstack{No FPTRAS for\\ $\mathrm{CQ}$, $\mathrm{DCQ}$, $\mathrm{ECQ}$ [Obs.\ \ref{obs:LBsubmod}]}}};
		\node at (0,7) {\shortstack{No FPTRAS for\\ $\mathrm{CQ}$, $\mathrm{DCQ}$, $\mathrm{ECQ}$ [Obs.\ \ref{obs:LBsubmod}]}};
		\node at (0,-0.5) {Unbounded Arity};
		\end{scope}
	\end{tikzpicture}
    \caption{Overview of our results on approximately counting answers to conjunctive queries ($\mathrm{CQ}$s), to conjunctive queries with disequalities ($\mathrm{DCQ}$s), and to conjunctive queries with disequalities and negations ($\mathrm{ECQ}$s). Upper and lower bounds depend on a variety of width measures of the input queries, namely, treewidth (tw), hypertreewidth (hw), fractional hypertreewidth (fhw), and adaptive width (aw). The equivalence of the width parameters in the case of bounded arity is well known; we provide an explicit argument in Observation~\ref{obs:tw_vs_aw}.
    For completeness, we also compare our results to recent work of Arenas et al.~\cite{ArenasNew}. The lower bounds either rely on the assumption that $\NP \neq \RP$  or on the rETH. All referenced theorems and observations are stated in Section~\ref{sec:results}. Note that, while our results complete the picture of the complexity of $\mathrm{CQ}$, $\mathrm{DCQ}$, and $\mathrm{ECQ}$ in the bounded arity case, two questions remain open for the unbounded arity case: Assuming the adaptive width is bounded, does $\mathrm{ECQ}$ have an FPTRAS, and does $\mathrm{CQ}$ have an FPRAS?}
    \label{fig:my_label}
\end{figure}
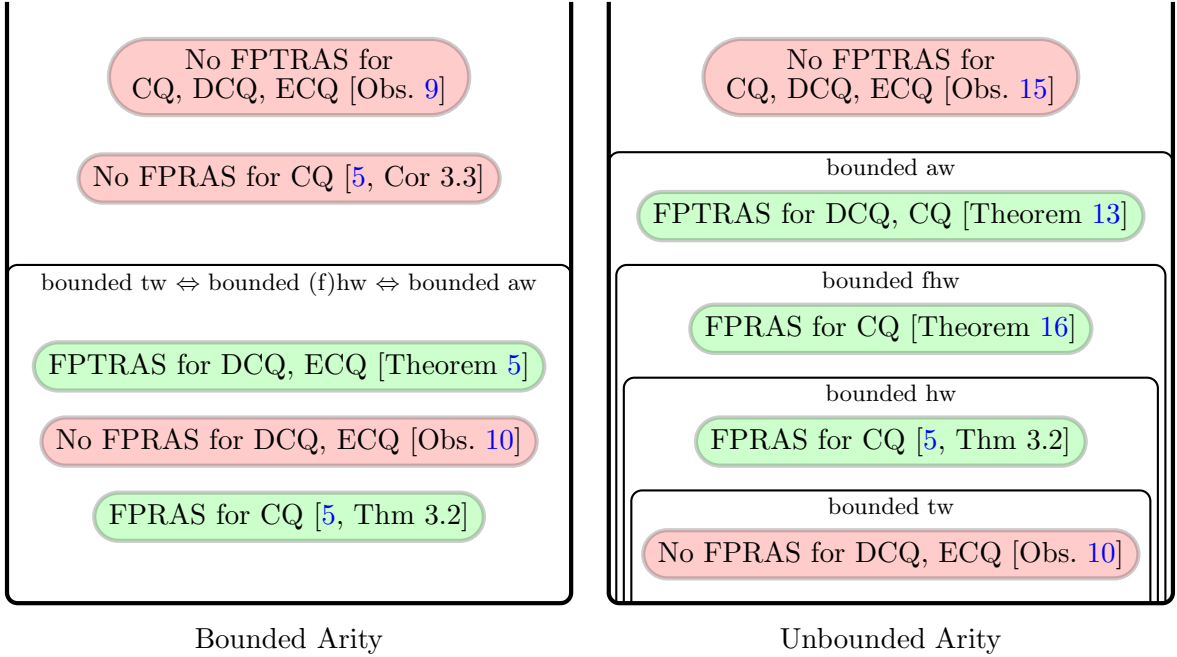

\paragraph{Bounded-Treewidth ECQs}
\label{sec:deftw}

The tractability criteria in our results will depend on the hypergraph associated with a conjunctive query (Definition~\ref{def:Hphi}, see  also~\cite{Gottlob02:jcss-hypertree}).
A \emph{hypergraph} $H$ consists of a (finite) set of vertices~$V(H)$ and a set $E(H) \subseteq 2^{V(H)}$ of non-empty hyperedges. The \emph{arity} of a hypergraph is the maximum size of its hyperedges.

\begin{defn}($H(\phi)$, $\Phi_C$)\label{def:Hphi}
Given an ECQ~$\phi$, the \emph{hypergraph of} $\phi$, denoted $H(\phi)$,
has vertex set $V(H(\phi))= \vars{\phi}$. For each
predicate of~$\phi$ there is a hyperedge in~$E(H(\phi))$
containing the variables appearing in the  predicate.
For each   negated predicate of~$\phi$, there is a hyperedge in~$E(H(\phi))$
containing the variables appearing in the negated predicate.
For any class of hypergraphs $C$, $\Phi_C$  denotes the class of all ECQs $\phi$ with $H(\phi)\in C$.
\end{defn}

We emphasise that~$H(\phi)$ does \textbf{not} contain any hyperedges corresponding to the disequalities of~$\phi$; note that this makes positive results  in terms of $H(\phi)$ stronger, but it also makes these results harder to prove.\footnote{If we included hyperedges corresponding to disequalities, we could just model these disequalities as (binary) relations and reduce to the case of conjunctive queries without disequalities. However, adding those hyperedges can increase the treewidth of the hypergraph (see Definition~\ref{def:tw}) significantly, so it would weaken the results significantly.}

Our first result uses the treewidth of a hypergraph (Definition~\ref{def:tw},  
originally from~\cite{Robertson84:minors3}).  
The definitions of other width measures  
that are used throughout this work, such as fractional hypertreewidth and adaptive width, are deferred to the sections in which they are used.

\begin{defn} (tree decomposition, treewidth)\label{def:tw}
A \emph{tree decomposition} of a hypergraph $H$ is a pair 
$(T,\boldB)$ where $T$ is a (rooted) tree 
and $\boldB$ assigns a subset $B_t \subseteq V(H)$ 
(called a \emph{bag})
to each $t\in V(T)$. The following two conditions are satisfied: (i) for each $e \in
E(H)$ there is a vertex $t \in V(T)$ such that $e \subseteq B_t$, and (ii) for each
$v \in V(H)$ the set $\{ t \in V(T) \mid v \in B_t\}$ induces a (connected)
subtree of $T$.
The treewidth  $\tw{T,\boldB}$ of the tree decomposition $(T,\boldB)$ 
is $\max_{t\in V(T)} |B_t| -1$.
The \emph{treewidth} $\tw{H}$ of $H$ 
is the minimum of $\tw{T,\boldB}$, minimised
 over all tree decompositions $(T,\boldB)$ of~$H$.
\end{defn}

Our first theorem is as follows.

\newcommand{\statethmmainbounded}{\sl Let $t$ and $a$ be positive integers.
Let $C$ be a class of hypergraphs such that every member of $C$ has treewidth at most $t$ and arity at most $a$.  Then $\ECQ{\Phi_C}$ has an FPTRAS, running in time
$\exp(\bigO(||\varphi||^2)) \cdot \poly(\log(1/\delta),\varepsilon^{-1},||\calD||)$.
}

\begin{thm}\label{thm:mainbounded}
\statethmmainbounded
\end{thm}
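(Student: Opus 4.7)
My approach proceeds in two stages: first I eliminate negations by replacing each negated atom with a positive atom over an explicitly constructed complement relation, and then I handle the remaining disequalities by enumerating equality patterns on the variables and approximately counting each pattern's contribution using the CQ FPRAS of Arenas et al.~\cite[Thm.~3.2]{ArenasNew}.

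For the first stage, each negated atom $\neg R(y_1,\ldots,y_j)$ of $\varphi$ has arity $j\leq a$, so the complement relation $\bar R^{\calD}:=U(\calD)^j\setminus R^{\calD}$ has size at most $|U(\calD)|^a=\poly(\structsize{\calD})$ and can be materialised in polynomial time. Replacing each $\neg R(\cdot)$ by the positive atom $\bar R(\cdot)$ yields a DCQ $\varphi'$ with $H(\varphi')=H(\varphi)$, so the treewidth and arity bounds are preserved. It therefore suffices to give an FPTRAS for DCQs $\varphi'$ with $\tw{H(\varphi')}\leq t$ and arity at most $a$.

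For the second stage, let $D$ be the set of disequalities of $\varphi'$. Every solution $\alpha$ of the underlying CQ induces a partition $\pi_\alpha$ of $\vars{\varphi}$ (by the fibres of $\alpha$), and $\alpha$ satisfies $\varphi'$ iff $\pi_\alpha$ is $D$-consistent, meaning no disequality places two of its variables in the same block. I would enumerate all $D$-consistent partitions $\pi$ of $\vars{\varphi}$; their number is at most the Bell number $B_{|\vars{\varphi}|}\leq\exp(O(\structsize{\varphi}^2))$, which accounts for the $\exp(O(\structsize{\varphi}^2))$ factor in the claimed running time. The decomposition of $\Ans{\varphi,\calD}$ by the restriction $\pi|_{\free{\varphi}}$ (which is determined by the answer itself) is disjoint, so the total count is a positive sum of per-pattern contributions and no signed inclusion-exclusion is needed. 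For each fixed $\pi$, rather than quotienting $H(\varphi)$ (which could raise the treewidth), I would carry the equalities imposed by $\pi$ as side constraints inside a tree decomposition of $H(\varphi)$ of width $\leq t$; bounded treewidth plus bounded arity yields bounded hypertreewidth (Observation~\ref{obs:tw_vs_aw}), so the CQ-answer FPRAS of Arenas et al.\ applies. Running it with confidence parameter $\delta/B_{|\vars{\varphi}|}$ and union-bounding over all patterns yields the desired $(\varepsilon,\delta)$-approximation in the target running time.

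The main obstacle is to realise the black-box invocation of the CQ FPRAS on each pattern-augmented instance while retaining the treewidth bound $t$; this is presumably packaged by the paper's \emph{interface} lemma (Lemma~\ref{lem:interface}), which abstracts the reduction from counting ECQ answers to approximately counting answers of an auxiliary family of bounded-treewidth CQs. A second delicate point is to verify that the grouping by $\pi|_{\free{\varphi}}$ genuinely covers $\Ans{\varphi,\calD}$ exactly once: different $D$-consistent partitions of $\vars{\varphi}$ that share the same restriction to $\free{\varphi}$ produce overlapping sets of candidate answers, so a careful canonicalisation (for instance a Karp-Luby-style union estimator, or enforcing ``exact pattern $\pi$'' on the existential variables as additional constraints) is needed to avoid double counting.
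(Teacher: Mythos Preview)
Your first stage (materialising complements to eliminate negations) is correct and matches what the paper does internally when building $\calB(\phi,\calD)$; the bounded-arity assumption makes this polynomial in $\structsize{\calD}$, exactly as you say.

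The second stage, however, has a genuine gap, and your guess about the role of Lemma~\ref{lem:interface} is off the mark. Two concrete problems:

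\textbf{(1) Equality patterns do not reduce DCQ answer-counting to CQ answer-counting in a treewidth-preserving way.} You correctly observe that quotienting $H(\varphi)$ by a partition $\pi$ can blow up the treewidth. But your alternative --- ``carry the equalities imposed by $\pi$ as side constraints inside a tree decomposition of $H(\varphi)$'' --- is not something the Arenas et al.\ FPRAS accepts as input. That algorithm takes a CQ; to feed it the equalities you must add binary atoms $\mathrm{Eq}(x_i,x_j)$, and those atoms add edges $\{x_i,x_j\}$ to $H(\varphi)$, destroying the treewidth bound just as surely as quotienting does (a path can become a clique). There is no black-box way to force long-range equalities across a width-$t$ decomposition without paying for them in the width.

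\textbf{(2) The decomposition by $\pi|_{\free{\varphi}}$ does not eliminate disequalities on existential variables.} Grouping answers by the partition they induce on the \emph{free} variables is indeed a disjoint decomposition of $\Ans{\varphi',\calD}$, but computing each piece still asks: how many $\tau$ with pattern $\rho$ extend to some $\alpha$ satisfying \emph{all} disequalities, including those among existential variables? That is again a DCQ count, not a CQ count. Going instead to full partitions $\pi$ of all variables forces either ``exact pattern'' constraints (which need signed M\"obius inversion --- fatal for approximation, as you note) or a Karp--Luby union, whose summands are once more per-$\pi$ counts that you cannot approximate without already solving the problem.

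The paper's route is entirely different: Lemma~\ref{lem:interface} does \emph{not} package a reduction to CQ counting. It reduces approximate counting of $|\Ans{\varphi,\calD}|$ to the homomorphism \emph{decision} problem $\decoracle$, via the Dell--Lapinskas--Meeks oracle framework (Theorem~\ref{thm:colreduction}) for approximately counting hyperedges, combined with random colour-coding to handle each disequality (two unary ``colour'' relations per disequality pair). The oracle queries $(\widehat\calA,\widehat\calB)$ only add \emph{unary} relations to $\calA(\varphi)$, which preserves treewidth trivially, and the decision oracle is then discharged by the polynomial-time algorithm of Dalmau--Kolaitis--Vardi for bounded treewidth and arity (Theorem~\ref{thm:Wednesday}). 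The Arenas et al.\ FPRAS is not used at all in the proof of Theorem~\ref{thm:mainbounded}.
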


\paragraph{Technical Challenges}
While, in the case of bounded arity, negated predicates can be simulated by adding a negated relation (symbol) $\overline{R}$ for each relation (symbol) $R$, the disequalities have to be treated separately since we do not include them in the query hypergraph (see the discussion below Definition~\ref{def:Hphi}).

However, the main difficulty stems from the fact that our queries have both quantified and free variables; recall e.g.\ the query in~\eqref{eq:friends_query}. We note that the restricted case in which there are no quantified variables can be dealt with in a much easier way by using standard and well-established reductions from approximate counting to decision. For example, in the special case of arity $2$, and with all disequalities present, approximate counting of answers to queries without quantified variables can be encoded as approximate counting subgraphs, which can be done efficiently for bounded treewidth graphs using colour-coding~\cite{ArvindR02:isaac,Alonetal08}.

If quantified variables are allowed, the situation changes drastically: While it does not matter for the decision version which variables are quantified and which are not, even quantifying a single variable can transform the counting version from easy to infeasible.\footnote{Consider for example the following query in the signature of graphs
\[ \varphi(x_1,\dots,x_k) = \exists y \bigwedge_{i=1}^k E(y,x_i) \,.\] Deciding whether $\varphi$ has an answer is computationally trivial, since it is equivalent to deciding whether there are $k$ (not necessarily distinct) vertices that have a common neighbour. In other words, we can always return ``Yes'' if the graph has at least one edge. However, (exactly) counting answers to $\varphi$ cannot be done in time $O(|V(G)|^{k-\varepsilon})$ unless the Strong Exponential Time Hypothesis fails~\cite{DellRW19}, ruling out any improvement over the brute-force algorithm for the counting version. In the case of approximate counting, the result of Arenas et al.\ \cite{ArenasNew} yields an FPRAS for counting answers to $\varphi$. Also, our result,   Theorem~\ref{thm:mainbounded}, yields an FPTRAS even if we additionally add the constraint that the $x_i$s should be pairwise different. Note that  our result relaxes the  notion of feasibility from an FPRAS to an FPTRAS since  it turns out that the former is not always possible if disequalities are allowed (see Observation~\ref{obs:LBnoFPRAS}).

Finally, we point out that even exact counting becomes easy if we make $y$ a free variable, that is, if we modify the query as follows: $\varphi'(x_1,\dots,x_k,y) = \bigwedge_{i=1}^k E(y,x_i)$. The reason for  this easiness is that counting answers to $\varphi'$ in $G$ is equivalent to counting homomorphisms from $H(\varphi')$ to $G$, which can be done efficiently as $H(\varphi')$ has treewidth~$1$~\cite{DalmauJ04}. } The core technical difficulty arising in both the recent result of Arenas et al.\ \cite{ArenasNew} and also in  our work is the problem of
handling quantified variables in the context of approximate counting. While Arenas et al.\ were able to establish an FPRAS for the case of conjunctive queries of bounded (hyper)treewidth, we show that an FPRAS is not possible if disequalities are allowed (see Observation~\ref{obs:LBnoFPRAS}). For this reason, we relax the condition of feasibility of approximation by aiming for an FP\textbf{T}RAS. 

In this work, we solve the problem of handling quantified variables by relying on the recent $k$-Hypergraph framework of Dell, Lapinskas and Meeks~\cite{DellLM20}, which, in combination with colour-coding, will ultimately establish Theorem~\ref{thm:mainbounded}. We note that the presentation of our methods could be streamlined if we would only consider the bounded arity case. However, since understanding the unbounded arity case is also part of our goal, we present the framework in slightly more generality than we need for the bounded arity case (see Lemma~\ref{lem:interface}).

Before we continue with the presentation of further results, we will present  a useful application of Theorem~\ref{thm:mainbounded}.

\paragraph{Application to Locally Injective Homomorphisms}
Given graphs $G$ and $G'$, a homomorphism from $G$ to $G'$ is a mapping from the vertices of $G$ to the vertices of $G'$ that  maps the edges of $G$ to edges of~$G'$. A homomorphism $h$ is \emph{locally injective} if for each vertex $v$ of $G$, the restriction of $h$ to the neighbourhood $N_{G}(v)$ of~$v$
in~$G$ is injective.
Locally injective homomorphisms have been studied extensively, see~\cite{Fiala08:CSR} for an overview, and they can be applied, for instance, to model interference-free assignments of frequencies (of networks such as wireless networks)~\cite{Fiala01:DAM}. Some recent works on locally injective homomorphisms include~\cite{Rzazewski14:IPL} and~\cite{Bard18:DMTCS}. The complexity of exactly counting locally injective homomorphisms from a fixed graph $G$ to an input graph $G'$ has  been considered and is fully classified~\cite{Roth17:ESA}.

Given $G$ and $G'$, it is easy to construct
an ECQ $\phi(G)$ and a database~$\calD(G')$ such that
locally injective homomorphisms from $G$ to $G'$ are in one-to-one correspondence 
with $\ans{\phi(G),\calD(G')}$.
For this, the signature of $\phi(G)$ and $\calD(G')$ 
has one binary relation~$E$ (representing the edge set of a graph). $\calD(G')$ is the structure representing~$G'$ ---   its universe $U(\calD)$ is $V(G')$ and its relation $E^{\calD(G')}$ is the edge set of~$G'$.
The query~$\phi(G)$ is constructed as follows.
For convenience, let~$k=|V(G)|$ and assume that $V(G)=[k]$.
Let $\cn{G}$ be the set of pairs $i\neq j$ of vertices of $G$ such that $i$ and $j$ have a common neighbour.
Then $\phi(G)$ is the following query~$\phi$ (which has no existential variables).

\[\phi(x_1,\ldots,x_k) = \bigwedge_{\{i,j\}\in E(G)} E(x_i,x_j)~ \wedge ~ \bigwedge_{(i,j) \in \cn{G}} x_i \neq x_j. \]  

Consequently,  Theorem~\ref{thm:mainbounded} also gives an FPTRAS for counting locally injective homomorphisms from graphs $G$ with bounded treewidth. We define the problem of counting locally injective homomorphisms as follows:
Let $C$ and $C'$ be two classes of graphs.

\vbox{
		\begin{description}\setlength{\itemsep}{0pt}
			\setlength{\parskip}{0pt}
			\setlength{\parsep}{0pt}   			
			\item[\bf Name:] $\#\LIHom(C, C')$ 
			\item[\bf Input:]   Graphs $G\in C$ and $G'\in C'$.
			\item[\bf Parameter:] $|V(G)|$. 
			\item[\bf Output:] The number of locally injective homomorphisms from $G$ to $G'$.
		\end{description}
	}
\begin{cor}\label{cor:locallyinj}
	Let $t$ be a positive integer. Let $C_t$ be the class of all graphs with treewidth at most $t$, and let $C$ be the class of all graphs. Then $\#\LIHom(C_t, C)$ has an FPTRAS.
\end{cor}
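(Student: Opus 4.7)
The corollary follows essentially directly from Theorem~\ref{thm:mainbounded} combined with the construction $(G, G') \mapsto (\phi(G), \calD(G'))$ given immediately before the corollary statement. The plan has three steps: verify the bijection between locally injective homomorphisms from $G$ to $G'$ and elements of $\ans{\phi(G), \calD(G')}$; check that the hypergraph class $\{H(\phi(G)) : G \in C_t\}$ satisfies the arity and treewidth hypotheses of Theorem~\ref{thm:mainbounded}; and track sizes to confirm that the resulting running time is of FPTRAS form with parameter $|V(G)|$.

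For the bijection, observe that $\phi(G)$ has no existentially quantified variables, so $\free{\phi(G)} = \vars{\phi(G)}$ and every answer is itself a solution. By Definition~\ref{def:sol}, a solution is an assignment $\tau \colon V(G) \to V(G')$ satisfying $(\tau(i), \tau(j)) \in E^{\calD(G')}$ for every $\{i,j\} \in E(G)$ (i.e., $\tau$ is a homomorphism from $G$ to $G'$) together with $\tau(i) \neq \tau(j)$ for every $(i,j) \in \cn{G}$ (i.e., $\tau$ is injective on each neighbourhood $N_G(v)$). This is precisely local injectivity. For the hypergraph properties, Definition~\ref{def:Hphi} tells us that $H(\phi(G))$ has a hyperedge per predicate and none per disequality; since the predicates are exactly the binary atoms $E(x_i, x_j)$ with $\{i,j\} \in E(G)$, the hypergraph $H(\phi(G))$ coincides with $G$ viewed as a $2$-uniform hypergraph. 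Hence its arity is at most $2$ and its treewidth equals $\tw{G} \leq t$. Setting $C^\star = \{H(\phi(G)) : G \in C_t\}$, Theorem~\ref{thm:mainbounded} applies with treewidth bound $t$ and arity bound $2$, yielding an FPTRAS for $\ECQ{\Phi_{C^\star}}$.

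Finally, the reduction $(G, G') \mapsto (\phi(G), \calD(G'))$ is computable in polynomial time, with $\structsize{\phi(G)} = \bigO(|V(G)|^2)$ (accounting for the edges and the at most $\bigO(|V(G)|^2)$ disequalities) and $\structsize{\calD(G')} = \poly(|V(G')|)$. The running-time bound from Theorem~\ref{thm:mainbounded}, namely $\exp(\bigO(\structsize{\phi}^2)) \cdot \poly(\log(1/\delta), \eps^{-1}, \structsize{\calD})$, therefore yields $\exp(\bigO(|V(G)|^4)) \cdot \poly(\log(1/\delta), \eps^{-1}, |V(G')|)$, which is of FPTRAS form with parameter $|V(G)|$. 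No step presents a genuine obstacle; the only minor bookkeeping issue is that isolated vertices of $G$ would violate the convention that each variable of $\phi(G)$ appears in some atom, and this is handled either by stripping such vertices and multiplying the final count by $|V(G')|$ per isolated vertex, or by introducing a dummy unary relation satisfied by every element of $U(\calD(G'))$; neither adjustment affects the treewidth, arity, or size bounds.
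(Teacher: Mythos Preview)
Your proposal is correct and follows exactly the approach the paper intends: the paper does not give a formal proof of the corollary but simply states that it follows from the construction $(G,G')\mapsto(\phi(G),\calD(G'))$ together with Theorem~\ref{thm:mainbounded}, and you have fleshed out precisely that argument. Your attention to the isolated-vertex issue (which the paper's convention on variables would otherwise trip over) is a nice bit of care that the paper itself glosses over.
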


\paragraph{Matching Lower Bound}
 Theorem \ref{thm:mainbounded} is optimal under standard assumptions ---  subject to the randomised Exponential Time Hypothesis (rETH) there is a matching lower bound showing that there is no FPTRAS 
 for $\ECQ{\Phi_C}$ 
 if the treewidth of~$C$ is unbounded. 
 In fact, there is no FPTRAS for $\CQ{\Phi_C}$ in this case.
 In order to explain the lower bound, we first state the rETH.

\begin{conj}[rETH,~\cite{ImpagliazzoP01}]
There is a positive constant $c$ such that no algorithm, deterministic or randomised, can decide the satisfiability of an $n$-variable
\ThreeSAT\,instance in time $\exp(c\cdot n)$ (with failure probability at most $1/4$).
\end{conj}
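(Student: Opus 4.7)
The ``final statement'' is the randomised Exponential Time Hypothesis (rETH), which the paper states as a conjecture attributed to Impagliazzo and Paturi~\cite{ImpagliazzoP01} rather than as a theorem with a proof to follow. There is therefore no real proof to sketch: rETH is an unproven hardness assumption, stated here purely so that later sections can invoke it to derive conditional lower bounds (such as the matching lower bound for $\ECQ{\Phi_C}$ when the treewidth of $C$ is unbounded). Consequently, any honest ``plan'' must begin by acknowledging that proving rETH is a notorious open problem in computational complexity, and the excerpt is not claiming otherwise.

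If one nevertheless imagines attacking the statement directly, the first step would be to reduce to a concrete circuit or proof-system lower bound: a proof of rETH immediately entails $\mathrm{P}\neq\mathrm{NP}$, since an exponential-time lower bound on \ThreeSAT\ is strictly stronger than the $\mathrm{P}$ versus $\mathrm{NP}$ separation, and because rETH also forbids randomised subexponential algorithms it further separates $\mathrm{BPP}$ (and indeed any class admitting bounded-error subexponential-time randomised simulation) from $\mathrm{NP}$. The plan would then have to overcome the classical barriers to such separations---relativisation, natural proofs, and algebrisation---all of which are known to rule out large families of proof techniques for explicit lower bounds against problems in $\mathrm{NP}$. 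This is precisely the hard part, and no current approach is known to get past it.

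A more modest plan is to bolster the conjecture with evidence rather than prove it: observe that the fastest known deterministic and randomised algorithms for \ThreeSAT\ still run in time $c^n$ for some constant $c>1$; that the sparsification lemma of Impagliazzo, Paturi and Zane shows that the conjecture is robust to restriction to sparse instances; and that rETH (together with its deterministic counterpart ETH) has survived decades of algorithmic scrutiny and is consistent with all known lower bound techniques. In summary, the main obstacle here is not technical but fundamental: the statement is a working hypothesis, and a genuine proof would constitute one of the largest breakthroughs in the history of complexity theory. The role of the statement in the paper is simply to fix a standard assumption under which the subsequent unconditional-looking lower bounds (e.g.\ Observation~\ref{obs:LBmain} and Observation~\ref{obs:LBsubmod}) are to be read.
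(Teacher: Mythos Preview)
Your assessment is correct and matches the paper exactly: the statement is presented as a conjecture (the randomised Exponential Time Hypothesis), not as a theorem, and the paper offers no proof whatsoever---it simply cites Impagliazzo and Paturi and then uses rETH as a standing hypothesis under which the lower bounds in Observations~\ref{obs:LBmain} and~\ref{obs:LBsubmod} are derived. There is nothing further to compare.
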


The lower bound relies on a result by Marx~\cite[Theorem 1.3]{Marx10:ToC}. We state it here using our notation --- in addition we allow randomised algorithms and therefore replace ETH by rETH.\footnote{In case it is not clear to the reader how Theorem~\ref{thm:canyoubeatTW} matches~\cite[Theorem 1.3]{Marx10:ToC}, we refer to the introduction of~\cite{Marx13:jacm} where this result is stated as Theorem 1.2.}

\begin{thm}[{\cite[Theorem 1.3]{Marx10:ToC}}]\label{thm:canyoubeatTW}
    Let $a$ be positive integer. Let $C$ be a recursively enumerable class of hypergraphs such that every member of $C$ has arity at most $a$. Suppose that the treewidth of hypergraphs in $C$ is unbounded. If there is a computable function $f$ and a randomised algorithm that, given a CQ $\phi\in \Phi_C$ and a database $\calD$ with $\sig(\phi) \subseteq \sig(\calD)$, decides in time $f(H(\phi))\cdot(\structsize{\phi} + \structsize{\calD})^{\littleo(\tw{H(\phi)/\log \tw{H(\phi)}})}$ whether $(\phi, \calD)$ has an answer, then rETH fails.
\end{thm}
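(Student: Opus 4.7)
This is a cited hardness result due to Marx, so it is not re-proved here; my sketch follows the reduction strategy underlying Marx's proof. The plan is to establish the contrapositive by reducing $\ThreeSAT$ to the query-answer-existence problem, turning a hypothetical algorithm of the stated running time into a subexponential algorithm for $\ThreeSAT$ and thereby contradicting rETH. Given a $\ThreeSAT$ instance $F$ with $n$ variables and $\Theta(n)$ clauses, I would first exploit the recursive enumerability of $C$ together with the unboundedness of treewidth in $C$ to locate, in time depending only on $n$, a hypergraph $H \in C$ of treewidth $k = k(n) \to \infty$. I would then construct a CQ $\phi \in \Phi_C$ with $H(\phi) = H$ together with a database $\calD$ such that $(\phi, \calD)$ has an answer if and only if $F$ is satisfiable. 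Running the hypothesised algorithm on $(\phi, \calD)$, with $k(n)$ tuned appropriately, would decide $F$ in time $\exp(\littleo(n))$, contradicting rETH.

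The heart of the argument, and the main obstacle, is the construction of $\phi$ and $\calD$: the high treewidth of $H$ must certify enough internal capacity to encode an arbitrary $n$-variable $\ThreeSAT$ instance. The standard tool is Marx's grid-minor-style embedding lemma, which asserts that a bounded-arity hypergraph of treewidth $k$ admits an embedding of any bounded-degree graph of size roughly $k / \log k$: the vertices of the embedded graph are mapped to pairwise-disjoint connected vertex sets of $H$, and each of its edges is covered by some hyperedge of $H$. Applying this to the incidence graph of $F$, each $\ThreeSAT$ variable is encoded by a block of query variables whose joint assignments in $\calD$ range over partial truth assignments; each clause is enforced by facts on the corresponding hyperedges; and consistency across blocks is enforced using atoms placed on the hyperedges that cover the edges of the embedded graph. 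Constants required by this encoding can be simulated by the singleton-unary-relation trick from the introduction.

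The technical difficulty lies almost entirely in the embedding lemma: the logarithmic gap between treewidth and the size of a guaranteed grid-like substructure (ultimately traceable to the excluded grid minor theorem) is exactly what produces the $\tw{H(\phi)} / \log \tw{H(\phi)}$ factor in the exponent of the lower bound. Once the embedding is in place, the encoding itself is routine, and feeding the resulting instance to the hypothesised algorithm immediately yields an $\exp(\littleo(n))$-time decision procedure for $\ThreeSAT$, violating rETH.
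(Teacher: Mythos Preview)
You are correct that the paper does not prove this theorem: it is imported verbatim from Marx~\cite[Theorem 1.3]{Marx10:ToC} (with the cosmetic adjustment of replacing ETH by rETH to accommodate randomised algorithms), and the paper only uses it as a black box to derive Observation~\ref{obs:LBmain}. Your high-level sketch of Marx's argument is faithful in spirit, with one technical point worth flagging: in Marx's embedding notion the connected vertex sets assigned to the vertices of the guest graph are \emph{not} required to be pairwise disjoint; the controlling parameter is the \emph{depth} of the embedding (the maximum number of such sets covering any single vertex of~$H$), and it is this depth that governs the blow-up of~$\structsize{\calD}$ in the reduction. It is the depth bound in Marx's embedding theorem, rather than a grid-minor statement as such, that produces the $\tw{H(\phi)}/\log\tw{H(\phi)}$ exponent.
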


In order to apply Theorem~\ref{thm:canyoubeatTW}, note
that an FPTRAS for $\CQ{\Phi_C}$ provides a $(1/2, 1/4)$-approximation
for $\CQ{\Phi_C}$
that runs in time
$f(\structsize{\phi})\cdot \poly(\structsize{\phi}+\structsize{\calD})$.
It follows from the definition
of $\structsize{\phi}$ that $\structsize{\phi}$ is a function of
$H(\phi)$.
Thus, the FPTRAS provides 
a $(1/2, 1/4)$-approximation
for $\CQ{\Phi_C}$ that runs in time
$f(H(\phi))\cdot\poly(\structsize{\phi} + \structsize{\calD})$.
This approximation algorithm can be used to solve the decision problem of determining whether  
the output of $\CQ{\Phi_C}$ is nonzero.
Thus, by Theorem~\ref{thm:canyoubeatTW}, we obtain the following observation.

\begin{obs}\label{obs:LBmain}
Let $a$ be positive integer. Let $C$ be a recursively enumerable class of hypergraphs such that every member of $C$ has arity at most $a$.
If the treewidth of hypergraphs in~$C$ is unbounded then
  $\CQ{\Phi_C}$ does not have an FPTRAS, unless rETH fails.
\end{obs}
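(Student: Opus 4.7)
The plan is a standard reduction from approximate counting to decision, followed by invoking the hardness result of Theorem~\ref{thm:canyoubeatTW}. Assume for contradiction that $\CQ{\Phi_C}$ admits an FPTRAS running in time $f(\structsize{\phi}) \cdot \poly(\structsize{\calD}, 1/\varepsilon, \log(1/\delta))$. I would run it with parameters $\varepsilon = 1/2$ and $\delta = 1/4$ on input $(\phi,\calD)$, producing, in time $f(\structsize{\phi}) \cdot \poly(\structsize{\calD})$, a random variable $X$ that, with probability at least $3/4$, satisfies $|X - N| \le N/2$, where $N := |\ans{\phi,\calD}|$.

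From $X$ I would extract a decision procedure for non-emptiness of $\ans{\phi,\calD}$: declare the answer set empty iff $X < 1/2$. Indeed, if $N = 0$ then on the $3/4$-probability success event $X = 0$, while if $N \ge 1$ then on the same event $X \ge 1/2$. Hence the resulting randomised algorithm decides non-emptiness of $\ans{\phi,\calD}$ with failure probability at most $1/4$.

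To match the form required by Theorem~\ref{thm:canyoubeatTW}, I would use the observation in the paragraph preceding the statement that $\structsize{\phi}$ is a function of $H(\phi)$, and rewrite the time bound as $f'(H(\phi)) \cdot \poly(\structsize{\phi} + \structsize{\calD})$ for some computable $f'$. Since $C$ has unbounded treewidth, the exponent $\tw{H(\phi)}/\log \tw{H(\phi)}$ is unbounded on $C$, so any fixed polynomial in $\structsize{\phi} + \structsize{\calD}$ is of the form $(\structsize{\phi}+\structsize{\calD})^{\littleo(\tw{H(\phi)}/\log \tw{H(\phi)})}$. Combined with the hypotheses that $C$ is recursively enumerable and of bounded arity, this meets all assumptions of Theorem~\ref{thm:canyoubeatTW}, whose conclusion contradicts rETH.

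I do not anticipate a serious obstacle, since the argument is a direct instantiation of the sketch in the paragraph preceding the statement. The only point worth monitoring is that the multiplicative-error parameter $\varepsilon$ must be strictly less than one in order to distinguish zero from nonzero answer counts, but this is immediate since an FPTRAS operates for every $\varepsilon \in (0,1)$.
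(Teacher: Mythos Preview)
Your proposal is correct and follows exactly the approach the paper takes in the paragraph immediately preceding Observation~\ref{obs:LBmain}: run the putative FPTRAS with $\varepsilon=1/2$, $\delta=1/4$ to decide non-emptiness, absorb $f(\structsize{\phi})$ into a function of $H(\phi)$, and observe that a fixed polynomial is $(\structsize{\phi}+\structsize{\calD})^{\littleo(\tw{H(\phi)}/\log \tw{H(\phi)})}$ on a class of unbounded treewidth, contradicting Theorem~\ref{thm:canyoubeatTW} under rETH. Your write-up is slightly more explicit about the decision threshold and the little-$\littleo$ step, but there is no substantive difference.
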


Observation~\ref{obs:LBmain} shows that Theorem~\ref{thm:mainbounded} provides a tight result for hypergraph classes with bounded arity in the sense that, assuming rETH, there is an FPTRAS if and only if the treewidth is bounded.
Note that the stated lower bound is slightly stronger than required since it also applies to conjunctive queries without extensions.

Theorem~\ref{thm:mainbounded} 
shows that if $C$ is any class of hypergraphs such that 
every member of~$C$ has treewidth at most~$t$ and arity at most~$a$
then there is an $(\epsilon,\delta)$-approximation algorithm for $\ECQ{\Phi_C}$ that runs in time $f(\structsize{\phi})\cdot\poly(\log(1/\delta),\varepsilon^{-1},||\calD||))$, where $f$ is exponential in $\structsize{\phi}^2$. A natural question is whether the function $f$ can be improved. Specifically, a polynomial $f$ would imply the existence of an FPRAS for 
$\ECQ{\Phi_C}$.
However, the following observation, proved
by reduction from the Hamilton path problem, shows that there is no 
such FPRAS  unless $\NP=\RP$, even when $t=1$ and $a=2$.

\newcommand{\stateLBnoFPRAS}{
\sl 
Let $C$ be the class 
of all hypergraphs with  treewidth at most~$1$ 
and arity at most~$2$. 
Then there is no FPRAS for $\DCQ{\Phi_C}$, unless $\NP =\RP$.
}

\begin{obs}\label{obs:LBnoFPRAS}
\stateLBnoFPRAS 
\end{obs}
\begin{proof}
Given an $n$-vertex graph~$G$, we will show how to construct 
(in time $\poly(n)$)
an instance
$(\phi,\calD)$ 
of $\DCQ{\Phi_C}$ such 
that 
the answers in
$\Ans{\phi,\calD}$ are in one-to-one correspondence
with the   Hamiltonian paths of~$G$.
This implies (e.g., \cite[Theorem 1]{relative})
that there is no FPRAS   for $\DCQ{\Phi_C}$  unless $\NP=\RP$.

The construction is as follows. $U(\calD) = V(G)$.
The signature $\sig(\phi)=\sig(\calD)$ contains a single binary relation symbol~$E$.
The relation $E^{\calD}$ is the edge set~$E(G)$.
The query~$\phi$  
is defined as follows. 
\[
    \phi(x_1,\dots,x_n) = \bigwedge_{i \in [n-1]} E(x_i,x_{i+1}) ~ \wedge ~ \bigwedge_{1\leq i < j \leq n} x_i \neq x_j.
\]

Note that $\phi$ has no existential variables so 
the solutions of  $(\phi,\calD)$ are in one-to-one correspondence
with $\Ans{\phi,\calD}$.
It is clear from the definition of~$\phi$ that  these 
are also in one-to-one correspondence with the Hamilton paths of~$G$.

It remains to show that $\phi\in \Phi_C$, that is that $H(\phi)$ has 
treewidth~$1$ and arity~$2$. Both of these follow from the
fact that $H(\phi)$ is the path $x_1,\ldots,x_n$, which follows
from the definition of $H(\phi)$ (Definition~\ref{def:Hphi}). 
\end{proof}

Interestingly, the situation changes if we consider CQs without extensions.
Arenas, Croquevielle, Jayaram, and Riveros~\cite{ArenasNew} 
give an FPRAS for $\CQ{\Phi_C}$ 
for any bounded-treewidth class~$C$ of hypergraphs. Their result will be stated formally as Theorem~\ref{thm:Arenas} in Section~\ref{sec:fh} (where it will be improved). 
 Observation~\ref{obs:LBnoFPRAS} ensures that their result cannot be
 generalised to 
 cover extended conjunctive queries  (unless $\NP=\RP$). However, their result does apply to classes of hypergraphs with unbounded arity.

\paragraph{Beyond Bounded Arity}

Even though Observation~\ref{obs:LBmain} gives a tight lower bound for classes $C$ of hypergraphs with bounded arity, there is room for improvement if the arity in $C$ is unbounded.
In this setting, it is worth  considering other notions of 
hypergraph width, such as hypertreewidth, fractional hypertreewidth,  adaptive width and submodular width. These width measures will be defined in the sections where they are used. Here we just give the relationships between them, from \cite[Figure 2]{Marx13:jacm}.

\begin{defn}[weakly dominated, strongly dominated, weakly equivalent]
A width measure is a function from hypergraphs to $\mathbb{R}_{\geq 0}$.
Given two width measures $w_1$ and $w_2$, we say that $w_1$ is  \emph{weakly dominated} by $w_2$ 
if there is a function~$f$
such that every hypergraph~$H$
has $w_2(H) \leq f(w_1(H))$. 
We say that $w_1$ is \emph{strongly dominated} by $w_2$
if $w_1$ is weakly dominated by $w_2$ and
there is a class of hypergraphs 
that has unbounded $w_1$-width, but bounded $w_2$-width.
We say that $w_1$ and $w_2$ are \emph{weakly equivalent} if they weakly dominate each other.
\end{defn}

If $w_1$ is strongly dominated by $w_2$ then the class
of hypergraphs with bounded $w_1$-width is strictly contained in 
the class of hypergraphs with bounded $w_2$-width.
Thus, algorithmic results based on bounded $w_2$-width have
strictly greater applicability than algorithmic results based on bounded $w_1$-width. If $w_1$ and $w_2$ are weakly equivalent then algorithmic results for bounded $w_1$-width and bounded $w_2$-width are equivalent.

\begin{lem}[\cite{Marx13:jacm}]\label{lem:dominate}\label{lem:dom}
Treewidth is strongly dominated by hypertreewidth.
Hypertreewidth is strongly dominated by fractional hypertreewidth. Fractional hypertreewidth is strongly dominated by adaptive width, which is weakly equivalent to submodular width.
\end{lem}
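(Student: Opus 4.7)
My plan is to establish the lemma in four steps: three strict dominations and the weak equivalence at the end. Since the lemma is attributed to \cite{Marx13:jacm}, the ``proof'' really amounts to invoking the appropriate numerical inequalities (for the weak-domination directions) and the separating hypergraph families (for the strictness) from that paper. I sketch the structure below.

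First, for treewidth strictly dominated by hypertreewidth, the weak-domination direction is the standard inequality $\hw{H}\le\tw{H}+1$, which holds because in any tree decomposition of width $t$ each bag has at most $t+1$ vertices and can be covered by picking one hyperedge per vertex (using that no vertex is isolated, which we may assume). The strict separation is witnessed by the family of single-hyperedge hypergraphs on $n$ vertices, for which $\tw{H}=n-1$ while $\hw{H}=1$.

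Second and third, for hypertreewidth strictly dominated by fractional hypertreewidth, and the latter strictly dominated by adaptive width, the weak-domination directions are the easy inequalities $\fhw{H}\le\hw{H}$ (every integral edge cover is a fractional one) and $\aw{H}\le\fhw{H}$ (any optimal fractional hypertree decomposition furnishes, in particular, a small fractional cover for every vertex weighting used in the adaptive-width extremum). The strict separations each require an explicit hypergraph construction from \cite{Marx13:jacm}: one family (originally due to Grohe and Marx) separates $\hw{\cdot}$ from $\fhw{\cdot}$ by exhibiting hypergraphs on which every relevant separator admits an $O(1)$-size fractional edge cover while its integer cover number grows without bound, and a second family from Marx separates $\fhw{\cdot}$ from $\aw{\cdot}$. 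I would invoke these constructions rather than reconstruct them.

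Fourth, for the weak equivalence of adaptive width and submodular width, the inequality $\mathrm{subw}(H)\le\aw{H}$ is immediate, because the modular weightings used in the adaptive-width extremum form a subclass of the submodular functions used in the submodular-width extremum. The reverse inequality $\aw{H}\le f(\mathrm{subw}(H))$ for some computable $f$ is the main technical theorem of \cite{Marx13:jacm}, proved there by a subtle rounding argument that converts a submodular witness into a modular one via LP duality. This last step is the principal obstacle of the entire lemma and is precisely why the statement is attributed to Marx; I would not attempt to reprove it.
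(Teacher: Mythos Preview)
The paper does not prove this lemma at all: it is stated with the citation \cite{Marx13:jacm} and the surrounding text explicitly says ``Here we just give the relationships between them, from \cite[Figure 2]{Marx13:jacm}.'' So there is nothing to compare your argument to; the paper simply imports the result.

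That said, your sketch contains a genuine slip in the last step. You write that $\mathrm{subw}(H)\le\aw{H}$ is the immediate direction ``because the modular weightings used in the adaptive-width extremum form a subclass of the submodular functions.'' The containment of function classes is right, but the inequality it yields goes the other way: both widths are defined as a \emph{supremum} over the respective class of weight functions, and a supremum over a larger class (submodular) is at least the supremum over the smaller one (modular from fractional independent sets). Hence the easy direction is $\aw{H}\le\mathrm{subw}(H)$, and the deep theorem of Marx is the reverse bound $\mathrm{subw}(H)\le f(\aw{H})$ for some computable~$f$. Your description of which direction is trivial and which is the hard rounding/duality argument is therefore exactly backwards. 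The first three steps of your outline are fine.
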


Note that all of the width measures 
from Lemma~\ref{lem:dom} are weakly equivalent if we assume an overall bound on the arity of hypergraphs; we make this formal in Observation~\ref{obs:tw_vs_aw}.

When restricting the queries to DCQs instead of ECQs, we 
can improve Theorem~\ref{thm:mainbounded} by allowing unbounded arity, extending it to adaptive width. The following Theorem is proved in Section~\ref{sec:sw}.

\newcommand{\stateDCQsubmod}{\sl Let $b$ be a positive integer.
Let $C$ be a class of hypergraphs such that every member of $C$ has
adaptive width at most $b$. Then $\DCQ{\Phi_C}$ has an FPTRAS.}

 \begin{thm}\label{thm:DCQsubmod} \label{thm:submod}
 \stateDCQsubmod
\end{thm}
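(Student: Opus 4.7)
The plan is to combine three ingredients: an inclusion--exclusion that removes the disequalities and turns the DCQ into a collection of ordinary CQs obtained by identifying pairs of variables; Marx's decomposition theorem for bounded-adaptive-width CQs, which splits each such instance into a disjoint union of CQ sub-instances of bounded fractional hypertreewidth; and Theorem~\ref{thm:fractionalFPRAS}, the FPRAS for bounded-fhw CQ answer counting, applied to each sub-instance.

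First, I would eliminate the disequalities. Write $D$ for the set of disequality atoms in $\phi$; note $|D|\le \structsize{\phi}^2$. For each $S\subseteq D$, let $\phi_S$ be the CQ obtained from $\phi$ by deleting all disequalities and identifying the two variables of every disequality in $S$. Standard inclusion--exclusion gives $|\Sol{\phi,\calD}|=\sum_{S\subseteq D}(-1)^{|S|}|\Sol{\phi_S,\calD}|$. To transfer this identity to the projected answer count, I would convert answer counting into solution counting on a slightly enlarged instance by recording the projection onto $\free{\phi}$ via a fresh relation and restricting the count so that each projection is realised exactly once; the same inclusion--exclusion then applies to this modified CQ. The crucial structural point is that variable identification is vertex contraction on $H(\phi)$ and does not increase the adaptive width, so every $\phi_S$ still satisfies $\aw{H(\phi_S)}\le b$.

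Next, I would invoke Marx's decomposition theorem. Because adaptive width is weakly equivalent to submodular width by Lemma~\ref{lem:dominate}, there is a function $g$ such that every $(\phi_S,\calD)$ can be decomposed, in time $f(\structsize{\phi})\cdot\poly(\structsize{\calD})$, into a disjoint union $\Sol{\phi_S,\calD}=\bigsqcup_{j}\Sol{\phi_S^{(j)},\calD^{(j)}}$ in which each $\phi_S^{(j)}$ has $\fhw{H(\phi_S^{(j)})}\le g(b)$. I would then invoke Theorem~\ref{thm:fractionalFPRAS} on each piece with parameters $(\eps',\delta')$ chosen so that a union bound and a median-of-means aggregation over the $2^{|D|}$ signed contributions yield the desired $(\eps,\delta)$-approximation of $|\Ans{\phi,\calD}|$.

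The principal obstacle I expect is controlling the signed inclusion--exclusion sum: if the intermediate terms $|\Sol{\phi_S,\calD}|$ are much larger than the target $|\Ans{\phi,\calD}|$, then independently approximating each term and summing can amplify the variance beyond what an FPRAS can tolerate. Addressing this likely requires a Karp--Luby-style reformulation of the disequality constraints as a union-of-events estimation problem, in which the subroutine only estimates nonnegative quantities (the $|\Sol{\phi_{\{k\}},\calD}|$ for each single disequality $k$, together with $|\Sol{\phi^{\mathrm{CQ}},\calD}|$) and samples uniformly from each such set by exploiting the bounded-fhw sub-instances produced by Marx's decomposition. A secondary obstacle is carrying the projection onto $\free{\phi}$ faithfully through both the decomposition and the inclusion--exclusion; I anticipate this is handled by encoding the free variables as marked output positions that persist through all the structural manipulations.
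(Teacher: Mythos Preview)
Your proposal has a fatal structural gap. You claim that ``variable identification is vertex contraction on $H(\phi)$ and does not increase the adaptive width''. This is false. By Definition~\ref{def:Hphi}, the disequalities of a DCQ do \emph{not} contribute hyperedges to $H(\phi)$, so the two variables of a disequality need not be adjacent in~$H(\phi)$. Identifying two non-adjacent vertices is not a minor operation and can increase treewidth and adaptive width unboundedly. Concretely, in the Hamiltonian-path query from Observation~\ref{obs:LBnoFPRAS} the hypergraph $H(\phi)$ is just the path $x_1,\ldots,x_n$ (treewidth~$1$), while the disequalities are between \emph{all} pairs; by choosing appropriate subsets $S$ of disequalities and identifying their endpoints you can manufacture graphs of arbitrarily large treewidth from this single path. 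So the $\phi_S$ in your inclusion--exclusion do \emph{not} stay inside any bounded-adaptive-width class, and you cannot invoke Marx's decomposition on them.

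There are two further problems even if the width claim were repaired. First, inclusion--exclusion over disequalities computes $|\Sol{\cdot}|$, not $|\Ans{\cdot}|$; projection does not commute with the signed sum, and your sketch for encoding the projection as a fresh relation does not explain how the inclusion--exclusion identity survives. Second, as you yourself note, approximating a signed sum term-by-term is hopeless when the target can be exponentially smaller than the summands, and Karp--Luby handles nonnegative unions, not alternating sums; no reformulation along these lines is apparent. The paper avoids all of this by a completely different route: it uses the Dell--Lapinskas--Meeks hypergraph framework (Theorem~\ref{thm:colreduction}) to reduce approximate counting of answers to the \emph{decision} problem $\decoracle$, and it eliminates the disequalities by colour-coding (Lemma~\ref{lem:oracleconstruction}), which adds only unary relations to $\calA(\phi)$ and therefore preserves bounded adaptive width (Lemma~\ref{lem:unarysubmod}). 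The decision oracle is then implemented by Marx's FPT algorithm for $\decoracle$ on bounded-adaptive-width structures (Theorem~\ref{thm:marxsubmodalgo}).
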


\paragraph{Further Technical Challenges} 
In addition to the challenges that arose in the bounded arity case, a further issue that arises in the unbounded arity case is finding the right notion of treewidth on hypergraphs; recall that the notions of treewidth, hypertreewidth, fractional hypertreewidth and adaptive width are all equivalent in the bounded arity setting, but not in the unbounded arity setting. 

We ultimately ended up with adaptive width as the most general width measure for which we can establish the existence of an FPTRAS. Despite the fact that approximate counting is often harder than decision~\cite{relative,BulatovZ20}, we find that in the current setting, the criterion for tractability is the same for decision and approximate counting. In fact, it turns out that Theorem~\ref{thm:DCQsubmod}, i.e., the choice of adaptive width, is optimal, unless the rETH fails. This matching lower bound comes from another result of Marx~\cite[Theorem 7.1]{Marx13:jacm}, which we express using our notation; for what follows, we write $\aw{H}$ for the adaptive width of a hypergraph $H$:

 \begin{thm}[{\cite[Theorem 7.1]{Marx13:jacm}}]\label{thm:canyoubeatSW}
    Let $C$ be a recursively enumerable class of hypergraphs with unbounded   adaptive width. If there is a computable function $f$ and a randomised algorithm that, given a CQ $\phi\in \Phi_C$ and a database $\calD$ with $\sig(\phi) \subseteq \sig(\calD)$, decides in time $f(H(\phi))\cdot(\structsize{\phi} + \structsize{\calD})^{\littleo(\aw{H(\phi)}^{1/4})}$ whether $(\phi, \calD)$ has an answer, then rETH fails.
\end{thm}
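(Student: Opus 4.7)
The plan is to follow Marx's original programme. Since Theorem~\ref{thm:canyoubeatSW} is a cited result, I would treat it as a reduction from \ThreeSAT\ to \#\textsc{CQ} evaluation (really just its decision version) that exploits classes of unbounded adaptive width, and I would cash in on the equivalences collected in Lemma~\ref{lem:dom} wherever convenient.

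First, by Lemma~\ref{lem:dom} adaptive width and submodular width are weakly equivalent, so the hypothesis that $C$ has unbounded adaptive width is equivalent to $C$ having unbounded submodular width (after a change of the function witnessing ``unbounded''). I would phrase the entire argument in terms of submodular width, which is the parameter on which the combinatorial machinery is calibrated. Since $C$ is recursively enumerable, for every $n$ I can effectively enumerate $C$ until I find a hypergraph $H_n \in C$ whose submodular width is at least $c n^4$ for a fixed constant $c$; this step is standard and only adds a $f(n)$ overhead that gets absorbed into the $f(H(\phi))$ factor.

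The reduction. Given a \ThreeSAT\ instance $\Psi$ on $n$ variables and $m$ clauses, pick the hypergraph $H_n \in C$ as above. I would build a CQ $\phi$ whose hypergraph is exactly $H_n$ (so $\phi \in \Phi_C$), together with a database $\calD$ whose universe consists of partial assignments to blocks of variables of $\Psi$ and whose relations enforce (i) consistency of partial assignments across incident hyperedges, and (ii) the satisfaction of each clause. The correspondence ``$\Psi$ is satisfiable $\iff$ $(\phi,\calD)$ has an answer'' is then routine once the right encoding is set up. Sizes satisfy $\structsize{\phi}+\structsize{\calD} \le \poly(n,m) \cdot 2^{O(n/k)}$, where $k$ is the size of each variable block.

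The hard part — and essentially the entirety of Marx's paper — is justifying that such an encoding exists at all inside a hypergraph $H_n$ whose only promised property is large submodular width. Concretely, one needs the \emph{submodular embedding theorem}: a hypergraph $H$ with $\mathrm{sw}(H) \ge \Omega(k^4)$ admits an edge-respecting embedding of the incidence graph of some 3-uniform constraint structure on $k$ variables, in the sense that one can partition (a subset of) $V(H)$ into connected ``variable gadgets'', one per \ThreeSAT\ variable, so that every clause of $\Psi$ is witnessed by a single hyperedge of $H$ meeting the three relevant gadgets. Proving this is delicate: one has to combine fractional edge-cover duality, the balanced-separator characterisation of large submodular width, and a careful Helly/concurrent-cover argument showing that when every submodular function places a large separator, one can actually route many disjoint ``highway'' paths simultaneously and promote them to an embedding. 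I expect this to be by far the main obstacle; the rest of the proof is bookkeeping.

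Finally, suppose the hypothesised algorithm exists, running in time $f(H(\phi))\cdot (\structsize{\phi}+\structsize{\calD})^{\littleo(\aw{H(\phi)}^{1/4})}$. Applied to $(\phi,\calD)$ with $\aw{H_n} = \Theta(n^4)$ and $\structsize{\phi}+\structsize{\calD} = \poly(n,m)$, this decides satisfiability of $\Psi$ in time $f(H_n) \cdot \poly(n,m)^{\littleo(n)} = 2^{\littleo(n)}$, since $f(H_n)$ depends only on $n$ and can be absorbed. This contradicts rETH, completing the proof.
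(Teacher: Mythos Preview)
This theorem is not proved in the paper; it is quoted from Marx~\cite{Marx13:jacm} and used as a black box to derive Observation~\ref{obs:LBsubmod}. So there is nothing in the paper to compare against, and any proof you give is necessarily a reconstruction of Marx's argument rather than of anything done here.

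Your outline is recognisably Marx's programme: translate to submodular width via Lemma~\ref{lem:dom}, invoke an embedding theorem saying that large submodular width lets one pack a 3-\textsc{SAT}-like constraint structure into a single hyperedge pattern, then plug in. That is the right shape, and you correctly flag the embedding theorem as the real content.

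However, the final bookkeeping has a genuine gap. You choose a hypergraph $H_n\in C$ depending on the number $n$ of \ThreeSAT\ variables, and then claim that the factor $f(H_n)$ ``depends only on $n$ and can be absorbed'' into $2^{\littleo(n)}$. That does not follow: $f$ is an arbitrary computable function, and the hypergraph $H_n$ you locate by enumeration may be enormous (its size is not bounded by any function of $n$ that you control), so $f(H_n)$ can dwarf $2^n$. Relatedly, $\poly(n,m)^{\littleo(n)}$ is $2^{\littleo(n\log n)}$, not $2^{\littleo(n)}$. The way Marx's argument actually runs is with the quantifiers in the opposite order: one fixes a \emph{single} hypergraph $H\in C$ whose width is large enough that the exponent $g(\aw{H})/\aw{H}^{1/4}$ drops below the rETH constant, and then reduces \ThreeSAT\ instances of \emph{all} sizes $n$ to CQ instances over this fixed $H$, with database size roughly $2^{O(n/\aw{H}^{1/4})}$. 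For that fixed $H$, the term $f(H)$ is a genuine constant, and the resulting algorithm solves $n$-variable \ThreeSAT\ in time $C_H\cdot 2^{c' n}$ with $c'$ strictly below the rETH threshold, which is the contradiction. Rewriting your last two paragraphs with this order of quantifiers would close the gap.
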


As we noted earlier,  an FPTRAS for $\CQ{\Phi_C}$ provides a $(1/2, 1/4)$-approximation
for $\CQ{\Phi_C}$
that runs in time
$f(\structsize{\phi})\cdot \poly(\structsize{\phi}+\structsize{\calD})$.
It follows from the definition
of $\structsize{\phi}$ that $\structsize{\phi}$ is a function of
$H(\phi)$.
Thus, the FPTRAS provides 
a $(1/2, 1/4)$-approximation
for $\CQ{\Phi_C}$ that runs in time
$f(H(\phi))\cdot\poly(\structsize{\phi} + \structsize{\calD})$.
This approximation algorithm can be used to solve the decision problem of determining whether  
the output of $\CQ{\Phi_C}$ is nonzero.
Thus, by Theorem~\ref{thm:canyoubeatSW}, we obtain the following observation.

 \begin{obs}  \label{obs:LBsubmod}
Let $C$ be a recursively enumerable class of hypergraphs with unbounded adaptive width. Then
  $\CQ{\Phi_C}$ does not have an FPTRAS, unless rETH fails.
\end{obs}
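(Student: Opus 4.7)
The plan is to derive the non-existence of an FPTRAS from Marx's hardness result \mbox{Theorem~\ref{thm:canyoubeatSW}}, following the template that the paper already used (and stated in the paragraph preceding Observation~\ref{obs:LBmain}) to conclude Observation~\ref{obs:LBmain} from \mbox{Theorem~\ref{thm:canyoubeatTW}}. Suppose, for contradiction, that $\CQ{\Phi_C}$ admits an FPTRAS with running time $f(\structsize{\varphi})\cdot p(\structsize{\calD},1/\eps,\log(1/\delta))$ for some computable $f$ and polynomial $p$. Instantiating the scheme with $\eps = 1/2$ and $\delta = 1/4$, one obtains a randomised algorithm that, on any input $(\varphi,\calD)$, outputs a value $X$ with $\Pr[\,|X-V|\le V/2\,]\ge 3/4$, where $V=|\Ans{\varphi,\calD}|$, and whose running time has the form $g(\structsize{\varphi})\cdot q(\structsize{\calD})$ for a computable $g$ and a polynomial $q$.

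The first step is to turn this approximation into a decision procedure for the emptiness of $\Ans{\varphi,\calD}$. If $V=0$, the multiplicative guarantee forces $X=0$ with probability $\ge 3/4$; if $V\ge 1$, then with probability $\ge 3/4$ we have $X\ge V-V/2 \ge 1/2 > 0$. Hence the test ``$X>0$'' decides whether $(\varphi,\calD)$ has an answer with error at most $1/4$, matching the error tolerance of Theorem~\ref{thm:canyoubeatSW}.

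The next step is to recast the running time in the parameterisation of Theorem~\ref{thm:canyoubeatSW}. Because $\structsize{\varphi}$ is $|\vars{\varphi}|$ plus the sum of arities of the atoms, and both quantities are determined by $H(\varphi)$ (using the conventional identification of atoms with hyperedges, so we may assume no atom is duplicated), we may write $g(\structsize{\varphi}) = \tilde g(H(\varphi))$ for a computable $\tilde g$. The overall decision running time is then $\tilde g(H(\varphi))\cdot\poly(\structsize{\varphi}+\structsize{\calD})$, which is asymptotically dominated by the forbidden bound $\tilde g(H(\varphi))\cdot(\structsize{\varphi}+\structsize{\calD})^{\littleo(\aw{H(\varphi)}^{1/4})}$ as $\aw{H(\varphi)}$ grows. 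Since $C$ is recursively enumerable and has unbounded adaptive width, instances of arbitrarily large adaptive width from $C$ are available, and Theorem~\ref{thm:canyoubeatSW} then forces rETH to fail, contradicting our assumption.

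There is no genuine technical obstacle here; the argument is a clean ``approximation implies decision'' reduction, and the heavy lifting has been delegated to Marx's Theorem~\ref{thm:canyoubeatSW}. The only point that deserves care, and that should be stated explicitly in the write-up, is the replacement of the query-size dependence $f(\structsize{\varphi})$ by a purely hypergraph-dependent factor $\tilde g(H(\varphi))$, which is what lets the FPTRAS running time be compared against the $f(H(\varphi))\cdot(\cdots)^{\littleo(\aw{H(\varphi)}^{1/4})}$ bound ruled out by Theorem~\ref{thm:canyoubeatSW}.
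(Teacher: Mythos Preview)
Your proposal is correct and follows essentially the same route as the paper: instantiate the FPTRAS with $\eps=1/2$, $\delta=1/4$, observe that this yields a randomised decision procedure for emptiness of $\Ans{\varphi,\calD}$, rewrite the $f(\structsize{\varphi})$ factor as a function of $H(\varphi)$, and invoke Theorem~\ref{thm:canyoubeatSW}. You are in fact slightly more explicit than the paper about why $\structsize{\varphi}$ can be taken to depend only on $H(\varphi)$ and about why the fixed-degree polynomial is $\littleo(\aw{H(\varphi)}^{1/4})$ in the exponent; both clarifications are welcome.
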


 Clearly, Observation~\ref{obs:LBsubmod} also 
 rules out an FPTRAS for 
 $\DCQ{\Phi_C}$ (matching Theorem~\ref{thm:DCQsubmod}) 
 or an FPTRAS for 
 $\ECQ{\Phi_C}$
 when $C$ has unbounded adaptive width.

 As stated in Observation~\ref{obs:LBnoFPRAS}, there is little hope of improving Theorem~\ref{thm:submod} to obtain an FPRAS instead of an FPTRAS. However, as mentioned previously, if disequalities are not part of the query, the result by Arenas et al.~\cite{ArenasNew} also applies to classes of hypergraphs with unbounded arity --- they give an FPRAS for $\CQ{\Phi_C}$ if 
$C$ is a class of hypergraphs with bounded
 hypertreewidth. 
 We  improve this result by showing that it suffices to require bounded fractional hypertreewidth --- the following theorem is proved in Section~\ref{sec:fh}.

\begin{thm}\label{thm:fractionalFPRAS} 
Let $b$ be a positive integer.
Let $C$ be a class of hypergraphs such that every member of $C$ has fractional hypertreewidth at most $b$.
Then $\CQ{\Phi_C}$ has an FPRAS.
\end{thm}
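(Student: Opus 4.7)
The plan is to reduce the bounded-fractional-hypertreewidth case to the bounded-hypertreewidth case and then invoke the FPRAS of Arenas et al.~\cite{ArenasNew} (stated below as Theorem~\ref{thm:Arenas}). Given a CQ $\varphi \in \Phi_C$ with $\fhw{H(\varphi)} \leq b$ and a database $\calD$, I would construct in polynomial time a pair $(\varphi', \calD')$ for which $\hw{H(\varphi')} \leq 1$, $\structsize{\calD'}$ is polynomial in $\structsize{\calD}$, and $\ans{\varphi', \calD'} = \ans{\varphi, \calD}$, and then run the FPRAS of Theorem~\ref{thm:Arenas} on $(\varphi', \calD')$.

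For the construction, I first compute a tree decomposition $(T, \boldB)$ of $H(\varphi)$ of fractional width in $\bigO(\poly(b))$, using a known approximation algorithm for fractional hypertreewidth. Setting $N := |U(\calD)|$, for each bag $B_t$ I materialise
\[
R_t \;:=\; \{\alpha|_{B_t} \mid \alpha \in \sol{\varphi, \calD}\},
\]
the projection of the solution set of $\varphi$ onto $B_t$. By the Atserias--Grohe--Marx size bound applied to the fractional edge cover of $B_t$, $|R_t| \leq N^{\bigO(\poly(b))}$, and $R_t$ can be enumerated within the same time bound using a worst-case optimal join algorithm combined with a Grohe--Marx-style dynamic program along $(T, \boldB)$. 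I then take $U(\calD') := U(\calD)$, add a relation $R_t^{\calD'} := R_t$ to $\calD'$ for every bag $t$, and let $\varphi'$ be the CQ with the same variables and free variables as $\varphi$ whose atoms are exactly $R_t(B_t)$ for $t \in V(T)$ (with the variables of $B_t$ listed in some fixed order).

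Since $H(\varphi')$ has hyperedges $\{B_t\}_{t \in V(T)}$ and $(T, \boldB)$ is a tree decomposition of $H(\varphi')$ whose bags each consist of a single hyperedge, $\hw{H(\varphi')} \leq 1$ and $\varphi'$ is acyclic. The inclusion $\ans{\varphi, \calD} \subseteq \ans{\varphi', \calD'}$ is immediate. For the converse, suppose $\sigma \in \ans{\varphi', \calD'}$ is witnessed by an assignment $\alpha \colon \vars{\varphi} \to U(\calD)$ with $\alpha|_{B_t} \in R_t$ for every $t$. For each atom $R(y_1,\dots,y_j)$ of $\varphi$ pick a bag $t$ with $\{y_1,\dots,y_j\} \subseteq B_t$; by the definition of $R_t$ the restriction $\alpha|_{B_t}$ coincides with $\beta_t|_{B_t}$ for some $\beta_t \in \sol{\varphi, \calD}$, and $\beta_t$ must satisfy this atom. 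Because $\beta_t$ agrees with $\alpha$ on $\{y_1,\dots,y_j\} \subseteq B_t$, so does $\alpha$. Hence $\alpha \in \sol{\varphi, \calD}$ and $\sigma \in \ans{\varphi, \calD}$, which yields $\ans{\varphi', \calD'} = \ans{\varphi, \calD}$ and completes the reduction.

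The main technical obstacle is the efficient enumeration of the projections $R_t$. The subtlety is that the fractional cover witnessing the small fractional edge cover number of $B_t$ may involve hyperedges of $H(\varphi)$ whose variables escape $B_t$, so a naive local join within $B_t$ does not suffice; a single $R_t$ is really a projection of the global solution set rather than of something assembled locally. The resolution is to lift a worst-case optimal join algorithm to the projection setting and interleave it with a Yannakakis-style bottom-up/top-down pass along $(T, \boldB)$, which is the standard machinery underlying the polynomial-time evaluation of CQs of bounded fractional hypertreewidth. Once this is in place, Theorem~\ref{thm:Arenas} applied to the acyclic query $\varphi'$ delivers the $(\varepsilon,\delta)$-approximation in time polynomial in $\structsize{\calD}$, $\varepsilon^{-1}$ and $\log(1/\delta)$.
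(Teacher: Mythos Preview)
Your proposal is correct and takes a genuinely different route from the paper. Both arguments begin identically: compute an approximate fractional hypertree decomposition $(T,\boldB)$ of $H(\varphi)$ via Marx's algorithm (the paper's Lemma~\ref{lem:getniceTD}) and materialise, for each bag, a relation of size $\structsize{\calD}^{O(\poly(b))}$ using the Grohe--Marx bound (the paper's Lemma~\ref{lem:Sol}). From there they diverge. The paper builds a tree automaton directly from $(T,\boldB)$ and the bag relations and then invokes the FPRAS of Arenas et al.\ for $\TA$ (Lemma~\ref{lem:TA}). You instead package the bag relations into a new acyclic CQ $\varphi'$ with $\hw{H(\varphi')}\le 1$ and invoke the bounded-hypertreewidth FPRAS (Theorem~\ref{thm:Arenas}) as a black box. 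Your route is more modular---it avoids redoing the tree-automaton encoding---at the cost of one layer of indirection, since Theorem~\ref{thm:Arenas} itself ultimately reduces to $\TA$.

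Two minor remarks. First, you are working harder than necessary: the \emph{local} sets $\sol{\varphi,\calD,B_t}$ already suffice for $\ans{\varphi',\calD'}=\ans{\varphi,\calD}$, so the Yannakakis pass to obtain the global projections is not needed. Your own converse argument only uses that $\alpha|_{B_t}$ extends to an assignment satisfying any atom whose variables lie in $B_t$, and that is exactly the defining property of $\sol{\varphi,\calD,B_t}$. Second, the AGM bound gives $|R_t|\le \structsize{\calD}^{O(\poly(b))}$ rather than $|U(\calD)|^{O(\poly(b))}$; this does not affect the polynomial running time but is worth stating precisely.
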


We don't know whether Theorem~\ref{thm:fractionalFPRAS}
can be extended to instances with bounded adaptive width (closing the gap between Theorem~\ref{thm:fractionalFPRAS} and the lower bound presented in Observation~\ref{obs:LBsubmod}). However, this situation now mirrors the situation in the decision world: Bounded fractional hypertreewidth is the most general property of the class of underlying hypergraphs, for which the conjunctive query decision problem is known to be polynomial-time solvable, see~\cite{Marx10:talg}. Fixed-parameter tractability for this problem is also known for classes with bounded adaptive width~\cite{Marx13:jacm}, complemented by the matching lower bound stated in Theorem~\ref{thm:canyoubeatSW}. The question whether bounded fractional hypertreewidth is the right answer for the existence of a polynomial-time algorithm remains open and we refer to the conclusion of~\cite{Marx13:jacm} for further discussion regarding this question.
With our results from Theorems~\ref{thm:submod} and~\ref{thm:fractionalFPRAS} we now arrive at precisely the same gap for the existence of an FPRAS for the corresponding counting problem \#CQ.

\paragraph{Extensions: Sampling and Unions}

Using standard methods, the algorithmic results presented in this work can be extended in two ways. First, approximate counting algorithms can be lifted to obtain algorithms for approximately uniformly sampling answers. This is based on the fact that the problems that we considered  are self-reducible (self-partitionable)~\cite{Schnorr76:icalp,Jerrum86:tcs,Dyer99:lmslns}. Second, instead of single (extended) conjunctive queries, one can also consider unions thereof.
We refer to Section~\ref{sec:extensions} for further details.

\subsection{Related Work}
The first systematic study of the complexity of \emph{exactly} counting answers to conjunctive queries is due to Pichler and Skritek~\cite{PichlerS13}, and Durand and Mengel~\cite{DurandM15}. Their result has been extended to unions of conjunctive queries by Chen and Mengel~\cite{ChenM16} and to extended\footnote{``Disequalities'' are sometimes referred to as ``Inequalities'', and ``Negations'' are sometimes referred to as ``Non-monotone Constraints''.} conjunctive queries by Dell, Roth and Wellnitz~\cite{DellRW19}. Unfortunately, all of those results have shown that exact counting is infeasible even for very restricted classes of queries, indicating that relaxing to approximate counting is necessary if efficient algorithms are sought for wide classes of queries. As highlighted before, a recent result by Arenas et al.~\cite{ArenasNew} shows that 
there is an FPRAS for
approximately counting answers to conjunctive queries whenever the hypergraphs of the queries have bounded hypertreewidth, yielding a significantly wider class of tractable instances than for exact counting. We state and further discuss their result in Section~\ref{sec:fh}.

If we restrict to instances without existential quantifiers then counting answers to conjunctive queries is equivalent to the problem of counting homomorphisms from a small relational structure to a large one, the complexity of which was investigated by Dalmau and Jonsson~\cite{DalmauJ04} in the case of exact counting and by Bulatov and \Zinvy~\cite{BulatovZ20} in the case of approximate counting.

The notion of an FPTRAS was introduced by Arvind and Raman~\cite{ArvindR02:isaac} and has since been established as the standard notion for tractability of parameterised approximate counting problems (see~\cite{Meeks16} for an overview).
 
\subsection{Algorithmic Methods and Proof Techniques}

The FPTRAS presented in our main result, Theorem~\ref{thm:mainbounded}, relies on a framework that was introduced in a recent work by  Dell, Lapinskas, and Meeks~\cite{DellLM20}. Their framework establishes an algorithm for approximating the number of hyperedges in a hypergraph that uses an oracle for a related decision problem. Our contribution is to figure out how to reduce the
problem of approximating answers to the problem of approximately counting hyperedges in 
an appropriately-constructed hypergraph, thus giving an algorithmic result that completely matches
the corresponding hardness result (Observation~\ref{obs:LBmain}).
Our algorithm for the unbounded-arity case (Theorem~\ref{thm:DCQsubmod}) 
has the same general structure and here an additional contribution is
determining the correct criterion for the DCQ case (which turns out to be bounded adaptive width).

The FPRAS that we present in  Theorem~\ref{thm:fractionalFPRAS}
first  constructs a tree decomposition in a convenient format, then   collects appropriate local information in the tree decomposition. Using
this local information, it reduces the problem of approximately counting answers to the
problem of approximately counting outputs accepted by a tree automaton (which can
be accomplished by an algorithm  of Arenas, Croquevielle, Jayaram, and Riveros~\cite{ArenasNew}).
A key observation leading to the improved result is that
the tree automaton reduction still works even when the tree decomposition may have more than polynomially many hyperedges per bag, as long as the number of relevant partial solutions 
is bounded
by a polynomial, as is the case for bounded
fractional hypertreewidth~\cite{Grohe14:talg}.

\section{Technical Preliminaries}\label{sec:Prelims}

\subsection{Using Decision Oracles to Approximately Count Hyperedges}

We start by introducing the terminology that we need.
A hypergraph $H$ is called $\ell$\emph{-uniform} if each hyperedge of $H$ has cardinality~$\ell$.
An $\ell$\emph{-partite subset} of a (finite) set $V$ is a tuple $(V_1,\dots,V_\ell)$ of (pairwise) disjoint subsets of $V$. 
This is called an \emph{$\ell$-partition} of~$V$
if $V = \cup_{i=1}^{\ell} V_i$.

An $\ell$-uniform hypergraph~$H=(V,E)$ is 
$\ell$\emph{-partite} with $\ell$-partition $(V_1,\dots,V_\ell)$ if 
$(V_1,\ldots,V_\ell)$ is an $\ell$-partition of~$V$ and
every hyperedge in~$E$ contains exactly one vertex in each $V_i$. 

Given an $\ell$-uniform hypergraph $H=(V,E)$ and an $\ell$-partite subset $(V_1,\dots,V_\ell)$ of~$V$, the hypergraph $H[V_1,\dots,V_\ell]$ has vertex set  $\bigcup_{i=1}^{\ell} V_i$; recall that the $V_i$ are pairwise disjoint.
The hyperedge set of $H[V_1,\ldots,V_\ell]$ 
is the set of all hyperedges in~$E$  that contain (exactly) one vertex in each $V_i$.
Note that $H[V_1,\dots,V_\ell]$ is $\ell$-partite. 

We write ${\EdgeFree}(H)$ for the predicate that is satisfied if a hypergraph $H$ has no edges. The main result of Dell, Lapinskas and Meeks is as follows.

  \begin{thm}[\cite{DellLM20}]\label{thm:colreduction}
 There is an algorithm $\mathbb{A}(\varepsilon,\delta,H)$ with the following behaviour. Suppose that $0<\varepsilon,\delta<1$ are positive reals, $H$ is an $\ell$-uniform hypergraph, and that (in addition to learning $V(H)$ and $\ell$) the algorithm $\mathbb{A}$ has access to an oracle for evaluating the predicate 
${\EdgeFree}(H[V_1,\ldots,V_\ell])$ for any $\ell$-partite subset $(V_1,\ldots,V_\ell)$ of $V(H)$.

$\mathbb{A}$ computes an $(\varepsilon,\delta)$-approximation of $|E(H)|$ in time $\bigO(NT)$, using at most $T$ calls to the oracle, where $N=|V(H)|$ and $T=\Theta(\log(1/\delta)\varepsilon^{-2} \ell^{6\ell}{(\log N)}^{4\ell+7})$. 
 \end{thm}
 
 We emphasise that, while learning $V(H)$, $\ell$, $\varepsilon$, and $\delta$,  the algorithm in Theorem~\ref{thm:colreduction} does not learn $E(H)$ as part of the input.
 The only access that the algorithm has to~$E(H)$ is via
 the oracle which evaluates the predicate
 ${\EdgeFree}(H[V_1,\ldots,V_\ell])$.
  In fact,  in our application $H$ will be the hypergraph whose hyperedges are the elements of $\ans{\phi,\calD}$ 
 for an ECQ~$\phi$ 
with $\ell$ free variables and a database~$\calD$ with 
$|U(\calD)|=N$. 
 Theorem~\ref{thm:colreduction}  will help us to reduce the problem of approximating $|\ans{\phi,\calD}|$
 to the (decision) problem of determining whether one exists via the oracle for the predicate ${\EdgeFree}$.

\subsection{From Conjunctive Queries to Relational Structures and Homomorphisms}\label{sec:QueriesToStructures}

It is well-known that answers to conjunctive queries are closely related to homomorphisms between relational structures. In this work, it will be convenient to view answers  in the language of homomorphisms.
We start with the relevant definitions.

Recall that 
a {signature} $\sigma$ consists of a finite set of relation symbols with specified positive arities.  
We use $\arity(R)$ to denote the arity of a relation symbol~$R$ and $\arity(\sigma)$ to denote the maximum arity of any relation symbol in~$\sigma$.
Given a signature~$\sigma$,
a \emph{structure}~$\calA$ with signature $\sig(\calA)=\sigma$
consists of a finite universe $U(\calA)$ and, for each relation symbol $R\in \sigma$,
a relation $R^\calA \subseteq U(\calA)^{\arity(R)}$.
Following~\cite{Grohe07:jacm}, we use $\structsize{\calA}$ to denote the 
size of structure~$\calA$, which is given by 
$\structsize{\calA} = |\sig(\calA)| + |U(\calA)| + \sum_{R\in \sig(\calA)} |R^\calA| \cdot \arity(R)$. Note that a (relational) database is a structure.

Given two structures $\calA$ and $\calB$ with 
$\sig(\calA) \subseteq \sig(\calB)$,
a \emph{homomorphism} from $\calA$ to $\calB$ is a function $h\from U(\calA) \to U(\calB)$ such that for all $R\in \sig(\calA)$ with $t=\arity(R)$ and all tuples $(a_1,\ldots,a_t)\in R^{\calA}$ it holds that $(h(a_1),\ldots,h(a_t))\in R^{\calB}$. Then $\homs{\calA}{\calB}$ denotes the set of homomorphisms from $\calA$ to $\calB$.

Let $\phi$ be an ECQ and let $\calD$ be a database 
with $\sig(\phi) \subseteq \sig(\calD)$. We define a pair of {associated structures} $\calA(\phi)$ and $\calB(\phi, \calD)$ with the goal of expressing query answers in~$\ans{\phi,\calD}$ as homomorphisms from~$\calA$ to~$\calB$. 

\begin{defn}[$\calA(\phi)$]\label{def:calH}
    The universe of $\calA(\phi)$ is
    $U(\calA(\phi)) = \vars{\phi}$.
    The signature of $\calA(\phi)$ is 
    constructed from $\sig(\phi)$ as follows
    \begin{itemize}
    \item If there is a predicate in~$\phi$ involving the relation symbol~$R$, then $R$ is in the signature of~$\calA(\phi)$ (with the same arity as $R$ has in $\sig(\phi)$).
    \item If there is a negated predicate in~$\phi$ involving the relation symbol~$R$ then the relation symbol~$\overline{R}$ is in the signature of~$\calA(\phi)$ (with the same arity as $R$ has in $\sig(\phi)$).
    \end{itemize}
    Finally, for each $R\in \sig(\phi)$ and $j=\arity(R)$, $R^{\calA(\phi)}$ is the set of tuples $(y_1,\ldots, y_j)$ for which $R(y_1,\ldots, y_j)$ is a predicate in~$\phi$; and $\overline{R}^{\calA(\phi)}$ is the set of tuples $(y_1,\ldots, y_j)$ for which $\neg R(y_1,\ldots, y_j)$ is a negated predicate in~$\phi$.
\end{defn}

For $R\in \sig(\phi)$, let $P_\phi^+(R)$ be the set of predicates of $\phi$ that use $R$, and let $P_\phi^-(R)$ be the set of negated predicates of $\phi$ that use $R$. Then 
\[
|U(\calA(\phi))| + \sum_{R\in \sig(\calA(\phi))} |R^{\calA(\phi)}| \cdot \arity(R) =|\var(\phi)| + \sum_{R\in \sig(\phi)} \bigl(|P_\phi^+(R)| + |P_\phi^-(R)|\bigr)\cdot\arity(R) \le \structsize{\phi}.
\]

Recall that 
$\structsize{\phi}$ is the sum of  $ |\vars{\phi}|$
  and the sum of the arities of the atoms in~$\phi$.
Also,  
the size of a structure $\calA$ is
$\structsize{\calA} = |\sig(\calA)| + |U(\calA)| + \sum_{R\in \sig(\calA)} |R^\calA| \cdot \arity(R)$.
Thus, we obtain the following observation regarding the size of $\calA(\phi)$.
\begin{obs}\label{obs:sizeOfcalA}\label{obs:A1}
Let~$\phi$ be an ECQ with $\nu$ negated predicates.
Then
$\structsize{\calA(\phi)}\le |\sig(\phi)| +\nu + \structsize{\phi}
\leq 3\structsize{\phi}$.
 \end{obs}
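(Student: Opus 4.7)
The plan is to simply unpack $\structsize{\calA(\phi)}$ using its definition and bound each of its three summands. Recall
\[
\structsize{\calA(\phi)} = |\sig(\calA(\phi))| + |U(\calA(\phi))| + \sum_{R\in\sig(\calA(\phi))} |R^{\calA(\phi)}|\cdot \arity(R).
\]
For the first term, Definition~\ref{def:calH} shows that $\sig(\calA(\phi))$ contains the relation symbol $R$ for each $R\in\sig(\phi)$ appearing in a (positive) predicate, plus one fresh symbol $\overline{R}$ for each distinct $R$ that appears in a negated predicate. Since there are only $\nu$ negated predicates in $\phi$, the number of such $\overline{R}$ symbols is at most $\nu$, giving $|\sig(\calA(\phi))|\le |\sig(\phi)| + \nu$. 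The second term is exactly $|\vars{\phi}|$ by the definition of $\calA(\phi)$.

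For the sum, I would reuse the identity displayed immediately above Observation~\ref{obs:sizeOfcalA}:
\[
|U(\calA(\phi))| + \sum_{R\in\sig(\calA(\phi))} |R^{\calA(\phi)}|\cdot\arity(R) = |\vars{\phi}| + \sum_{R\in\sig(\phi)} \bigl(|P_\phi^+(R)|+|P_\phi^-(R)|\bigr)\cdot\arity(R) \le \structsize{\phi},
\]
where the inequality holds because $\structsize{\phi}$ was defined as $|\vars{\phi}|$ plus the sum of arities of all atoms (predicates, negated predicates, and disequalities), and the right-hand sum only counts predicates and negated predicates. Adding the bound on $|\sig(\calA(\phi))|$ then yields $\structsize{\calA(\phi)} \le |\sig(\phi)| + \nu + \structsize{\phi}$, the first inequality.

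For the second inequality, it suffices to show $|\sig(\phi)| + \nu \le 2\structsize{\phi}$. Each element of $\sig(\phi)$ appears in at least one atom of arity at least $1$, so $|\sig(\phi)|$ is bounded by the number of predicates and negated predicates in $\phi$, and hence by the sum of their arities, which is at most $\structsize{\phi}$. Similarly, $\nu$ (the number of negated predicates) is at most the number of atoms of $\phi$, which is in turn at most $\structsize{\phi}$ since every atom contributes at least $1$ to the sum of arities. Combining these two bounds with the previous displayed inequality gives $\structsize{\calA(\phi)} \le 3\structsize{\phi}$.

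There is no real obstacle here; the statement is essentially a bookkeeping exercise directly after the definition of $\calA(\phi)$. The only mild subtlety is being careful that the fresh ``negated'' symbols $\overline{R}$ are counted by $\nu$ (the number of negated predicates) rather than by $|\sig(\phi)|$, and that $\structsize{\phi}$ incorporates disequality atoms that contribute nothing to $\structsize{\calA(\phi)}$, so the inequality is in fact comfortable.
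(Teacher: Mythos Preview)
Your proposal is correct and mirrors the paper's approach: the paper does not give a separate proof but states the observation as an immediate consequence of the displayed inequality just above it (which you reuse verbatim) together with the definition of $\structsize{\calA}$, leaving the bounds $|\sig(\calA(\phi))|\le |\sig(\phi)|+\nu$ and $|\sig(\phi)|+\nu\le 2\structsize{\phi}$ implicit. You have simply spelled out those two implicit bounds, which is exactly the intended argument.
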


\begin{defn}[$\calB(\phi, \calD)$]\label{def:calG}
    The universe of~$\calB(\phi, \calD)$ is the universe $U(\calD)$.
    The signature is $\sig(\calB(\phi, \calD)) = \sig(\calA(\phi)) $ and the relations are defined as follows.
    \begin{itemize}
    \item For each $R\in \sig(\calA(\phi)) \cap \sig(\calD)$, $R^{\calB(\phi, \calD)} = R^\calD$.
    \item For each $\overline{R} \in
    \sig(\calA(\phi))\setminus \sig(\calD)$,  $\overline{R}^{\calB(\phi, \calD)} = U(\calD)^{\arity(R)} \setminus R^\calD$.
\end{itemize}
\end{defn}

Note that $\structsize{\calB(\phi, \calD)}$ is bounded from above by
\[
    |\sig(\calA(\phi))| + |U(\calD)| + \sum_{R\in \sig(\calA(\phi))\cap\sig(\calD)} |R^\calD| \cdot \arity(R) + \sum_{\overline{R}\in \sig(\calA(\phi))\setminus \sig(\calD)} |U(\calD)|^{\arity(R)} \cdot \arity(R),
\]
where, for an ECQ $\phi$ with $\nu$ negated predicates $|\sig(\calA(\phi))| \le |\sig(\phi)| +\nu \leq |\sig(\calD)| +\nu$.
Thus, we obtain the following observation regarding the size of $\calB(\phi,\calD)$.

\begin{obs}\label{obs:sizeOfcalB}\label{obs:A2}
Let~$\phi$ be an ECQ with
$a=\arity(\sig(\phi))$. If~$\phi$ has  $\nu$ negated predicates  
then 
$\structsize{\calB(\phi,\calD)}\le \structsize{\calD} +\nu + \nu\, a\,|U(\calD)|^{a}
\leq 
2\structsize{\phi}(
 \structsize{\calD} + \nu   {|U(\calD)|}^a)$.
\end{obs}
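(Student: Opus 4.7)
The plan is to combine the bounds already developed in the paragraph preceding the observation, which establishes
\[
\structsize{\calB(\phi,\calD)} \le |\sig(\calA(\phi))| + |U(\calD)| + \sum_{R \in \sig(\calA(\phi)) \cap \sig(\calD)} |R^\calD| \cdot \arity(R) + \sum_{\overline{R} \in \sig(\calA(\phi)) \setminus \sig(\calD)} |U(\calD)|^{\arity(R)} \cdot \arity(R),
\]
together with $|\sig(\calA(\phi))| \le |\sig(\phi)| + \nu \le |\sig(\calD)| + \nu$. First I would match the first three terms above, with $|\sig(\calA(\phi))|$ replaced by $|\sig(\calD)| + \nu$, against the definition of $\structsize{\calD}$; since $\sig(\calA(\phi)) \cap \sig(\calD) \subseteq \sig(\calD)$ and all terms are non-negative, this contributes at most $\structsize{\calD} + \nu$. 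Second, I would count the summands in the last sum: by Definition~\ref{def:calH}, every symbol $\overline{R} \in \sig(\calA(\phi)) \setminus \sig(\calD)$ corresponds to at least one negated predicate of $\phi$ (namely, one involving $R$), so there are at most $\nu$ such summands, and each is bounded by $a \cdot |U(\calD)|^a$ because $\arity(R) \le a$. Adding these bounds yields the first inequality.

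For the second inequality, I would apply two crude bounds. Using $\structsize{\phi} \ge 1$, which holds because $\phi$ has at least one variable, I get $\structsize{\calD} \le 2\structsize{\phi} \cdot \structsize{\calD}$. Using $\structsize{\phi} \ge a$, which holds because the maximum arity $a$ appears in at least one atom and so contributes at least $a$ to the sum of arities in $\structsize{\phi}$, together with $|U(\calD)|^a \ge 1$ in the nondegenerate case $|U(\calD)| \ge 1$ (the empty-universe case is trivial), I bound
\[
\nu + \nu \, a \, |U(\calD)|^a \le (1+a) \, \nu \, |U(\calD)|^a \le 2 \structsize{\phi} \cdot \nu \, |U(\calD)|^a.
\]
Summing these two estimates yields $2 \structsize{\phi}(\structsize{\calD} + \nu \, |U(\calD)|^a)$, as required.

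The argument is pure bookkeeping in the definitions of $\structsize{\cdot}$, $\calA(\phi)$, and $\calB(\phi,\calD)$, so I do not foresee any real obstacle. The only mild subtlety is the counting step $|\sig(\calA(\phi)) \setminus \sig(\calD)| \le \nu$, which is used twice: once to bound $|\sig(\calA(\phi))|$ and once to bound the number of summands in the last sum. This is immediate from Definition~\ref{def:calH}, since the only way a symbol of $\sig(\calA(\phi))$ can fail to lie in $\sig(\phi) \subseteq \sig(\calD)$ is to have the form $\overline{R}$ arising from a negated predicate.
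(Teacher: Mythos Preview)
Your proposal is correct and follows exactly the approach implicit in the paper: the paper states the observation immediately after the displayed upper bound and the estimate $|\sig(\calA(\phi))| \le |\sig(\calD)| + \nu$, leaving the remaining bookkeeping (matching terms against $\structsize{\calD}$, bounding the $\overline{R}$-sum by $\nu a |U(\calD)|^a$, and the crude second inequality via $\structsize{\phi} \ge 1$ and $\structsize{\phi} \ge a$) to the reader. You have spelled out precisely those omitted steps, and there is nothing to add.
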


Let $\phi$ be an ECQ and  
let $\calD$ be a database with $\sig(\phi) \subseteq \sig(\calD)$.
Let $\Delta(\phi)=\{\{x_i,x_j\} \mid x_i\neq x_j \text{ is an atom of } \phi\}$.
Note that 
$\sol{\phi,\calD} = \{  h \in \homs{\calA(\phi)}{\calB(\phi,\calD)}:  
     \forall \{x_i,x_j\}\in \Delta(\phi): h(x_i)\neq h(x_j) \}$.
Therefore, 

\begin{equation}\label{equ:Ans}
\begin{split}
    \ans{\phi, \calD} = \{ \tau\from \free{\phi} \to U(\calD) \mid\, 
    &\exists h \in \homs{\calA(\phi)}{\calB(\phi,\calD)}: \proj(h,{\free{\phi}}) = \tau \\
    &\wedge\, \forall \{x_i,x_j\}\in \Delta(\phi): h(x_i)\neq h(x_j) \}.
\end{split}
\end{equation}

\section{Using Decision Oracles to Count Answers}\label{sec:todec}

We start by defining the homomorphism decision problem.

\vbox{
		\begin{description}\setlength{\itemsep}{0pt}
		\setlength{\parskip}{0pt}
		\setlength{\parsep}{0pt}   			
		        \item[\bf Name:] $\decoracle$ 
			    \item[\bf Input:]   
			    Structures~$\calA$ and~$\calB$ with $\sig(\calA) \subseteq \sig(\calB)$.
			    \item[\bf Output:] Is there a homomorphism from~$\calA$ to~$\calB$? 
		\end{description}
	}

The goal of this section is to establish Lemma~\ref{lem:interface}, the proof of which will be obtained by a combination of the $k$-hypergraph framework (Theorem~\ref{thm:colreduction}) and colour-coding.

\newcommand{\stateleminterface}{\sl
    There is a randomised algorithm   that is equipped with oracle access to $\decoracle$ and takes the following inputs
    \begin{itemize} 
        \item  an ECQ $\phi$,
        \item a database~$\calD$ with $\sig(\phi) \subseteq \sig(\calD)$,
        \item rational numbers~$\epsilon$ and $\delta$ in $(0,1)$.
    \end{itemize} 
    Let $a = \arity(\sig(\phi))$ and let $\nu$ be the number of negated predicates in~$\phi$.
    The algorithm  computes an $(\epsilon,\delta)$-approximation of  $|\ans{\phi,\calD}|$ in time \[
    \exp(\bigO({\structsize{\phi}}^2)) \cdot
    \poly( \log(1/\delta),\epsilon^{-1},   \structsize{\calD},
    { \nu |U(D)|^{a}}  
    ) \,.\]
    Each oracle query $\decoracle(\widehat\calA,{\widehat\calB})$
    that is made by the algorithm has the property that
    $\widehat{\calA}$ can be obtained from $\calA(\phi)$ by adding   unary relations  
and satisfies
$\structsize{\widehat\calA}  \leq 5\structsize{\phi}^2$.
}

\begin{lem}\label{lem:interface} 
    \stateleminterface
\end{lem}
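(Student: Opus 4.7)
The plan is to reduce $\ECQ{\cdot}$ to the edge-counting framework of Theorem~\ref{thm:colreduction}. Let $\ell=|\free{\phi}|$ and list the free variables as $x_1,\dots,x_\ell$. I would package the answers as the hyperedges of an $\ell$-uniform $\ell$-partite hypergraph~$H$ whose vertex set is $V=V_1\sqcup\cdots\sqcup V_\ell$, each $V_i$ a tagged copy of $U(\calD)$; the edge corresponding to $\tau\in\ans{\phi,\calD}$ is $\{(x_1,\tau(x_1)),\dots,(x_\ell,\tau(x_\ell))\}$, so $|E(H)|=|\ans{\phi,\calD}|$ and $N=\ell|U(\calD)|$. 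Theorem~\ref{thm:colreduction} then reduces the task to $T=\Theta(\log(1/\delta)\,\varepsilon^{-2}\,\ell^{6\ell}\,(\log N)^{4\ell+7})$ calls to an $\EdgeFree$ oracle on~$H$; since $\ell\leq\structsize{\phi}$, the factor $\ell^{6\ell}$ is already $\exp(\bigO(\structsize{\phi}^2))$, which fits the target bound. The remaining work is to simulate $\EdgeFree(H[W_1,\dots,W_\ell])$ using $\decoracle$.

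By \eqref{equ:Ans}, $H[W_1,\dots,W_\ell]$ has an edge iff there is a solution $\alpha\in\sol{\phi,\calD}$ with $\alpha(x_i)\in W_i$ for every free $x_i$. If $\phi$ carried no disequalities, this would reduce to a single $\decoracle$ call on $(\widehat\calA,\widehat\calB)$, where $\widehat\calA$ extends $\calA(\phi)$ by unary relations $F_i$ with $F_i^{\widehat\calA}=\{x_i\}$ and $\widehat\calB$ extends $\calB(\phi,\calD)$ by $F_i^{\widehat\calB}=W_i$. The main obstacle is the disequality conjunction $\bigwedge_{\{x,y\}\in\Delta(\phi)} x\neq y$, which $\decoracle$ cannot enforce. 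To overcome it I would apply colour-coding: with $v=|\vars{\phi}|$, fix an $(|U(\calD)|,v)$-perfect hash family $\calF\subseteq[v]^{U(\calD)}$ of size $\exp(\bigO(v))\log|U(\calD)|$ so that some member is injective on every $v$-subset of $U(\calD)$. For each $\calC\in\calF$ and each $\Delta$-respecting pattern $\pi\colon\vars{\phi}\to[v]$ (i.e.\ $\pi(x)\neq\pi(y)$ whenever $\{x,y\}\in\Delta(\phi)$), I further extend $\widehat\calA$ by unary relations $U_c$ with $U_c^{\widehat\calA}=\pi^{-1}(c)$ and $\widehat\calB$ by $U_c^{\widehat\calB}=\calC^{-1}(c)$, and call $\decoracle(\widehat\calA,\widehat\calB)$; the simulated oracle declares $H[W_1,\dots,W_\ell]$ non-edge-free iff some such call returns YES.

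Correctness is the heart of the argument and splits into two directions. A positive $\decoracle$ answer yields a homomorphism $\alpha$ with $\calC(\alpha(x))=\pi(x)$ for every $x$, so the $\Delta$-respecting property of $\pi$ forces $\alpha(x)\neq\alpha(y)$ on every disequality and $\alpha$ witnesses an edge of $H[W_1,\dots,W_\ell]$. Conversely, if such an edge exists, any realising solution $\alpha$ has an image of size at most $v$ in $U(\calD)$; by the perfect hash property some $\calC\in\calF$ is injective on that image, so $\pi:=\calC\circ\alpha$ is $\Delta$-respecting and the matching query returns YES. The remaining accounting is routine: the inner simulation makes at most $|\calF|\cdot v^v=\exp(\bigO(v\log v))\log\structsize{\calD}$ calls to $\decoracle$, which combined with the factor~$T$ gives the claimed $\exp(\bigO(\structsize{\phi}^2))\cdot\poly(\log(1/\delta),\varepsilon^{-1},\structsize{\calD},\nu|U(\calD)|^a)$ runtime (using Observation~\ref{obs:A2} to bound the cost of materialising $\calB(\phi,\calD)$); and the bound $\structsize{\widehat\calA}\leq 5\structsize{\phi}^2$ follows from Observation~\ref{obs:A1}, since we add only $v+\ell\leq 2\structsize{\phi}$ unary relations whose tuples together amount to $\bigO(\structsize{\phi})$. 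The one subtle point to verify is that the ``outer'' partite structure inherited from Theorem~\ref{thm:colreduction} (one class per free variable, used to count answers) and the ``inner'' colour-coding that handles disequalities act on disjoint parts of the augmented signature, so they compose cleanly.
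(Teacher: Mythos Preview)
Your high-level strategy matches the paper's: encode answers as hyperedges of an $\ell$-uniform $\ell$-partite hypergraph, invoke Theorem~\ref{thm:colreduction}, and simulate each $\EdgeFree$ query by colour-coding plus $\decoracle$. The substantive difference is in how you dispatch the disequalities. The paper introduces, for every $\eta\in\Delta(\phi)$ independently, a random $2$-colouring $f_\eta\colon U(\calD)\to\{r,b\}$ and adds the corresponding unary ``red/blue'' predicates; correctness is then one-sided and amplified by $Q=\Theta(4^{|\Delta|}\log(\cdot))$ repetitions, which is where the $\exp(\bigO(\structsize{\phi}^2))$ factor in the paper genuinely arises. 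Your route instead uses a single global $v$-colouring via an $(|U(\calD)|,v)$-perfect hash family together with an enumeration of $\Delta$-respecting patterns; this makes the oracle simulation deterministic and costs only $\exp(\bigO(v\log v))\cdot\log\structsize{\calD}$ per $\EdgeFree$ call, so it actually fits more comfortably inside the stated bound (and incidentally gives $\structsize{\widehat\calA}=\bigO(\structsize{\phi})$, well under the required $5\structsize{\phi}^2$). Both approaches are standard colour-coding; yours trades randomness for the construction of a hash family.

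There is one genuine gap. Your claim ``$H[W_1,\dots,W_\ell]$ has an edge iff there is $\alpha\in\sol{\phi,\calD}$ with $\alpha(x_i)\in W_i$'' is only valid when each $W_i$ is contained in the $i$-th part $V_i$ of $H$. Theorem~\ref{thm:colreduction}, however, may query $\EdgeFree$ on an \emph{arbitrary} $\ell$-partite subset $(W_1,\dots,W_\ell)$ of $V$, so a hyperedge of $H[W_1,\dots,W_\ell]$ corresponds to a solution together with some permutation $\pi\in\mathfrak{S}_\ell$ such that $(x_i,\alpha(x_i))\in W_{\pi(i)}$. The paper fixes this by iterating over all $\ell!$ permutations and, for each, intersecting $W_{\pi(i)}$ with $V_i$ before running the restricted simulation; the extra $\ell!$ factor is harmless within $\exp(\bigO(\structsize{\phi}^2))$. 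You should add this step---it is exactly the ``subtle point'' you flag at the end, and without it the reduction is incorrect.
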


In order to prove Lemma~\ref{lem:interface} we require some prerequisites and definitions. 

\begin{defn} \label{def:Ui} ($U_i(\calD)$)
Given a database $\calD$ and an integer $i$ we define $U_i(\calD)=U(\calD)\times \{i\}$. 
\end{defn}

Intuitively,   $U_i(\calD)$ will be used to specify the image of the $i$th variable of some query $\phi$.

\begin{defn}[$\uhg(\phi,\calD)$]\label{def:l-uniformHypergraph}\label{def:H}
Let $\phi$ be an  ECQ and let $\calD$ be a database
with $\sig(\phi) \subseteq \sig(\calD)$.  
Let $\ell = |\free{\phi}|$ and let  
$x_1, \ldots, x_{\ell}$ be an enumeration of 
the variables in~$\free{\phi}$. 
We define an $( \ell |U(\calD)|)$-vertex $\ell$-uniform hypergraph $\uhg(\phi,\calD)$ as follows.
\begin{itemize}
    \item $V(\uhg(\phi,\calD)) = \bigcup_{i=1}^\ell U_i(\calD)$
    \item $E(\uhg(\phi,\calD)) =\{\{(v_1,1),\dots,(v_\ell,\ell)\} \mid \exists \tau \in \ans{\phi, \calD}~ \forall i \in[\ell]: \tau(x_i)=v_i \}$
\end{itemize}
\end{defn}

\begin{obs}\label{obs:hyperGconstruction}
 Given an ECQ $\phi$ and a database $\calD$ with $\sig(\phi) \subseteq \sig(\calD)$,  the hyperedges of   $\uhg(\phi, \calD)$ 
 are in bijection with the elements of
 $\ans{\phi, \calD}$.
\end{obs}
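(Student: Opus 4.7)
The plan is to exhibit an explicit bijection $\Psi\colon \ans{\phi,\calD} \to E(\uhg(\phi,\calD))$ and verify its three defining properties (well-definedness, injectivity, surjectivity). Writing $\free{\phi} = \{x_1, \ldots, x_\ell\}$ in the fixed enumeration used in Definition~\ref{def:H}, I will define
\[
\Psi(\tau) = \{(\tau(x_1),1),\, (\tau(x_2),2),\, \ldots,\, (\tau(x_\ell),\ell)\}
\]
for each $\tau \in \ans{\phi,\calD}$. The crucial structural fact I will lean on is that the tag coordinate $i$ makes the $\ell$ listed pairs pairwise distinct regardless of the values $\tau(x_i)$, so $\Psi(\tau)$ really is a set of size $\ell$ (not a multiset), and the $i$th free variable's image can be recovered as the unique element of $\Psi(\tau)$ whose second coordinate is $i$.

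First I will verify that $\Psi$ is well-defined, i.e.\ that $\Psi(\tau) \in E(\uhg(\phi,\calD))$ for every answer $\tau$. This is immediate from Definition~\ref{def:H}: taking $v_i := \tau(x_i)$ for $i \in [\ell]$, the set $\{(v_1,1),\ldots,(v_\ell,\ell)\}$ is exactly $\Psi(\tau)$, and $\tau$ itself witnesses the existential clause $\exists \tau \in \ans{\phi,\calD}\ \forall i\ \tau(x_i) = v_i$. For injectivity, suppose $\Psi(\tau) = \Psi(\tau')$ for answers $\tau,\tau'$. Because the second coordinates of the pairs in each side are precisely $\{1,\ldots,\ell\}$ with each $i$ appearing exactly once, equality of the two sets forces $(\tau(x_i),i) = (\tau'(x_i),i)$ for each $i \in [\ell]$; hence $\tau(x_i) = \tau'(x_i)$ for all $i$, and since $\tau,\tau'$ have common domain $\free{\phi}$, we conclude $\tau = \tau'$.

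For surjectivity, fix any $e \in E(\uhg(\phi,\calD))$. By Definition~\ref{def:H}, $e = \{(v_1,1),\ldots,(v_\ell,\ell)\}$ for some tuple $(v_1,\ldots,v_\ell) \in U(\calD)^\ell$ and there exists an answer $\tau \in \ans{\phi,\calD}$ with $\tau(x_i) = v_i$ for every $i \in [\ell]$. Then $\Psi(\tau) = e$ by construction, establishing surjectivity. Combining the three properties gives the claimed bijection between $\ans{\phi,\calD}$ and $E(\uhg(\phi,\calD))$.

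I do not anticipate any serious obstacle: the statement is essentially bookkeeping, and the only subtlety to flag is the use of the tag coordinate to guarantee that the $\ell$-element multiset $\{(v_1,1),\ldots,(v_\ell,\ell)\}$ is automatically a set of cardinality $\ell$, which is what allows one to recover the tuple $(v_1,\ldots,v_\ell)$ (and hence $\tau$) unambiguously from the hyperedge. I expect the entire argument to occupy only a few lines in the final write-up.
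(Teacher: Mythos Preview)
Your proposal is correct. The paper does not give an explicit proof of this observation at all---it treats the bijection as immediate from Definition~\ref{def:l-uniformHypergraph}---and your argument simply spells out the obvious map $\tau \mapsto \{(\tau(x_1),1),\ldots,(\tau(x_\ell),\ell)\}$ that the definition is designed to encode, so there is no meaningful difference in approach.
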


By Observation~\ref{obs:hyperGconstruction}, the problem of approximating  $|\ans{\phi,\calD}|$ reduces immediately to approximating the number of hyperedges of $\uhg(\phi,\calD)$.
The latter can be achieved by Theorem~\ref{thm:colreduction} as long as we can (efficiently) simulate the oracle to evaluate ${\EdgeFree}(\uhg(\phi,\calD)[V_1,\ldots,V_\ell])$ for any $\ell$-partite subset $(V_1,\ldots,V_\ell)$ of $V(\uhg(\phi,\calD))$. 

We will show later that the most important
case is where, for each $i\in [\ell]$,
  $V_i \subseteq U_i(\calD)$.
It turns out that, in this case,
the evaluation of the predicate
${\EdgeFree}(\uhg(\phi,\calD)[V_1,\ldots,V_\ell])$   can be reduced to deciding the existence of a homomorphism between two structures which, intuitively, can be viewed as coloured versions of $\calA(\phi)$ and $\calB(\phi, \calD)$. 
We define these coloured versions in Definitions~\ref{def:hatH} and~\ref{def:hatG} respectively.
The colouring arises in Definition~\ref{def:hatG} 
in the following manner.
Let~$r$ and~$b$ be two colours. To handle disequalities, 
we introduce 
a collection of colouring functions $\boldf = \{f_\eta\}$,
where for each  $\eta=\{x_i,x_j\} \in \Delta(\phi)$,
$f_\eta$ is a function $f_\eta: U(\calD) \mapsto \{r,b\}$.

\begin{defn}[$\widehat{\calA}(\phi)$]\label{def:hatH}
Let $\phi$ be an ECQ. Let $\ell = |\free{\phi}|$ and
$k = |\vars{\phi}|-\ell$. Let $\{x_1,\ldots,x_{\ell+k}\}$
be an enumeration of the variables in~$\vars{\phi}$.
Recall the definition of $\calA(\phi)$ from Definition~\ref{def:calH}.
The structure $\widehat{\calA}(\phi)$ is a modification of $\calA(\phi)$ defined as follows.
\begin{itemize}
    \item $U(\widehat{\calA}(\phi))=\vars{\phi} = U(\calA(\phi))$.
    \item For each $R\in \sig(\calA(\phi))$, $R^{\widehat{\calA}(\phi)} = R^{\calA(\phi)}$.
\item For each variable $x_i$ of~$\phi$,
 $\widehat{\calA}(\phi)$
 has a   unary relation $P_i^{\widehat{\calA}(\phi)} := \{x_i\}$.
\item For each $\eta = \{x_i,x_j\}$ in~$\Delta(\phi)$ with $i<j$,
$\widehat{\calA}(\phi)$ has unary relations $R_\eta^{\widehat{\calA}(\phi)}:=\{x_i\}$ and $B_\eta^{\widehat{\calA}(\phi)} :=\{x_j\}$ .
 \end{itemize}
\end{defn}

\begin{obs}\label{obs:sizeOfhatA}\label{obs:A3}
$\widehat{\calA}(\phi)$ is obtained from $\calA(\phi)$ by adding 
$|\vars{\phi}| + 2 |\Delta(\phi)|$ unary relations. 
Since there are at most $
\binom{
|\vars{\phi}|}{2}$ disequalities in $\Delta(\phi)$, 
at most $|\vars{\phi}|^2$
unary relations are added in all.
Thus, 
$\structsize{\widehat\calA(\phi)} \leq
\structsize{\calA(\phi)} + 
2   | \vars{\phi}|^2$.
By Observation~\ref{obs:A1},
$\structsize{\widehat\calA(\phi)}  \leq 
 3 \structsize{\phi}  
 + 
2   | \vars{\phi}|^2
\leq 5
\structsize{\phi}^2$.
\end{obs}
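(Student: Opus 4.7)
The plan is to account for exactly how $\widehat{\calA}(\phi)$ differs from $\calA(\phi)$ (only a list of singleton unary relations is added), count these new relations, and convert the count into a bound on $\structsize{\widehat{\calA}(\phi)}$ by applying the definition of $\structsize{\cdot}$ term by term, finally invoking Observation~\ref{obs:A1} for $\structsize{\calA(\phi)}$.

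First I would consult Definition~\ref{def:hatH} and observe that $\widehat{\calA}(\phi)$ has the same universe and the same relations $R^{\calA(\phi)}$ as $\calA(\phi)$, with the only additions being (i) one unary relation $P_i^{\widehat{\calA}(\phi)} = \{x_i\}$ for each variable $x_i \in \vars{\phi}$ and (ii) two unary relations $R_\eta^{\widehat{\calA}(\phi)}, B_\eta^{\widehat{\calA}(\phi)}$ for each $\eta \in \Delta(\phi)$, each of which is a singleton. Summing these gives exactly $|\vars{\phi}| + 2|\Delta(\phi)|$ newly introduced relation symbols, which is the first assertion of the observation.

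Next I would bound $|\Delta(\phi)|$. By definition $\Delta(\phi) \subseteq \binom{\vars{\phi}}{2}$, so $|\Delta(\phi)| \le \binom{|\vars{\phi}|}{2}$ and hence the total number of added relations satisfies
\[
|\vars{\phi}| + 2|\Delta(\phi)| \le |\vars{\phi}| + |\vars{\phi}|(|\vars{\phi}|-1) = |\vars{\phi}|^2,
\]
giving the second assertion.

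The key remaining step is to translate this count of new relations into an increase in $\structsize{\cdot}$. Recall that $\structsize{\calA} = |\sig(\calA)| + |U(\calA)| + \sum_{R \in \sig(\calA)}|R^{\calA}|\cdot\arity(R)$. Each newly added relation is a singleton unary relation, so it contributes $1$ to $|\sig(\cdot)|$ and $1\cdot 1 = 1$ to the sum $\sum |R| \cdot \arity(R)$, for a combined contribution of $2$; the universe is unchanged. Hence the size increases by at most $2(|\vars{\phi}| + 2|\Delta(\phi)|) \le 2|\vars{\phi}|^2$, establishing $\structsize{\widehat{\calA}(\phi)} \le \structsize{\calA(\phi)} + 2|\vars{\phi}|^2$. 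Finally, I would apply Observation~\ref{obs:A1} ($\structsize{\calA(\phi)} \le 3\structsize{\phi}$) together with the trivial bound $|\vars{\phi}| \le \structsize{\phi}$ that follows directly from the definition of $\structsize{\phi}$ as the sum of $|\vars{\phi}|$ and arities of atoms, to conclude $\structsize{\widehat{\calA}(\phi)} \le 3\structsize{\phi} + 2\structsize{\phi}^2 \le 5\structsize{\phi}^2$, using $\structsize{\phi} \ge 1$. There is no real obstacle here — the only thing to be careful about is not double-counting the contribution of each added unary relation to $|\sig|$ versus $\sum |R|\cdot\arity(R)$, hence the factor of $2$ in front of $|\vars{\phi}|^2$.
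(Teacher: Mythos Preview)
Your proposal is correct and follows exactly the same approach as the paper: the observation's proof is already embedded in its statement, and you have simply unpacked each step (counting the added unary relations, bounding $|\Delta(\phi)|$, translating each added singleton unary relation into a contribution of $2$ to $\structsize{\cdot}$, and applying Observation~\ref{obs:A1} together with $|\vars{\phi}|\le\structsize{\phi}$).
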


\begin{defn}[$\widehat{\calB}(\phi,\calD,V_1,\ldots,V_\ell,\boldf)$]\label{def:hatG}\label{def:BB}
Let $\phi$ be an ECQ and let $\calD$ be a database with $\sig(\phi) \subseteq \sig(\calD)$.  
Let $\ell = |\free{\phi}|$ and
$k = |\vars{\phi}|-\ell$. 
 Let $(V_1,\ldots,V_\ell)$ be
an $\ell$-partite subset of $V(\uhg(\phi,\calD))
= \bigcup_{i=1}^\ell U_i(\calD)$
(from Definition~\ref{def:H},
recall  that $U_i(\calD)= U(\calD) \times \{i\}$ from 
Definition~\ref{def:Ui}).
Let $\{x_1,\ldots,x_\ell\}$ be an enumeration of the variables
in~$\free{\phi}$ and
let $\{x_1,\ldots,x_{\ell+k}\}$
be an enumeration of the variables in~$\vars{\phi}$.
Recall the definition of $\calB(\phi,\calD)$ from Definition~\ref{def:calG}.
The structure $\widehat{\calB}(\phi,\calD,V_1,\ldots,V_\ell,
\boldf)$ is a modification of $\calB(\phi,\calD)$ defined as follows.
To avoid notational clutter, we will just write $\widehat{\calB}$, rather than $\widehat{\calB}(\phi,\calB,V_1,\ldots,V_\ell,\boldf)$.
\begin{itemize}
\item For each $i\in\{1,\ldots,\ell\}$ let $S_i = V_i $ and for each $i\in \{\ell+1,\ldots,\ell+k\}$ let $S_i =  U_i(\calD)$. Define the universe of~$\widehat\calB$ as
$U(\widehat{\calB})=\bigcup_{i=1}^{\ell+k} S_i$.

\item 
  For each arity-$a$ relation symbol $R\in \sig(\calB(\phi,
 \calD))$,
  $\widehat{\calB}$ has the arity-$a$
  relation 
  $$R^{\widehat{\calB}}  := \{((w_1,i_1),\dots,(w_a,i_a)) \in U(\widehat{\calB})^a \mid (w_1,\dots,w_a)\in R^{\calB(\phi,
 \calD)} \}.$$ 
 Here $i_1,\ldots,i_a$ are any values such that
 $((w_1,i_1),\ldots,(w_a,i_a)) \in U(\widehat{\calB})^a$.
\item For each variable $x_i$ of~$\phi$, 
$\widehat{\calB}$
has a unary  relation  $P_i^{\widehat{\calB}} := S_i$. 
\item For each $\eta \in \Delta(\phi)$, we add 
to $\widehat{\calB}$ the unary relation 
 $ R_\eta^{\widehat{\calB}} := \{(x_i, j) \in U(\widehat{\calB}) \mid f_\eta(x_i)=r\}$
 and the unary relation
$ 
 B_\eta^{\widehat{\calB}} := \{(x_i, j) \in U(\widehat{\calB}) \mid f_\eta(x_i)=b\}$.  Here, $j$ is any value such that 
 $(x_i, j) \in U(\widehat{\calB}$.
\end{itemize}
\end{defn}

\begin{obs}\label{obs:sizeOfhatB}
To avoid notational clutter, let $\widehat{\calB}=\widehat{\calB}(\phi,\calD,V_1,\ldots,V_\ell,\boldf)$.
 Note that $|U(\widehat{\calB})|\leq
 \vars{\phi} \cdot|U(\calD)|
 =
 |\vars{\phi}|\cdot|U(\calB(\phi, \calD))|$.
 Let $a = \arity(\sig(\phi)) = \arity(\sig(\calB(\phi,\calD)))$.
 Assume that $\phi$ is not trivial, so $a\geq 1$.
 For  each relation $R^{\calB(\phi, \calD)}$ of $\calB(\phi, \calD)$,
 $|R^{\widehat{\calB}}|\leq |\vars{\phi}|^a\cdot|R^{\calB(\phi, \calD)}|  $. Additionally, we add  
 at most $|\vars{\phi}|^2$
unary relations  to $\widehat{\calB}$, each of size at most 
$|U(\widehat{\calB})|\leq
 |\vars{\phi}|\cdot|U(\calD)|$. Therefore,
\begin{align*}
\structsize{\widehat{\calB} } 
&\leq
|\sig(\widehat{\calB})| + |U(\widehat{\calB})| + 
|\vars{\phi}|^a \sum_{R \in \sig(\calB(\phi,\calD))} |R^{\calB(\phi,\calD)}| \arity(R)
+ |\vars{\phi}|^2
|\vars{\phi}|\cdot|U(\calD)|
\\
&\leq
 {| \vars{\phi}|}^{a}\cdot \structsize{\calB(\phi, \calD)}
+ {| \vars{\phi}|}^2
+ {|\vars{\phi}|}^3  |U(\calD)|.
\end{align*}
If $\phi$ has $\nu$ negated predicates then,
Observation~\ref{obs:A2} guarantees that
$$\structsize{\calB(\phi,\calD)}\le  
2\structsize{\phi}(
 \structsize{\calD} + \nu   {|U(\calD)|}^a)
$$ so,  plugging this in,
$\structsize{\widehat{\calB}(\phi,\calD,V_1,\ldots,V_\ell,\boldf)}
 \leq
    \exp(\bigO({\structsize{\phi}}^2)) \cdot
    (    \structsize{\calD} +
    { \nu |U(D)|^{a}}  
    )    $.

\end{obs}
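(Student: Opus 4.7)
The statement is purely a size-accounting claim about $\widehat{\calB}$ as defined in Definition~\ref{def:hatG}, so the plan is to walk through the definition $\structsize{\widehat{\calB}}=|\sig(\widehat{\calB})|+|U(\widehat{\calB})|+\sum_{R}|R^{\widehat{\calB}}|\cdot\arity(R)$ component by component, track how each ingredient inflates relative to $\calB(\phi,\calD)$, and then substitute the bound on $\structsize{\calB(\phi,\calD)}$ supplied by Observation~\ref{obs:A2}. There is no algorithmic content.

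First I would bound the universe. By Definition~\ref{def:hatG}, $U(\widehat{\calB})=\bigcup_{i=1}^{\ell+k}S_i$ with each $S_i\subseteq U_i(\calD)=U(\calD)\times\{i\}$. Since $\ell+k=|\vars{\phi}|$ and $|U_i(\calD)|=|U(\calD)|$, this immediately gives $|U(\widehat{\calB})|\le|\vars{\phi}|\cdot|U(\calD)|$. Second, for each arity-$a$ relation symbol $R\in\sig(\calB(\phi,\calD))$, the relation $R^{\widehat{\calB}}$ consists of tuples $((w_1,i_1),\dots,(w_a,i_a))$ with $(w_1,\dots,w_a)\in R^{\calB(\phi,\calD)}$ and each index $i_j\in\{1,\dots,\ell+k\}$ chosen independently, so $|R^{\widehat{\calB}}|\le|\vars{\phi}|^a\cdot|R^{\calB(\phi,\calD)}|$. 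Third, the added unary relations are the $|\vars{\phi}|$ relations $P_i$ together with the $2|\Delta(\phi)|\le|\vars{\phi}|(|\vars{\phi}|-1)$ relations $\{R_\eta,B_\eta\}_{\eta\in\Delta(\phi)}$, for a total of at most $|\vars{\phi}|^2$ new unary relations, each of size at most $|U(\widehat{\calB})|\le|\vars{\phi}|\cdot|U(\calD)|$.

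Summing these three contributions, and noting that $|\sig(\widehat{\calB})|$ is itself $\bigO(|\vars{\phi}|^2)$ and hence dominated by the other terms, yields the first displayed inequality. To obtain the final bound, I would substitute Observation~\ref{obs:A2} in the form $\structsize{\calB(\phi,\calD)}\le 2\structsize{\phi}(\structsize{\calD}+\nu|U(\calD)|^a)$; the dominant term becomes $|\vars{\phi}|^a\cdot 2\structsize{\phi}(\structsize{\calD}+\nu|U(\calD)|^a)$, while the additive $|\vars{\phi}|^2$ and $|\vars{\phi}|^3|U(\calD)|$ terms are swallowed.

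The only slightly delicate point, and the sole place where the $\exp(\bigO(\structsize{\phi}^2))$ factor enters, is bounding $|\vars{\phi}|^a$. Here I would use that both $|\vars{\phi}|\le\structsize{\phi}$ and $a\le\structsize{\phi}$ follow directly from the definition of $\structsize{\phi}$ (which sums $|\vars{\phi}|$ with the arities of all atoms), so $|\vars{\phi}|^a\le\structsize{\phi}^{\structsize{\phi}}=\exp(\structsize{\phi}\log\structsize{\phi})=\exp(\bigO(\structsize{\phi}^2))$. With this absorbed, the bookkeeping produces exactly the stated $\exp(\bigO(\structsize{\phi}^2))\cdot(\structsize{\calD}+\nu|U(\calD)|^a)$. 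This is the main (mild) obstacle in what is otherwise a mechanical calculation.
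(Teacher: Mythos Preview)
Your proposal is correct and follows essentially the same approach as the paper: the observation in the paper is self-justifying, and your proof simply unpacks the same component-by-component accounting (universe, lifted relations, added unary relations) and then substitutes Observation~\ref{obs:A2}, with the bound $|\vars{\phi}|^a\le\exp(\bigO(\structsize{\phi}^2))$ being the only step requiring a moment's thought.
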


Note that $\widehat{\calA}(\phi)$ and 
 $\widehat{\calB}(\phi,\calD,V_1,\ldots,V_\ell,\boldf)$
 have the same signature by construction.
Recall that our main goal is to simulate the oracle for
${\EdgeFree}(\uhg(\phi,\calD)[V_1,\ldots,V_\ell])$,
and that we have stated (but not yet proved) that the most important case is when each $V_i \subseteq U_i(\calD)$.
The following lemma establishes certain properties of
the structures~$\widehat{\calA}(\phi)$ 
and $\widehat{\calB}(\phi,\calD,V_1,\ldots,V_\ell,\boldf)$ that 
apply in this case.

 \newcommand{\statelemmaoracleconstruction} 
 {\sl Let $\phi$ be an ECQ and let $\calD$ be a database with 
$\sig(\phi) \subseteq \sig(\calD)$. 
Let $\ell = |\free{\phi}|$. 
 Let $(V_1,\ldots,V_\ell)$ be
an $\ell$-partite subset of $V(\uhg(\phi,\calD))$.
Suppose that for each $i\in [\ell]$,
  $V_i\subseteq U_i(\calD)$.
  Then the 
hypergraph
 $ \uhg(\phi,\calD)[V_1,\ldots,V_\ell]$ has a hyperedge
 if and only if there is 
 a collection~$\boldf$ of colouring functions such that there is
 a homomorphism from
$\widehat{\calA}(\phi)$
to 
$\widehat{\calB}(\phi,\calD,V_1,\ldots,V_\ell,\boldf)$.}

\begin{lem} \label{lem:oracleconstruction}
\statelemmaoracleconstruction
\end{lem}
\begin{proof} 
 
 Let  
$k = |\vars{\phi}|-\ell$.  
Let $\{x_1,\ldots,x_\ell\}$ be an enumeration of the variables
in~$\free{\phi}$ and
let $\{x_1,\ldots,x_{\ell+k}\}$
be an enumeration of the variables in~$\vars{\phi}$.
 We consider both directions.
 
 \begin{itemize}
     \item 
 If $\uhg(\phi,\calD)[V_1,\ldots,V_\ell]$  
 has any hyperedges then, by Definition~\ref{def:l-uniformHypergraph}, it must contain a hyperedge of the form $\{(v_1,1),\dots,(v_\ell,\ell)\}$, where $(v_i,i)\in V_i$ for all~$i\in[\ell]$,
 and there is an assignment
 $ \tau \in \ans{\phi, \calD}$ such that  $\forall i \in[\ell]$ we have $\tau(x_i)=v_i$. 
 Consequently 
 by~\eqref{equ:Ans}
 there is a homomorphism~$h$ from~$\calA(\phi)$ to~$\calB(\phi, \calD)$ that extends~$\tau$ and satisfies all disequalities in~$\Delta(\phi)$, that is, $h(x_i)\neq h(x_j)$ for all $ \{x_i,x_j\}\in \Delta(\phi)$.
 
 Now fix such a homomorphism~$h$ and choose 
 the collection~$\boldf$ of colouring functions
 so that for each $\eta=\{x_i,x_j\} \in \Delta(\phi)$ with  $i<j$, 
 $f_\eta$ maps  $h(x_i)$ to~$r$ and $h(x_j)$ to~$b$, which is possible since $h(x_i)\neq h(x_j)$.
 The values of $f_\eta$ on variables other than~$x_i$ and~$x_j$ are irrelevant and can be chosen arbitrarily.
  
 Using $h$, we construct a homomorphism~$\hat{h}$
from $\widehat{\calA}(\phi)$ to $\widehat{\calB}=\widehat{\calB}(\phi,\calD,V_1,\dots,V_\ell,\boldf)$, proceeding as follows.  
The elements of $U(\widehat{\calA}(\phi))$ are the variables $x_i$ of~$\phi$.
For each $x_i$, take $\hat{h}(x_i) = (h(x_i),i)$.
The fact that the non-unary 
relations are preserved by $\hat{h}$ follows from the fact that~$h$ is a homomorphism and from the definition of the predicates   $R^{\widehat{B}}$. 
To preserve $P_i^{\widehat{\calA}(\phi)}$, $\hat{h}$ must map 
$x_i$ to an element in~$S_i$. In particular, for each free
variable~$x_i$, we have $\hat{h}(x_i)=(h(x_i),i) = (v_i,i) \in V_i$, which was noted above.

Now consider
 $\eta = \{x_i,x_j\}\in \Delta(\phi)$  with $i<j$.
 To preserve 
 $R_\eta^{\widehat{\calA}(\phi)}$ and
  $B_\eta^{\widehat{\calA}(\phi)}$, it must be the case that $f_\eta(h(x_i)) = r$
  and $f_\eta(h(x_j))=b$, which, however, is satisfied for our choice of $f$. Thus, $\hat{h}$ is a homomorphism as desired. 

\item  
In the other direction, suppose for some 
collection~$\boldf$ of colouring functions
   that there is a homomorphism $\hat{h}$
from
 $\widehat{\calA}(\phi)$ to $\widehat{\calB}=\widehat{\calB}(\phi,\calD,V_1,\dots,V_\ell,\boldf)$. 
 The relation~$P_i$ ensures  that $\hat{h}$ maps each $x_i$ to~$S_i$.
 So for each $i\in[\ell]$, $\hat{h}(x_i) = (w_i,i)\in V_i$ for some~$w_i\in U(\calD)$. 
 
 We construct a homomorphism~$h$ 
 from~$\calA(\phi)$ to~$\calB(\phi,\calD)$ 
 by setting $h(x_i) = w_i$ for all variables $x_i\in \vars{\phi}$.
 The 
 relations from the signature of~$\calA(\phi)$   ensure that~$h$ is a homomorphism from~$\calA(\phi)$ to~$\calB(\phi,\calD)$. 
 Let $\tau = \proj(h,\free{\phi})$.
 We will show that 
 $ \tau \in   \ans{\phi, \calD} $.
 By Definition~\ref{def:l-uniformHypergraph}, that implies that
   $\{(\tau(x_1),1),\dots,(\tau(x_\ell),\ell)\}$ is a hyperedge of
  $\uhg(\phi,\calD)[V_1,\ldots,V_\ell]$, completing the proof.
 
 So it remains   to prove that 
  $ \tau \in   \ans{\phi, \calD} $.
  By~\eqref{equ:Ans}
  it suffices to show that  for all $\{x_i,x_j\} \in \Delta(\phi)$, we have $h(x_i)\neq h(x_j)$.
  To see this, 
  consider $\{x_i,x_j\} \in \Delta(\phi)$ and
  suppose that $i<j$.
  Then the relation $R_\eta$ ensures
  that $h(x_i)$ is in   $R_\eta^{\widehat{\calB}} $  
  and that $h(x_j)$ is in 
  $B_\eta^{\widehat{\calB}} $.
  Since these two relations are disjoint,  we find that $h(x_i)\neq h(x_j)$, as required.

\end{itemize}

\end{proof}

Given   Lemma~\ref{lem:oracleconstruction},
we will be able to
use colour-coding to 
simulate the  oracle  for
${\EdgeFree}(\uhg(\phi,\calD)[V_1,\ldots,V_\ell])$ using
an oracle for the decision homomorphism problem. 
Colour-coding is common in parameterised algorithms
and our application is similar to the approach that has
been used in the decision setting   by Papadimitriou and Yannakakis~\cite{Papadimitriou99:JCSS} and Koutris et al.~\cite{Koutris17:TCS}. 
Using this, we
are now able to prove Lemma~\ref{lem:interface}, which we restate here for convenience.

\begin{leminterface}
    \stateleminterface
\end{leminterface}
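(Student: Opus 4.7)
The plan is to apply Theorem~\ref{thm:colreduction} to the $\ell$-uniform hypergraph $\uhg(\phi,\calD)$, whose hyperedges, by Observation~\ref{obs:hyperGconstruction}, are in bijection with $\ans{\phi,\calD}$. With $\ell = |\free{\phi}| \leq \structsize{\phi}$ and $N = \ell |U(\calD)|$, I would run that theorem with accuracy $(\epsilon,\delta/2)$, which returns the desired approximation after $T = \Theta(\log(2/\delta)\epsilon^{-2}\ell^{6\ell}(\log N)^{4\ell+7})$ queries to an $\EdgeFree$-oracle on $\ell$-partite subsets of $V(\uhg(\phi,\calD)) = \bigcup_{i=1}^{\ell}U_i(\calD)$. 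Before invoking Lemma~\ref{lem:oracleconstruction}, which only handles ``clean'' subsets in which $V_i \subseteq U_i(\calD)$ for every~$i$, I reduce to that case: since every hyperedge of $\uhg(\phi,\calD)$ uses exactly one vertex from each $U_j(\calD)$, the induced hypergraph $\uhg(\phi,\calD)[V_1,\dots,V_\ell]$ contains an edge if and only if, for some permutation $\pi$ of $[\ell]$, the clean sets $V_i^\pi := V_i \cap U_{\pi(i)}(\calD)$ satisfy that $\uhg(\phi,\calD)[V_1^\pi,\dots,V_\ell^\pi]$ contains an edge. Enumerating over all $\ell! \leq \exp(\bigO(\structsize{\phi}\log\structsize{\phi}))$ permutations thus multiplies the number of sub-queries by at most $\ell!$.

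For each clean sub-query, Lemma~\ref{lem:oracleconstruction} reduces $\EdgeFree$ to the existence of a collection $\boldf = \{f_\eta\}_{\eta\in\Delta(\phi)}$ of $\{r,b\}$-colourings of $U(\calD)$ for which $\decoracle(\widehat{\calA}(\phi),\widehat{\calB}(\phi,\calD,V_1^\pi,\dots,V_\ell^\pi,\boldf))$ accepts. I would search for such a $\boldf$ by colour-coding: sample $\boldf$ uniformly at random $M$ times, query the decision oracle on each sample, and report ``edge exists'' iff at least one query accepts. For any fixed witnessing solution $h$ and any $\eta=\{x_i,x_j\}\in\Delta(\phi)$ with $i<j$, the disequality forces $h(x_i)\neq h(x_j)$, so a uniform $f_\eta$ assigns $r$ to $h(x_i)$ and $b$ to $h(x_j)$ with probability exactly $1/4$, independently across different $\eta$. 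Hence a single random $\boldf$ witnesses $h$ with probability at least $4^{-|\Delta(\phi)|}$; choosing $M = \lceil 4^{|\Delta(\phi)|}\ln(4T\ell!/\delta)\rceil = \exp(\bigO(\structsize{\phi}^2))\cdot \bigO(\log(T/\delta))$ random samples makes each simulated $\EdgeFree$-call err with probability at most $\delta/(4T\ell!)$. A union bound over all $T\ell!$ clean sub-queries keeps the simulation failure below $\delta/4$, and combined with the $\delta/2$ slack from Theorem~\ref{thm:colreduction} the overall failure probability is at most $\delta$.

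Every oracle query has the promised form: $\widehat{\calA}(\phi)$ is obtained from $\calA(\phi)$ by adding unary relations only (Definition~\ref{def:hatH}) and satisfies $\structsize{\widehat{\calA}(\phi)} \leq 5\structsize{\phi}^2$ by Observation~\ref{obs:sizeOfhatA}, while $\widehat{\calB}$ has size at most $\exp(\bigO(\structsize{\phi}^2))\cdot(\structsize{\calD}+\nu|U(\calD)|^a)$ by Observation~\ref{obs:sizeOfhatB}, which dominates the per-query construction cost. Multiplying the $T\cdot\ell!\cdot M$ oracle calls by this per-call cost, and absorbing $\ell^{\bigO(\ell)}$ and $(\log N)^{\bigO(\ell)}$ into $\exp(\bigO(\structsize{\phi}^2))\cdot\poly(N)$ (using $\ell,|\Delta(\phi)|\leq \poly(\structsize{\phi})$), yields the claimed running time $\exp(\bigO(\structsize{\phi}^2))\cdot\poly(\log(1/\delta),\epsilon^{-1},\structsize{\calD},\nu|U(\calD)|^a)$. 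The main subtlety is composing three independent sources of randomness — the Dell--Lapinskas--Meeks framework, the permutation enumeration for dirty partite subsets, and the colour-coding for disequalities — so that no single error rate contaminates the final $(\epsilon,\delta)$ guarantee; the exponent $|\Delta(\phi)|$ in $M$ is precisely what forces the $\exp(\bigO(\structsize{\phi}^2))$ factor and, consistent with Observation~\ref{obs:LBnoFPRAS}, rules out upgrading the FPTRAS to an FPRAS even in bounded-treewidth cases.
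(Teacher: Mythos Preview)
Your proposal is correct and follows essentially the same route as the paper: apply Theorem~\ref{thm:colreduction} to $\uhg(\phi,\calD)$ with error budget $\delta/2$, reduce arbitrary $\ell$-partite subsets to ``clean'' ones via the $\ell!$ permutations, invoke Lemma~\ref{lem:oracleconstruction} on each clean subset, and simulate the search over colourings~$\boldf$ by $\Theta(4^{|\Delta(\phi)|}\log(T\ell!/\delta))$ independent uniform samples, with Observations~\ref{obs:sizeOfhatA} and~\ref{obs:sizeOfhatB} supplying the size bounds. The only place where the paper is slightly more explicit is the absorption of $(\log N)^{\bigO(\ell)}$ into $\exp(\bigO(\structsize{\phi}^2))\cdot\poly(N)$, which it handles by a case split on whether $\log N < e^{\ell}$; your one-line claim to that effect is correct but would benefit from this justification.
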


\begin{proof} 
Let $(\phi,\calD,\epsilon,\delta)$ be an input to the algorithm.
Let $\ell = |\free{\phi}|$ and
$k = |\vars{\phi}|-\ell$. 
Let $\{x_1,\ldots,x_\ell\}$ be an enumeration of the variables
in~$\free{\phi}$ and
let $\{x_1,\ldots,x_{\ell+k}\}$
be an enumeration of the variables in~$\vars{\phi}$.

To avoid notational clutter, we set $\calA:=\calA(\phi)$, $\calB:=\calB(\phi,\calD)$, $\Delta:=\Delta(\phi)$, 
$H:= \uhg(\phi,\calD)$ and $V := V(H)$.
Let $N = |V|$. Note that, by construction (definition~\ref{def:H}), $N = \ell |U(\calD)|$.

The goal of the algorithm is to provide an 
 $(\epsilon,\delta)$-approximation of  $|\ans{\phi,\calD}|$.
 By  Observation~\ref{obs:hyperGconstruction},
 this is the same as providing an
 $(\epsilon,\delta)$-approximation of $|E(H)|$.

\paragraph*{Simulating the oracle calls}
Our goal is to apply Theorem~\ref{thm:colreduction}
with $\varepsilon, \delta/2$, and $H$.
To do this, we must provide a simulation strategy for an oracle query ${\EdgeFree}(H[W_1,\ldots,W_\ell])$, where $(W_1,\ldots,W_\ell)$ is any $\ell$-partite subset of $V$. 
We must ensure that the probability that any simulation of the
oracle fails (during the whole run of the
algorithm from Theorem~\ref{thm:colreduction}) is at most~$\delta/2$.
To do this,
we provide a simulation strategy 
for an individual oracle 
call with failure probability at most~$\delta/(2T)$, where 
$T = \Theta(\log(1/\delta)\varepsilon^{-2}\ell^{6\ell}(\log N)^{4\ell+7})$
is the upper bound on the number of calls to ${\EdgeFree}$ in Theorem~\ref{thm:colreduction}.
Note (by a union bound) that this implies that the overall probability that
any oracle call fails is at most $\delta/2$.

We simulate an arbitrary oracle call ${\EdgeFree}(H[W_1,\ldots,W_\ell])$ by evaluating $\ell!$ more restricted oracle calls, each 
with failure probability at most $\delta/(2 T \ell!)$ (which gives the desired failure probability of at most~$\delta/(2T)$ by a union bound).
Each of the restricted oracle calls is of the form
${\EdgeFree}(H[V_1,\ldots,V_\ell])$
where each  $V_i \subseteq U_i(\calD)$.
The restricted oracle calls are chosen by considering each
  permutation $\pi$ of $[\ell]$ and  setting 
  $V'_i=W_i\cap U_{\pi(i)}(\calD) $. 
  Since (from Definition~\ref{def:H})
  each hyperedge of~$H$ contains exactly one element from each 
  $U_i(\calD)$ with $i\in[\ell]$,
  it is the case that
  $H[W_1,\ldots,W_\ell]$ has a hyperedge if 
  and only if there is a permutation~$\pi$ so that
  $H[V'_1,\ldots,V'_\ell]$ has a hyperedge.
  Equivalently, setting $V_i = V'_{\pi(i)}$, 
  $H[W_1,\ldots,W_\ell]$ has a hyperedge if 
  and only if there is a permutation~$\pi$ so that
  $H[V_1,\ldots,V_\ell]$ has a hyperedge.

 By
 Lemma~\ref{lem:oracleconstruction},
 simulating the restricted oracle call 
 ${\EdgeFree}(H[V_1,\ldots,V_\ell])$
 is equivalent to checking 
 whether 
 there is 
 a collection~$\boldf$ of colouring functions such that there is
 a homomorphism from
$\widehat\calA :=\widehat{\calA}(\phi)$
to 
$\widehat{\calB}(\phi,\calD,V_1,\ldots,V_\ell,\boldf)$. For a given 
collection~$\boldf$ of colouring functions, we use access to the oracle for $\decoracle$ to determine whether such a homomorphism exists.

We now consider how to 
simulate the restricted oracle call 
${\EdgeFree}(H[V_1,\ldots,V_\ell])$.
Let $Q = \lceil \log(2 T \ell!/\delta) \rceil 4^{|\Delta|}$.
We   
make $Q$ repetitions, each time choosing the collection~$\boldf$ of
colouring functions uniformly at random.
During a given repetition,   
we choose the collection~$\boldf$ as follows. For each $\eta \in \Delta$ and 
each $u\in U(\calD)$, with probability $1/2$, we set $f_\eta(u)=r$. Otherwise, we set $f_\eta(u)=b$. 
Having chosen~$\boldf$,  we query the oracle for $\decoracle$ to determine whether there is a homomorphism from $ \widehat{\calA}$
to 
$  \widehat{\calB}(\phi,\calD,V_1,\ldots,V_\ell,\boldf)$.
If, for any of the $Q$ choices of~$\boldf$, the oracle for $\decoracle$ 
finds a homomorphism, the simulation announces that 
$H[V_1,\ldots,V_\ell]$ has a hyperedge. Otherwise, 
it announces that $H[V_1,\ldots,V_\ell]$ has no hyperedge.

\paragraph*{Failure probability of the simulation}
We now bound the failure probability of the simulation.
If there is no~$\boldf$ such that there is a 
 homomorphism from $ \widehat{\calA}$
to 
$  \widehat{\calB}(\phi,\calD,V_1,\ldots,V_\ell,\boldf)$ then the simulation is correct.
Suppose instead that there is a collection~$\boldf'$ of colouring
functions such that there is a homomorphism~$h$ 
from  $ \widehat{\calA}$
to 
$  \widehat{\calB}(\phi,\calD,V_1,\ldots,V_\ell,\boldf')$.
When~$\boldf$ is chosen uniformly at random during one of the
repetitions, the probability that 
$h$ is a homomorphism from 
 $ \widehat{\calA}$
to 
$  \widehat{\calB}(\phi,\calD,V_1,\ldots,V_\ell,\boldf)$
is at least the probability that, for each 
$\eta=\{x_i,x_j\}\in \Delta$, we have
  $f_\eta(x_i)=
f'_\eta(x_i)$ and 
$f_\eta(x_j)=f'_\eta(x_j)$.
This probability is at least $4^{-|\Delta|}$.
Thus, the probability that
all $Q$ guesses fail is at most 
$(1-4^{-|\Delta|})^Q  
\leq \exp(-Q4^{-|\Delta|}) \leq \delta/(2 T \ell!)$, as required.

\paragraph*{Size of the constructed structures}
For a fixed $\ell$-partite subset $(V_1,\ldots,V_\ell)$ of~$V$ and collection~$\boldf$ of colouring functions, 
consider the structures
 $\widehat\calA$
and
$\widehat{\calB}(\phi,\calD,V_1,\ldots,V_\ell,\boldf)$.
It follows from
Observation~\ref{obs:A3} that  $\widehat{\calA}$ is obtained from $\calA$ by adding unary relations  
and that
$\structsize{\widehat\calA}  \leq 
   5
\structsize{\phi}^2$ (as required).
It is clear from Definitions~\ref{def:calH} and~\ref{def:hatH} that the time needed to construct~$\widehat{\calA}$ is linear in its size.
 Moreover, Observation~\ref{obs:sizeOfhatB}
 guarantees that
 $\structsize{\widehat{\calB}(\phi,\calD,V_1,\ldots,V_\ell,\boldf)}
 \leq
    \exp(\bigO({\structsize{\phi}}^2)) \cdot
    (    \structsize{\calD} +
    { \nu |U(D)|^{a}}  
    )  $. 
    It is clear from
 Definition~\ref{def:calG} and~\ref{def:hatG} 
 that the time needed to construct~$\widehat{\calB}$ is linear in its size.

\paragraph*{Runtime analysis}
First, we bound the number $X \leq T \ell! Q$ of oracle calls.
Plugging in the definition of~$Q$ and applying crude upper bounds, we have
$X = \bigO(T^2 (\ell!)^2 \log(1/\delta)     4^{|\Delta|} )$.
Plugging in the definition of~$T$ and pulling out $\log(1/\delta)$ 
and $\epsilon^{-1}$~factors,
we have
$$X = \poly(\log(1/\delta),\epsilon^{-1}) 
\bigO(  \ell^{12\ell} (\log N)^{8\ell+14} (\ell!)^2      4^{|\Delta|} )\,.$$
 Since $\ell^\ell$, $\ell!$  and $4^{|\Delta|}$
 are all at most $\exp(\bigO(\structsize{\varphi}^2))$
 we obtain 
$$X = \poly(\log(1/\delta),\epsilon^{-1}) \cdot
\exp(\bigO(\structsize{\varphi}^2))\cdot
\bigO(   (\log N)^{8\ell+14}       ).$$
If $\log N < e^{\ell} $ 
then final term can be subsumed by the $\exp(\bigO(\structsize{\varphi}^2))$ term.
Otherwise, the final term is 
$\bigO(N)= \bigO( \ell |U(\calD)|)$. To see this, note that
\[ (\log N)^{8\ell+14} = (e^{\ell})^{8 \log\log N}\cdot (\log N)^{14}\leq (\log N)^{8 \log \log N + 14} \in O(N)\]
Therefore,
 $$X = \poly(\log(1/\delta),\epsilon^{-1}) \cdot
\exp(\bigO(\structsize{\varphi}^2))\cdot
\bigO(  |U(\calD)|      ).$$

Second, consider the time needed to construct an oracle query 
$\decoracle(\widehat\calA,{\widehat\calB})$.
It is dominated by the time needed
to construct $\widehat{\calB}$, and the total running time is bounded from above by
 $$\poly(\log(1/\delta),\epsilon^{-1}) \cdot
\exp(\bigO(\structsize{\varphi}^2))\cdot
\bigO(  |U(\calD)|      )\cdot
 \exp(\bigO({\structsize{\phi}}^2)) \cdot
    (    \structsize{\calD} +
    { \nu |U(D)|^{a}}  
    ) ,$$
 which can easily be simplified to prove the lemma.\end{proof}

\section{FPTRAS for \#ECQ with bounded Treewidth and Arity}\label{sec:bd}

The goal of this section is to establish Theorem~\ref{thm:mainbounded}, i.e., to construct an FPTRAS for \#ECQ on classes of queries whose hypergraphs have bounded treewidth and arity. Thanks to Lemma~\ref{lem:interface}, it will be sufficient to rely on an efficient algorithm for the decision version of the homomorphism problem. For the case of bounded arity, we will use the algorithm due to Dalmau et al.\ \cite{Dalmau02:CP} (see \cite[Theorem~3.1]{Grohe07:jacm} for an explicit statement of the extension from graphs to structures). 

For each structure $\calA$, there is an associated \emph{hypergraph of} $\calA$, which we denote $H(\calA)$. Its vertices are $V(H(\calA))=U(\calA)$, and $H(\calA)$ has a hyperedge $e$ whenever there is a relation $R\in\sig(\calA)$ with $e\in R^\calA$. 
The treewidth of $\calA$ is the treewidth of its associated hypergraph.\footnote{Note that the hypergraph $H(\varphi)$ of the query~$\phi$ and the hypergraph of its associated structure $H(\calA(\varphi))$ are the same. Consequently, the treewidth of $\varphi$ and $\calA(\varphi)$ are equal.}

Given a class of structures $\calS$, we write $\decoracle(\calS)$ for the restriction of $\decoracle$ in which the input $(\calA,\calB)$ must satisfy $\calA\in\calS$.

\begin{thm}[\cite{Dalmau02:CP,Grohe07:jacm}]\label{thm:Wednesday}
Let $t$ and $a$ be positive integers. Let $\calS$ be a 
class of structures such that every structure in $\calS$ has treewidth at most $t$ and arity at most $a$. Then $\decoracle(\calS)$ is polynomial-time solvable.
\end{thm}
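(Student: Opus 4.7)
The plan is a standard dynamic programming on a tree decomposition of $\calA$, which is the textbook approach for bounded-treewidth constraint satisfaction. The treewidth bound $t$ controls the size of the DP tables, and the arity bound $a$ controls how many tuples must be checked when combining tables.

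First, using a fixed-parameter algorithm such as Bodlaender's, compute a tree decomposition $(T,\boldB)$ of $H(\calA)$ of width at most $t$ in time $f(t)\cdot|U(\calA)|$, which is polynomial since $t$ is fixed. Root $T$ arbitrarily. For every relation tuple $(a_1,\ldots,a_j)\in R^\calA$ with $R\in\sig(\calA)$, the set $\{a_1,\ldots,a_j\}$ forms a hyperedge of $H(\calA)$, so by the tree decomposition property some bag contains $a_1,\ldots,a_j$; fix one such \emph{home} node per tuple.

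For each node $s$ of $T$, maintain a table $M_s$ whose entries $M_s[f]$ range over assignments $f\colon B_s\to U(\calB)$. Let $V_s$ be the union of bags in the subtree rooted at $s$. Set $M_s[f]=1$ iff $f$ extends to an assignment $h\colon V_s\to U(\calB)$ such that for every tuple $(a_1,\ldots,a_j)\in R^\calA$ whose home lies in the subtree at $s$, we have $(h(a_1),\ldots,h(a_j))\in R^\calB$. Leaves are handled by enumerating all at most $|U(\calB)|^{t+1}$ candidate assignments and checking the tuples homed at $s$. An internal node joins each child's table, keeping only assignments that agree on the intersection $B_s\cap B_c$, and then filters by the tuples homed at $s$. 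The input admits a homomorphism iff some entry of the root table equals $1$.

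Each table has at most $|U(\calB)|^{t+1}$ entries, each join step takes $|U(\calB)|^{\bigO(t)}$ time, and each tuple of arity at most $a$ is checked in $\bigO(a)$ time. With $\bigO(|U(\calA)|)$ nodes in the decomposition, the total running time is $|U(\calA)|\cdot|U(\calB)|^{\bigO(t)}\cdot\poly(\structsize{\calB})$, which is polynomial for fixed $t$ and $a$. The only delicate point is the bookkeeping that assigns each relation tuple to exactly one home bag so that the recursion correctly tracks which constraints have already been accounted for without double-counting; working with a nice tree decomposition reduces this to a routine case analysis over introduce, forget, and join nodes.
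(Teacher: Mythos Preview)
The paper does not prove this theorem at all; it is quoted as a known result from~\cite{Dalmau02:CP,Grohe07:jacm}. The only information the paper adds is the one-line remark that the proof relies on computing, in polynomial time, a tree decomposition of $H(\calA)$ of width at most $4t+4$ (i.e., $\bigO(t)$). Your sketch is the standard, correct argument underlying those references: compute a bounded-width tree decomposition of the left-hand structure and do bottom-up dynamic programming over partial assignments on the bags.

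The only noteworthy difference is cosmetic: you invoke Bodlaender's exact linear-time algorithm to get width exactly~$t$, whereas the paper's remark points to an approximate decomposition of width $4t+4$. Either suffices, since $t$ is a fixed constant and only a polynomial bound is claimed. One small point worth tightening in your write-up: Bodlaender's algorithm is stated for graphs, so strictly speaking you should first form the primal (Gaifman) graph of $H(\calA)$ before applying it; and your claim that the arity bound~$a$ is what ``controls how many tuples must be checked'' is slightly misleading, since the bag-size bound $t+1$ already forces every hyperedge (set of variables in a tuple) to have at most $t+1$ elements. The role of the arity bound is really just to keep the encoding sizes and per-tuple lookup costs polynomial.
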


We remark that  Theorem~\ref{thm:Wednesday} is in fact a weaker version of the corresponding theorem in~\cite{Dalmau02:CP,Grohe07:jacm}, which also applies to classes $\calS$ in which only the \emph{homomorphic cores} of members of $\calS$ have bounded treewidth. We do not need the more general version here.
The proof of Theorem~\ref{thm:Wednesday} relies on the fact
that there is a polynomial-time algorithm that takes as input a structure $\calA \in \calS$ and produces a tree decomposition 
of $H(\calA)$ with treewidth at most~$4t+4$
(the exact treewidth of the tree decomposition that is produced is not important, but it is important that it is $\bigO(t)$).

We are now able to prove Theorem~\ref{thm:mainbounded}, which we restate here for convenience.

\begin{thmmainbounded}
   \statethmmainbounded
\end{thmmainbounded}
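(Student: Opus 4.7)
The plan is to combine Lemma~\ref{lem:interface}, which reduces the approximate counting of $|\ans{\phi,\calD}|$ to a sequence of $\decoracle$ oracle calls, with Theorem~\ref{thm:Wednesday}, which provides a polynomial-time algorithm for $\decoracle$ whenever the left-hand structure has bounded treewidth and arity. Concretely, given an input $(\phi,\calD,\varepsilon,\delta)$ with $\phi \in \Phi_C$, the algorithm would run the approximation scheme from Lemma~\ref{lem:interface} and resolve each of its oracle queries by invoking the algorithm from Theorem~\ref{thm:Wednesday}.

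The key point to verify is that every oracle query $\decoracle(\widehat{\calA},\widehat{\calB})$ issued by the algorithm from Lemma~\ref{lem:interface} lies in the class to which Theorem~\ref{thm:Wednesday} applies, that is, $\widehat{\calA}$ has treewidth at most $t$ and arity at most $a$. Lemma~\ref{lem:interface} guarantees that $\widehat{\calA}$ is obtained from $\calA(\phi)$ by adding only unary relations. By Definition~\ref{def:calH}, the hypergraph $H(\calA(\phi))$ coincides with $H(\phi)$, since the non-unary relations of $\calA(\phi)$ correspond exactly to the predicates and negated predicates of $\phi$. Hence $H(\widehat{\calA})$ differs from $H(\phi)$ only by the addition of singleton hyperedges, one per added unary relation. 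These singletons leave the arity at most $a$ (assuming $a \geq 1$) and do not affect the treewidth either: any tree decomposition of $H(\phi)$ already places every vertex into some bag, so each singleton hyperedge is trivially covered by that same bag. Thus $\tw{H(\widehat{\calA})} \leq \tw{H(\phi)} \leq t$, and Theorem~\ref{thm:Wednesday} resolves each oracle call in time polynomial in $\structsize{\widehat{\calA}} + \structsize{\widehat{\calB}}$.

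It remains to aggregate running times. Lemma~\ref{lem:interface} itself runs in time $\exp(\bigO(\structsize{\phi}^2)) \cdot \poly(\log(1/\delta),\varepsilon^{-1},\structsize{\calD},\nu |U(\calD)|^a)$, where $\nu$ denotes the number of negated predicates in $\phi$. Since $\nu \leq \structsize{\phi}$, $a$ is a constant, and $|U(\calD)|^a \leq \structsize{\calD}^a$, the factor $\nu |U(\calD)|^a$ is bounded by a product of $\poly(\structsize{\phi})$ and $\poly(\structsize{\calD})$, where the former is absorbed into the $\exp(\bigO(\structsize{\phi}^2))$ prefactor. Each oracle call contributes an additional $\poly(\structsize{\widehat{\calA}} + \structsize{\widehat{\calB}})$ factor; by the bound $\structsize{\widehat{\calA}} \leq 5\structsize{\phi}^2$ from Lemma~\ref{lem:interface} and the estimate for $\structsize{\widehat{\calB}}$ supplied by Observation~\ref{obs:sizeOfhatB}, this factor also fits within $\exp(\bigO(\structsize{\phi}^2)) \cdot \poly(\structsize{\calD})$. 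Multiplying these together yields the claimed overall running time of $\exp(\bigO(\structsize{\phi}^2)) \cdot \poly(\log(1/\delta),\varepsilon^{-1},\structsize{\calD})$.

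The main obstacle is conceptual rather than technical: verifying that the unary relations introduced by the colour-coding construction of Lemma~\ref{lem:interface} do not violate the bounded-treewidth hypothesis of Theorem~\ref{thm:Wednesday}. Once this singleton-hyperedge observation is made, the remainder of the proof is a straightforward bookkeeping exercise combining the two cited results.
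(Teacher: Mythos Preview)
Your proposal is correct and follows essentially the same route as the paper: apply Lemma~\ref{lem:interface}, answer each $\decoracle$ query via Theorem~\ref{thm:Wednesday}, and check that adding unary relations to $\calA(\phi)$ preserves the treewidth and arity bounds. The only place the paper is marginally more careful is in the treewidth argument: a variable of~$\phi$ may occur solely in disequalities and hence be an isolated vertex of $H(\phi)$, so an optimal tree decomposition of $H(\phi)$ need not place it in any bag under the paper's Definition~\ref{def:tw}; the paper therefore explicitly attaches a fresh singleton leaf for any such vertex, whereas you assert that every vertex is already covered---a harmless slip that is fixed exactly as the paper does.
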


\begin{proof} 
  Let $t$, $a$ and $C$ be as in the statement. Let $\calS$ be the class of all structures with treewidth at most~$t$ and arity at most~$a$. 

  By  Theorem~\ref{thm:Wednesday}, there is a polynomial-time 
  algorithm~\homalg\ for $\decoracle(\calS)$.
  
  The desired FPTRAS is now obtained by using Lemma~\ref{lem:interface} and simulating the oracle using the algorithm~\homalg.
  Given an input $(\phi,\calD)$ with $\phi \in \Phi_C$
  and $\sig(\phi) \subseteq \sig(\calD)$,
  we wish to approximate 
  $|\ans{\phi,\calD}|$.
  By the statement of Lemma~\ref{lem:interface},
  every oracle query $(\widehat{\calA},\widehat{\calB})$ 
  produced in the course of the approximation
  has the property that $\widehat{\calA}$ is obtained from $\calA(\varphi)$ by adding unary relations. 
  Our goal is to show how to simulate
  the oracle call $\decoracle(\widehat{\calA},\widehat{\calB})$.
  
  In order to use Algorithm~\homalg{} for the simulation,
  we need only show that $\widehat\calA$ has 
  treewidth at most~$t$ and arity at most~$a$.
  Equivalently, we need to show that $H(\widehat\calA)$ has
  treewidth at most~$t$ and arity at most~$a$.
  
  Since $H(\phi)\in C$, we know that $H(\calA(\phi))$ has
  treewidth at most~$t$ and arity at most~$a$.
  Since $H(\widehat\calA)$ is obtained from $H(\calA(\phi))$
  by adding size-$1$ hyperedges and $a\geq 1$, it is clear that the arity of
  $H(\widehat\calA)$ is at most~$a$.
  To see that the treewidth of 
  $H(\widehat\calA)$ is at most~$t$, consider
  any tree decomposition $(T,\boldB)$ of~$H(\calA(\phi))$ 
  with $\tw{T,\boldB}=t$.
  Construct a tree decomposition $(T',\boldB')$ of
  $H(\widehat\calA)$ as follows.
  Start by setting $(T',\boldB')=(T,\boldB)$. Then,
  for every $v\in \vars{\phi}$, check whether
  \begin{enumerate}
      \item $\{v\}$ is a hyperedge in  $E(H(\widehat\calA)) $, and
      \item $v$ is not in any bag $B_t$ of $T$
  \end{enumerate}
  If this occurs, add a new leaf $t'$ to $T'$  with $B_{t'} = \{v\}$.
  It is clear that $(T',\boldB')$ is a tree decomposition of 
  $H(\widehat\calA)$ with treewidth~$t$.

  Now by Lemma~\ref{lem:interface} and Theorem~\ref{thm:Wednesday}, 
  the total running time is bounded by
  \[\exp(\bigO(||\varphi||^2)) \cdot \poly(\log(1/\delta),\varepsilon^{-1},||\calD||) \cdot \poly(||\widehat{\calA}||+||\widehat{\calB}||) \,,\]
  where $(\widehat{\calA},\widehat{\calB})$ is the oracle query that maximises $||\widehat{\calA}||+||\widehat{\calB}||$.
  
  Since the size $||\widehat{\calA}||+||\widehat{\calB}||$ must be bounded by $ \exp(\bigO(||\varphi||^2)) \cdot \poly(\log(1/\delta),\varepsilon^{-1},||\calD||)$
  (by the bound on the running time in Lemma~\ref{lem:interface}), and $\poly(\exp(\bigO(||\varphi||^2)))=\exp(\bigO(||\varphi||^2))$ we can bound the overall running time by
  \[\exp(\bigO(||\varphi||^2)) \cdot \poly(\log(1/\delta),\varepsilon^{-1},||\calD||) \,,\]
  which concludes the proof.
\end{proof}

\section{Beyond Bounded Arity}\label{sec:beyond}

Theorem~\ref{thm:mainbounded} gives an FPTRAS for $\ECQ{\Phi_C}$ 
where $\Phi_C$ is a set of ECQs~$\phi$ for which $H(\phi)$ has bounded treewidth and bounded arity.  There are many notions of width that refine treewidth, but in the bounded arity case, it is also known that bounding these widths leads to the same class of queries as bounding treewidth.

When the arity restriction is relaxed, the different notions of width play a bigger role.
In this section we examine the difficulty of $\DCQ{\Phi_C}$ and 
$\CQ{\Phi_C}$  when $\Phi_C$ does not impose a bound on arity.
In Section~\ref{sec:sw} we extend Theorem~\ref{thm:mainbounded} to this setting (in the case of $\DCQ{\Phi_C}$), obtaining a stronger result in terms of   adaptive width. In Section~\ref{sec:fh} we improve the result of~\cite{ArenasNew} 
which gives an FPRAS for~$\CQ{\Phi_C}$
by bounding fractional hypertreewidth rather than hypertreewidth.

\subsection{FPTRAS for \#DCQ with bounded Adaptive Width}\label{sec:sw}

The goal of this section is to prove Theorem~\ref{thm:submod}.
We start by defining the hypergraph width
measures that we need, following the framework of~\cite{Adler06:phd}.

\begin{defn}[$f$-width] \label{def:fwidth}
Let $H$ be a hypergraph. For any function $f:2^{V(H)}\to {\mathbb R}_{\geq 0}$, the
\emph{$f$-width}
of a tree decomposition $(T, \boldB)$ of $H$ is the
maximum of $f(B_t)$ taken over all $t \in V(T)$. The \emph{$f$-width of $H$},
denoted by $f(H)$, is the minimum $f$-width over all tree decompositions of $H$. 
\end{defn}

It is clear from Definition~\ref{def:tw} that
 for $f(X) = |X|-1$, the $f$-width of $H$ is identical to the treewidth of $H$.
 
 \begin{defn}[fractional independent set, adaptive width] 
A \emph{fractional independent set} of a hypergraph $H$ is a function
$\mu:V(H)\to[0,1]$ such that for all $e\in E(H)$, we have
$\sum_{v\in e}\mu(v)\leq 1$. 
For $X\subseteq V(H)$, we define
$\mu(X)=\sum_{v\in X}\mu(v)$.
The \emph{adaptive width}~\cite{Marx11:tocs-tables} of
$H$, denoted by $\aw{H}$, is the supremum of $\mu$-width$(H)$, where
the supremum is taken over all fractional independent sets $\mu$ of $H$.
\end{defn}
 
 Recall from Lemma~\ref{lem:dom} that treewidth  is strongly dominated by adaptive width. For the sake of completeness, 
 we observe that this strict domination  requires unbounded arity.
 \begin{obs}\label{obs:tw_vs_aw}
 Let $H$ be a hypergraph with arity $a$. Then $\tw{H}\leq a \cdot \aw{H}-1$.
 \end{obs}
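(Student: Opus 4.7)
The plan is to exhibit a single fractional independent set whose $\mu$-width already forces the treewidth to be small. Specifically, I would take the uniform weighting $\mu(v) = 1/a$ for every $v \in V(H)$.

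First I would verify that $\mu$ is a valid fractional independent set. For every hyperedge $e \in E(H)$, since $|e| \leq a$, we have $\sum_{v \in e} \mu(v) = |e|/a \leq 1$, as required.

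Next, I would compute the $\mu$-width of an arbitrary tree decomposition $(T,\boldB)$ of $H$. Since $\mu$ is uniform, $\mu(B_t) = |B_t|/a$ for every $t \in V(T)$, so the $\mu$-width of $(T,\boldB)$ is exactly $\max_{t} |B_t|/a$. Minimising over all tree decompositions gives $\mu\text{-width}(H) = (\tw{H}+1)/a$ by Definition~\ref{def:tw}. By the definition of adaptive width as a supremum over fractional independent sets, $\aw{H} \geq \mu\text{-width}(H) = (\tw{H}+1)/a$, and rearranging yields $\tw{H} \leq a \cdot \aw{H} - 1$.

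I do not anticipate any real obstacle here; the only subtlety is being careful that the supremum in the definition of adaptive width dominates the value attained at any particular $\mu$, so choosing the correct witness $\mu$ (the uniform one, which is tight precisely when arity is bounded by $a$) immediately gives the inequality.
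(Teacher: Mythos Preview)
Your proposal is correct and takes essentially the same approach as the paper: both use the uniform fractional independent set $\mu(v)=1/a$ and read off the bound $|B_t|\le a\cdot\aw{H}$ from $\mu(B_t)=|B_t|/a\le\aw{H}$. The only difference is that the paper separately dispatches the degenerate case $a=0$ (no hyperedges), where $1/a$ is undefined; you may want to add a line covering this.
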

 \begin{proof}
   If $a=0$ then $H$  has no hyperedges and thus $\tw{H}= -1 = a \cdot \aw{H}-1$. Otherwise, set $\mu(v):= 1/a$ and observe that $\mu$ is a fractional independent set. Hence there exists a tree decomposition $(T, \boldB)$ of $\mu$-width$(H)$ at most $\aw{H}$, that is, for each $t\in V(T)$ we have $\mu(B_t) \leq \aw{H}$. Since $\mu(B_t)= |B_t|/a$, we conclude that $|B_t| \leq a \cdot \aw{H}$ and thus the treewidth is bounded by $a \cdot \aw{H}-1$. 
 \end{proof}
 
 Theorem~\ref{thm:submod} improves Theorem~\ref{thm:mainbounded} for the special case 
where the queries are DCQs.
In particular, it gives an FPTRAS for 
$\DCQ{\Phi_C}$
 for every class~$C$ of hypergraphs with bounded adaptive width.
 The FPTRAS uses Lemma~\ref{lem:interface}.
 In order to use this lemma, we first 
 show that adding 
 unary relations to a structure cannot increase its adaptive width
 in any harmful way.

\begin{lem}\label{lem:unarysubmod}
Let $\calA$ be a structure.  
Let $\widehat\calA$ be a structure with universe $U(\widehat\calA)=U(\calA)$ and signature $\sig(\widehat\calA) = \sig(\calA)\cup \rho$ 
where $\rho$ is a set of arity-$1$ relation symbols.
Suppose that,  for each $R\in \sig(\calA)$, we have that $R^{\widehat\calA}= R^{\calA}$.
Then $\aw{\widehat\calA} \leq \max\{\aw{\calA},1\}$.
\end{lem}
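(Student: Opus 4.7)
The approach is to analyse the hypergraphs $H(\calA)$ and $H(\widehat\calA)$ directly, since $\aw{\calA}$ is defined through the associated hypergraph. Because every relation symbol in $\rho$ has arity~$1$, each hyperedge of $H(\widehat\calA)$ that is not already a hyperedge of $H(\calA)$ is a singleton $\{v\}$ for some $v\in U(\calA) = U(\widehat\calA)$. So $H(\widehat\calA)$ has the same vertex set as $H(\calA)$ and differs only in possibly containing additional size-$1$ hyperedges.

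The first observation I will record is that adding singleton hyperedges does not change the collection of fractional independent sets: the constraint $\sum_{v\in e}\mu(v)\le 1$ imposed by a singleton $\{v\}$ reduces to $\mu(v)\le 1$, which is built into the definition. Thus a function $\mu$ is a fractional independent set of $H(\widehat\calA)$ if and only if it is one of $H(\calA)$.

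The main step is a tree-decomposition conversion in the spirit of the leaf-attaching construction from the proof of Theorem~\ref{thm:mainbounded}. Starting from any tree decomposition $(T,\boldB)$ of $H(\calA)$, I will produce one $(T',\boldB')$ of $H(\widehat\calA)$ by appending, for each vertex $v$ that occurs in some new singleton hyperedge but in no bag of $(T,\boldB)$, a fresh leaf $t_v$ with $B_{t_v}=\{v\}$. This is a valid tree decomposition of $H(\widehat\calA)$: every new singleton $\{v\}$ is covered, and each vertex still induces a connected subtree since at most one extra leaf is added per vertex, and only for vertices previously absent from every bag. For any fractional independent set $\mu$, the $\mu$-width of $(T',\boldB')$ is the maximum of the $\mu$-width of $(T,\boldB)$ and the values $\mu(v)$ over added leaves; since $\mu(v)\le 1$, this is bounded by $\max\{\mu\text{-width}(T,\boldB),\,1\}$.

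To conclude, fix any fractional independent set $\mu$ and take $(T,\boldB)$ to minimise $\mu$-width among tree decompositions of $H(\calA)$; the construction above yields $\mu\text{-width}(H(\widehat\calA)) \le \max\{\mu\text{-width}(H(\calA)),\,1\}$. Taking the supremum over $\mu$ (which ranges over the same set for both hypergraphs, by the first observation) gives $\aw{H(\widehat\calA)}\le \max\{\aw{H(\calA)},\,1\}$, as required. I expect no real obstacle; the only subtlety worth spelling out is why the ``$1$'' is needed at all, and it enters precisely because a newly required singleton bag $\{v\}$ contributes $\mu(v)\in[0,1]$ to the $\mu$-width even when $\aw{\calA}$ might be as small as~$0$ (for instance when $\calA$ has no relations and thus $H(\calA)$ has no hyperedges).
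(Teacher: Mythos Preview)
Your proposal is correct and follows essentially the same approach as the paper: both observe that the fractional independent sets of $H(\calA)$ and $H(\widehat\calA)$ coincide, then fix an arbitrary $\mu$, start from an optimal tree decomposition of $H(\calA)$ for that $\mu$, and attach singleton-bag leaves (exactly as in the proof of Theorem~\ref{thm:mainbounded}) to cover the new size-$1$ hyperedges, noting that each new bag contributes at most~$1$ to the $\mu$-width. The only cosmetic difference is that the paper bounds each old bag by $\aw{H(\calA)}$ directly, whereas you first bound by $\mu\text{-width}(H(\calA))$ and then take the supremum; the two are equivalent.
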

\begin{proof}
Let $H=H(\calA)$, and let $\widehat{H}=H(\widehat{\calA})$. Observe that $\widehat{H}$ is obtained from $H$ by adding edges of arity $1$.  Also,
$V(\widehat{H})=V(H)$.
From now on, we will refer to this vertex set as~$V$.

We start by observing that every fractional independent set of~$H$ is a fractional independent set of~$\widehat{H}$, and vice-versa.
 
Let $\hat{b} = \max\{\aw{H},1\}$.
We claim that $\aw{\widehat{H}} \leq \hat{b}$.
To see this, consider any fractional independent set~$\mu$ of~$\widehat{H}$. 
We wish to show that $\mu$-width$(\widehat{H})\leq \hat{b}$,
which means that there is a tree decomposition 
$(T',\boldB')$ of~$\widehat{H}$ such that every $t\in V(T')$
satisfies $\mu(B_t)\leq \hat{b}$.

We start with a tree decomposition $(T,\boldB)$
such that the $\mu$-width of~$(T,\boldB)$ is at most~$\aw{H}$.
Consider the tree decomposition $(T',\boldB')$ of~$\widehat{H}$
constructed as in the proof of Theorem~\ref{thm:mainbounded}.
Start by setting $(T',\boldB')=(T,\boldB)$. Then,
  for every $v\in \vars{\phi}$, check whether
  \begin{enumerate}
      \item $\{v\}$ is a hyperedge in  $E(\widehat{H}) $, and
      \item $v$ is not in any bag $B_t$ of $T$
  \end{enumerate}
  If this occurs, add a new leaf $t'$ to $T'$  with $B_{t'} = \{v\}$.
The claim follows since,
for any $t\in V(T)$, 
$\mu(B_t) \leq \aw{H} \leq \hat{b}$.
Also, for any $t \in V(T') \setminus V(T)$,
$\mu(B_t) \leq 1 \leq \hat{b}$. 
We conclude that 
$\aw{\widehat{H}} \leq \hat{b}$, as desired.\end{proof}

Recall that, given a class of structures~$\calS$,
$\decoracle(\calS)$ is the problem of deciding,
given a pair $(\calA,\calB)$ of structures with $\calA \in \calS$,
whether there is a homomorphism from~$\calA$ to~$\calB$.
The following algorithmic result is due to Marx,
though we rephrase it in terms of homomorphisms between structures.
 
\begin{thm}[{\cite[Theorem 4.1]{Marx13:jacm}}]\label{thm:marxsubmodalgo}
Let $\hat{b}$ be a positive integer. Let $\calS$ be a class of structures such that every structure in~$\calS$ has adaptive width at most $\hat{b}$. Then  there is a 
(computable) 
function~$f$ such that
$\decoracle(\calS)$ can be solved in time
$f(\structsize{\calA})
  \cdot \poly(\structsize{\calA}+\structsize{\calB}) $.
\end{thm}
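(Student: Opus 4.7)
The plan is to reproduce the core algorithmic result of Marx's paper on submodular width, using the weak equivalence (Lemma~\ref{lem:dom}) to pass from adaptive width to submodular width. First I would apply Lemma~\ref{lem:dom} to obtain a computable function $g$ such that whenever $\aw{H(\calA)} \le \hat{b}$, the submodular width $\mathrm{subw}(H(\calA))$ is at most $b' := g(\hat{b})$. Recall that $\mathrm{subw}(H)$ is the supremum, over all edge-dominated monotone submodular functions $\mu \colon 2^{V(H)} \to \mathbb{R}_{\ge 0}$ with $\mu(\emptyset) = 0$, of the $\mu$-width of $H$ in the sense of Definition~\ref{def:fwidth}. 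From this point on the argument depends on $\hat{b}$ only through $b'$.

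The heart of the proof is a decomposition lemma: given $(\calA,\calB)$ with $\mathrm{subw}(H(\calA)) \le b'$, one can compute in time $f_1(\structsize{\calA}) \cdot \poly(\structsize{\calB})$ a collection $\calT = \{(T_1,\boldB_1),\dots,(T_N,\boldB_N)\}$ of tree decompositions of $H(\calA)$, with $N \le f_2(\structsize{\calA})$, such that (i) every bag $B_{i,t}$ of every $(T_i,\boldB_i) \in \calT$ admits at most $\structsize{\calB}^{b'}$ partial homomorphisms $B_{i,t} \to U(\calB)$ satisfying the atoms of $\calA$ supported inside $B_{i,t}$, and (ii) every homomorphism $\calA \to \calB$ is detected by at least one $(T_i,\boldB_i)$, in the sense that its restriction to every bag of that decomposition lies in the enumerated set from (i). The key object is the cost function $c(S) = \log_{\structsize{\calB}}|\{\alpha \colon S \to U(\calB) \mid \alpha \text{ satisfies every atom of } \calA \text{ supported in } S\}|$, whose edge-dominance and submodularity follow from Shearer's entropy inequality; bounded submodular width then produces tree decompositions with $c(B_{i,t}) \le b'$ on each bag, which is exactly condition (i).

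Given $\calT$, the decision problem reduces to processing each $(T_i,\boldB_i)$ independently. For a fixed $i$, I would enumerate the at-most-$\structsize{\calB}^{b'}$ partial homomorphisms on every bag in polynomial time, for instance via a worst-case optimal join algorithm on the bag's subinstance, and then run a standard bottom-up dynamic program along $T_i$: at each node $t$, store those partial homomorphisms on $B_{i,t}$ that extend to consistent partial homomorphisms on the subtree rooted at $t$, and propagate by join-and-project along parent--child edges. Correctness of the dynamic program uses the connectedness property of tree decompositions, and by (ii) a homomorphism from $\calA$ to $\calB$ exists if and only if at least one of the $N$ dynamic programs succeeds. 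The total running time is $N \cdot \poly(\structsize{\calA}+\structsize{\calB}^{b'})$, which fits the required shape $f(\structsize{\calA}) \cdot \poly(\structsize{\calA}+\structsize{\calB})$ once the constant $b'$ is absorbed into $f$.

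The main obstacle is the decomposition lemma: simultaneously achieving the bag-size bound (i) and the covering property (ii) with only $f_2(\structsize{\calA})$ tree decompositions is the combinatorial core of Marx's paper. It requires an adaptive choice of submodular cost function tailored to the local structure of $\calB$, combined with Shearer-type entropy inequalities to convert fractional edge cover bounds into bounds on partial homomorphism counts. Once this decomposition is in hand, the remainder of the algorithm is essentially routine dynamic programming.
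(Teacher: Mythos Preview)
The paper does not prove this theorem at all: Theorem~\ref{thm:marxsubmodalgo} is imported directly from~\cite[Theorem~4.1]{Marx13:jacm} and is used as a black box in the proof of Theorem~\ref{thm:DCQsubmod}. There is therefore no ``paper's own proof'' to compare your proposal against. What you have written is a high-level sketch of Marx's original argument rather than of anything the present paper does.

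That said, your outline does track the architecture of Marx's proof reasonably well --- the passage to submodular width, the adaptive family of tree decompositions indexed by the data, and the per-decomposition dynamic programming are the right ingredients. Two technical caveats are worth flagging. First, the logarithmic projection-count function $c(S)$ you define is not itself monotone submodular on an arbitrary instance; Marx instead works with a \emph{uniformised} version of the instance obtained by bucketing tuples according to degree, and it is only after this preprocessing that the relevant cost function becomes (approximately) submodular. This bucketing step is also what produces the family of decompositions rather than a single one, and is essential for the covering property you call~(ii). Second, property~(ii) is not quite a covering of homomorphisms by decompositions in the sense you state; rather, the instance is split into $f_2(\structsize{\calA})$ subinstances, each handled by its own decomposition, and the union of their solution sets recovers all homomorphisms. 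These are not fatal gaps in a sketch, but they are precisely the places where the real work lies, as you yourself acknowledge in your final paragraph.
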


We are now able to prove Theorem~\ref{thm:DCQsubmod}, which we restate for convenience.

\begin{thmDCQsubmod}
    \stateDCQsubmod
\end{thmDCQsubmod}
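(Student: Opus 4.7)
The plan is to follow the same template as the proof of Theorem~\ref{thm:mainbounded}: invoke Lemma~\ref{lem:interface} to reduce the approximate counting problem to polynomially many calls of a $\decoracle$ oracle, and show that every oracle call lies in a class of structures on which $\decoracle$ is fixed-parameter tractable. The only substantive change is that instead of using the Dalmau--Kolaitis--Vardi algorithm (Theorem~\ref{thm:Wednesday}) to simulate the oracle, I would use Marx's algorithm for structures of bounded adaptive width (Theorem~\ref{thm:marxsubmodalgo}).

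The key step is to verify that every oracle query $(\widehat{\calA},\widehat{\calB})$ produced by the algorithm of Lemma~\ref{lem:interface} satisfies $\aw{\widehat{\calA}}\le \max\{b,1\}$. Two ingredients combine here. First, since $\varphi$ is a DCQ, it has no negated predicates, so by Definitions~\ref{def:Hphi} and~\ref{def:calH} the hypergraph $H(\calA(\varphi))$ coincides with $H(\varphi)$, and hence $\aw{\calA(\varphi)}=\aw{H(\varphi)}\le b$ by the hypothesis $H(\varphi)\in C$. Second, Lemma~\ref{lem:interface} guarantees that $\widehat{\calA}$ is obtained from $\calA(\varphi)$ by adding unary relations, and Lemma~\ref{lem:unarysubmod} then gives $\aw{\widehat{\calA}}\le \max\{\aw{\calA(\varphi)},1\}\le \max\{b,1\}$. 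Letting $\calS$ be the class of all structures of adaptive width at most $\max\{b,1\}$, Theorem~\ref{thm:marxsubmodalgo} supplies a computable function $f$ and an algorithm for $\decoracle(\calS)$ running in time $f(\structsize{\widehat\calA})\cdot \poly(\structsize{\widehat\calA}+\structsize{\widehat\calB})$.

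The final step is the running-time bookkeeping, and this is where the DCQ restriction pays off in the unbounded-arity setting: since $\nu=0$ for DCQs, the $\nu\,|U(\calD)|^{a}$ term in the bound from Lemma~\ref{lem:interface} vanishes, leaving a total oracle-invoking cost of $\exp(\bigO(\structsize{\varphi}^2))\cdot\poly(\log(1/\delta),\varepsilon^{-1},\structsize{\calD})$. Using the bounds $\structsize{\widehat\calA}\le 5\structsize{\varphi}^2$ from Lemma~\ref{lem:interface} and the fact that $\structsize{\widehat\calB}$ is bounded by $\exp(\bigO(\structsize{\varphi}^2))\cdot\poly(\structsize{\calD})$ (again via Lemma~\ref{lem:interface} with $\nu=0$), each oracle simulation costs $f(\bigO(\structsize{\varphi}^2))\cdot\poly(\exp(\bigO(\structsize{\varphi}^2)),\structsize{\calD})$, which is of the form $g(\structsize{\varphi})\cdot\poly(\structsize{\calD})$ for a computable $g$. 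Combining with the polylogarithmic and $\varepsilon^{-1}$ factors yields the required FPTRAS bound.

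I do not anticipate a genuine obstacle: the argument is essentially a substitution of decision subroutines into the framework of Section~\ref{sec:todec}. The only delicate point, worth being explicit about in the write-up, is recording that (i) the identification $H(\calA(\varphi))=H(\varphi)$ really does require the absence of negated predicates, and (ii) the vanishing of the $\nu\,|U(\calD)|^{a}$ term is exactly what allows the proof to go through without any bound on the arity.
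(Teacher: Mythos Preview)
Your proposal is correct and matches the paper's proof essentially line for line: Lemma~\ref{lem:interface} reduces to $\decoracle$ calls, Lemma~\ref{lem:unarysubmod} bounds the adaptive width of $\widehat{\calA}$, Theorem~\ref{thm:marxsubmodalgo} handles the oracle, and $\nu=0$ eliminates the arity-dependent term in the running-time bound. One small correction to your closing remarks: the identification $H(\calA(\varphi))=H(\varphi)$ in fact holds for all ECQs (Definition~\ref{def:Hphi} already adds a hyperedge for each negated predicate, matching Definition~\ref{def:calH}), so your point~(i) is not where the DCQ restriction is needed---only point~(ii) is.
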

\begin{proof}
Let $b$ and $C$ be as in the statement. Let $\calS$ be the class of all structures with adaptive width at most $\hat{b} = \max\{b,1\}$. By  Theorem~\ref{thm:marxsubmodalgo}, there is a (computable)  function~$f$, a polynomial~$p$ and an algorithm $\homalg$ 
for $\decoracle(\calS)$  
which, given input $(\widehat{\calA},\widehat{\calB})$,
runs in time  
$f(\structsize{\calA})
  \cdot \poly(\structsize{\calA}+\structsize{\calB}) $.

The desired FPTRAS is now obtained by using Lemma~\ref{lem:interface} and simulating the oracle using the algorithm~$\homalg$. 
Given an input $(\phi,\calD)$ with $\phi \in \Phi_C$
and $\sig(\phi) \subseteq \sig(\calD)  $,
we wish to approximate 
$|\ans{\phi,\calD}|$.
By the statement of Lemma~\ref{lem:interface},
every oracle query $(\widehat{\calA},\widehat{\calB})$ 
produced in the course of the approximation
has the property that  
    $\widehat{\calA}$ can be obtained from $\calA(\phi)$ by adding   unary relations  
and satisfies
$\structsize{\widehat\calA}  \leq 5\structsize{\phi}^2$.
Our goal is to show how to simulate
the oracle call $\decoracle(\widehat{\calA},\widehat{\calB})$.

In order to use algorithm~$\homalg$ for the simulation,
we need only show that $\widehat{\calA}$ has adaptive width at
most~$\hat{b}$. This follows from Lemma~\ref{lem:unarysubmod}.

Now by Lemma~\ref{lem:interface} and Theorem~\ref{thm:marxsubmodalgo}, 
since the input formula~$\phi$ has no negated predicates, that is, the quantity~$\nu$ in the statement
of Lemma~\ref{lem:interface} is~$0$, the total running time is  at most
\[\exp(\bigO(||\varphi||^2)) \cdot \poly(\log(1/\delta),\varepsilon^{-1},||\calD||) \cdot f(||\widehat{\calA}||)\cdot p(||\widehat{\calA}||+||\widehat{\calB}||) \,,\]
where $(\widehat{\calA},\widehat{\calB})$ is the oracle query that maximises $f(||\widehat{\calA}||)\cdot p(||\widehat{\calA}||+||\widehat{\calB}||)$.
  
Since~$\structsize{\widehat{\calB}}$ must be bounded by $\exp(\bigO(||\varphi||^2)) \cdot \poly(\log(1/\delta),\varepsilon^{-1},||\calD||)$
(by the running time in Lemma~\ref{lem:interface}), and $\structsize{\widehat{\calA}}$ is bounded by  $5\structsize{\phi}^2$, as we have already mentioned,
there is a (computable) function $\hat{f}$ such that the running time is bounded by
  
\[\hat{f}(||\varphi||) \cdot \poly(\log(1/\delta),\varepsilon^{-1},||\calD||)  \,,\]
 which yields the desired FPTRAS and thus concludes the proof.
\end{proof}

\subsection{FPRAS for \#CQ with bounded Fractional Hypertreewidth}\label{sec:fh}

Recall the definition of  
tree decomposition (Definition~\ref{def:tw})   ---
a \emph{tree decomposition} of a hypergraph $H$ is a pair 
$(T,\boldB)$ where $T$ is a (rooted) tree 
and $\boldB$ assigns a subset $B_t \subseteq V(H)$ 
(called a \emph{bag})
to each $t\in V(T)$. The following two conditions are satisfied: (i) for each $e \in
E(H)$ there exists $t \in V(T)$ such that $e \subseteq B_t$, and (ii) for each
$v \in V(H)$ the set $\{ t \in V(T) \mid v \in B_t\}$ induces a (connected)
subtree of $T$.

 We will use the following notation associated with a 
 tree decomposition. Let $t^*$ be the root of~$T$.
Given a   vertex $t\in V(T)$, let $T_t$ denote
the subtree of~$T$ rooted at~$t$.

 \begin{defn}[hypertree decomposition, guard, hypertreewidth]
A \emph{hypertree decomposition}~\cite[Definition A.1]{Gottlob02:jcss-hypertree} of a hypergraph~$H$ 
is a triple~$(T,B,\Gamma)$ where $(T,\boldB)$ is a tree decomposition of~$H$ 
and $\Gamma$ assigns a subset $\Gamma_t \subseteq E(H)$ (called a \emph{guard})
to each
$t\in V(T)$. In addition to the two conditions that $(T,\boldB)$ satisfies, 
the following conditions are satisfied:
(iii) for each   $t\in V(T)$, $B_t \subseteq \cup_{e \in \Gamma_t} e$.
  (iv)
for each $t\in V(T)$, 
$(\cup_{e \in \Gamma_t} e ) \cap 
(\cup_{t'\in V(T_t)} B_{t'}) \subseteq B_t$.
The hypertreewidth of the decomposition $(T,B,\Gamma)$ is the maximum cardinality of a guard.
The hypertreewidth of~$H$, denoted by $\hw{H}$,
is the minimum hypertreewidth of any hypergraph decomposition of~$H$.
\end{defn}

Arenas et al.~\cite[Theorem 3.2]{ArenasNew} prove the following result.
\begin{thm}[Arenas, Croquevielle, Jayaram, Riveros] \label{thm:Arenas} 
Let $b$ be a positive integer.
Let $C$ be a class of hypergraphs such that every member of $C$ has hypertreewidth at most $b$.
Then $\CQ{\Phi_C}$ has an FPRAS.
\end{thm}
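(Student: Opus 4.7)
The plan is to reduce $\CQ{\Phi_C}$ to an instance of $\CQ{\Phi_{C'}}$ where $C'$ consists of hypergraphs of hypertreewidth~$1$, and then to invoke Theorem~\ref{thm:Arenas}. Given an input $(\phi,\calD)$ with $\phi\in\Phi_C$, the goal is to construct, in time polynomial in $\structsize{\phi}+\structsize{\calD}$, a query $\phi'$ and a database $\calD'$ such that $\ans{\phi,\calD}=\ans{\phi',\calD'}$ and $H(\phi')$ has hypertreewidth~$1$.

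The first step will be to compute a tree decomposition $(T,\boldB)$ of $H(\phi)$ whose fractional hypertreewidth is bounded by some function of $b$. Such a decomposition can be obtained in polynomial time via the algorithm underlying the polynomial-time CQ decision procedure for bounded fractional hypertreewidth~\cite{Marx10:talg}. The second step will be to materialise, for each bag $B_t$, the relation
\[
R_t := \{\tau \colon B_t \to U(\calD) \mid \tau \text{ satisfies every atom of } \phi \text{ whose variables lie in } B_t\}.
\]
By the AGM bound, $|R_t|$ is bounded by a polynomial in $\structsize{\calD}$ whose degree depends only on $b$, and $R_t$ can be computed within the same time bound using any worst-case optimal join algorithm.

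Now I define $\calD'$ to have universe $U(\calD)$ and a fresh relation symbol $R_t$ of arity $|B_t|$, interpreted by the relation above, for every $t\in V(T)$. The query $\phi'$ will have the same variables and free variables as $\phi$, with a single atom $R_t(y_{t,1},\ldots,y_{t,|B_t|})$ per bag, where $(y_{t,1},\ldots,y_{t,|B_t|})$ is some fixed enumeration of $B_t$. Two facts then require verification. First, $H(\phi')$ has hypertreewidth~$1$: the same $(T,\boldB)$ together with singleton guards $\Gamma_t=\{B_t\}$ forms a hypertree decomposition of $H(\phi')$ of width $1$, since every hyperedge of $H(\phi')$ is of the form $B_t$ and is thus contained in (in fact equal to) its own bag. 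Second, $\ans{\phi,\calD}=\ans{\phi',\calD'}$: any solution of $(\phi,\calD)$ restricts on every $B_t$ to a tuple in $R_t$ and is therefore a solution of $(\phi',\calD')$; conversely, since each atom of $\phi$ lies in some bag by condition~(i) of Definition~\ref{def:tw}, any assignment whose restriction to each $B_t$ lies in $R_t$ satisfies every atom of $\phi$. Applying Theorem~\ref{thm:Arenas} to $(\phi',\calD')$ then yields the desired $(\varepsilon,\delta)$-approximation of $|\ans{\phi,\calD}|$ in polynomial time.

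The main obstacle is ensuring that $|R_t|$, and hence $\structsize{\calD'}$, remains polynomial in $\structsize{\calD}$ rather than blowing up to the trivial bound of $\structsize{\calD}^{|B_t|}$; this is resolved precisely by the AGM bound combined with worst-case optimal join algorithms. A secondary but essential ingredient is the polynomial-time computation of a tree decomposition whose fractional hypertreewidth is bounded in terms of $b$, which is supplied by~\cite{Marx10:talg}.
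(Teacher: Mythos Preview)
There are two genuine problems with the proposal.

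\textbf{Circularity.} Theorem~\ref{thm:Arenas} is not proved in this paper; it is quoted from~\cite{ArenasNew}. Your argument reduces the general bounded-hypertreewidth case to the hypertreewidth-$1$ (acyclic) case and then ``invokes Theorem~\ref{thm:Arenas}'' for that case. But Theorem~\ref{thm:Arenas} \emph{is} the statement you are proving, so this is circular. The acyclic case is exactly where the real work of~\cite{ArenasNew} lies (the FPRAS for tree automata, Lemma~\ref{lem:TA} here); your reduction does not supply any alternative for it.

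\textbf{The bag relations can be exponentially large.} Your definition
\[
R_t := \{\tau \colon B_t \to U(\calD) \mid \tau \text{ satisfies every atom of } \phi \text{ whose variables lie in } B_t\}
\]
uses only the atoms \emph{fully contained} in $B_t$. If no such atom exists, $R_t = U(\calD)^{|B_t|}$, which need not be polynomial in $\structsize{\calD}$. (For a concrete failure: a single atom $R(x_1,\dots,x_n)$ together with a nice tree decomposition containing a bag $\{x_1,\dots,x_{n-1}\}$ gives $|R_t|=|U(\calD)|^{n-1}$.) The AGM bound does not save you here because it bounds the join of the \emph{projections} of all atoms onto $B_t$, not the join of the atoms that happen to sit inside $B_t$. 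The correct object is $\sol{\phi,\calD,B_t}$ from the paper (Definition just above Lemma~\ref{lem:Sol}); Lemma~\ref{lem:Sol}, i.e.\ Grohe--Marx, is precisely what guarantees it has polynomial size and can be computed in polynomial time. With that fix, your correctness argument for $\ans{\phi,\calD}=\ans{\phi',\calD'}$ still goes through.

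As a side remark: once both issues are repaired, what you have written is not a proof of Theorem~\ref{thm:Arenas} but rather an alternative proof of Theorem~\ref{thm:fractionalFPRAS} that uses Theorem~\ref{thm:Arenas} as a black box for the acyclic case. That is a legitimate and somewhat slicker route than the paper's, which bypasses Theorem~\ref{thm:Arenas} entirely and reduces directly to the tree-automaton FPRAS (Lemma~\ref{lem:TA}) via Lemma~\ref{lem:toTA}.
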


 Theorem~\ref{thm:Arenas} is incomparable to Theorem~\ref{thm:submod}. 
Theorem~\ref{thm:submod} is stronger in the sense that it applies to  DCQs rather than just to CQs (indeed, we have already seen 
in Observation~\ref{obs:LBnoFPRAS}
that Theorem~\ref{thm:Arenas} cannot be extended to DCQs unless $\NP=\RP$). Theorem~\ref{thm:submod} is also stronger in the sense that it only requires bounded adaptive width instead of bounded hypertreewidth --- this gives a more general result since adaptive width strongly dominates hypertreewidth (Lemma~\ref{lem:dominate} here, from~\cite{Marx13:jacm}). 
However, Theorem~\ref{thm:Arenas} is stronger 
than Theorem~\ref{thm:submod}
in the sense that it provides an FPRAS instead of just an FPTRAS.

In this section, we prove Theorem~\ref{thm:fractionalFPRAS} which strengthens
Theorem~\ref{thm:Arenas} by bounding
\emph{fractional hypertreewidth} instead of
hypertreewidth. This is a stronger result since fractional hypertreewidth strongly dominates hypertreewidth (Lemma~\ref{lem:dominate}).
Theorem~\ref{thm:fractionalFPRAS} is still incomparable to Theorem~\ref{thm:submod} but the remaining gap now coincides with the gap between polynomial-time solvability and fixed-parameter tractability for the corresponding decision problems, see~\cite{Marx10:talg} versus~\cite{Marx13:jacm}.

\subsubsection{Fractional hypertreewidth}

\begin{defn}($H[X]$, fractional edge cover, $\fcn{H}$)
Let $H$ be a hypergraph and let $X$ be a subset of~$V(H)$. The
hypergraph \emph{induced} by $X$, denoted by $H[X]$,  
is the hypergraph with $V(H[X])=X$
and
$E(H[X])=\{e
\cap X\mid e \in E(H),  e \cap X \neq \emptyset\}$. 
A \emph{fractional edge cover} of a hypergraph $H$ is a function $\gamma:E(H)
\to [0,1]$ such that for all $v \in V(H)$, we have $\sum_{e \in E(H)\, \mid\, v \in
	e} \gamma(e) \geq 1$. The fractional edge cover number of $H$, denoted by
$\fcn{H}$, is the minimum of $\sum_{e \in E(H)} \gamma(e)$, over all fractional edge
covers $\gamma$ of $H$.
\end{defn}

We will use the following fact about fractional edge covers.
\begin{obs}\label{obs:subsets}
Given a hypergraph $H$ and subsets $B\subseteq B' \subseteq V(H)$,
we have that $\fcn{H[B]} \leq \fcn{H[B']}$.
\end{obs}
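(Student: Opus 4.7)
The plan is to push a minimum fractional edge cover of $H[B']$ down to a fractional edge cover of $H[B]$ without increasing the total weight. Concretely, I would fix a fractional edge cover $\gamma'$ of $H[B']$ with $\sum_{e' \in E(H[B'])}\gamma'(e') = \fcn{H[B']}$ and then, for each $f \in E(H[B])$, define
\[
\gamma(f) \;=\; \sum_{e' \in E(H[B']) :\, e' \cap B = f}\gamma'(e').
\]
The key point that makes the aggregation well-defined is that every $e' \in E(H[B'])$ with $e' \cap B \neq \emptyset$ satisfies $e' \cap B \in E(H[B])$ (since $e' = e \cap B'$ for some $e \in E(H)$ and then $e' \cap B = e \cap B$ because $B \subseteq B'$).

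Next I would check the two required properties. For the cover condition, fix $v \in B$. Since $v \in B$, for any $e' \in E(H[B'])$ we have $v \in e'$ iff $v \in e' \cap B$. Therefore
\[
\sum_{f \in E(H[B]) :\, v \in f}\gamma(f) \;=\; \sum_{e' \in E(H[B']) :\, v \in e'}\gamma'(e') \;\ge\; 1,
\]
using that $\gamma'$ is a fractional edge cover of $H[B']$ and $v \in B'$. For the total weight,
\[
\sum_{f \in E(H[B])}\gamma(f) \;=\; \sum_{e' \in E(H[B']) :\, e' \cap B \neq \emptyset}\gamma'(e') \;\le\; \sum_{e' \in E(H[B'])}\gamma'(e') \;=\; \fcn{H[B']},
\]
where the first equality holds because each $e'$ with $e' \cap B \neq \emptyset$ contributes to exactly one $f$, and $e'$ with $e' \cap B = \emptyset$ contribute to none.

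I do not expect any genuine obstacle here; the statement is purely a matter of restricting a cover to a smaller vertex set. The only mildly subtle point is that hyperedges of $H[B]$ and $H[B']$ are sets, so distinct $e' \in E(H[B'])$ can have the same intersection with $B$, which is why the definition of $\gamma(f)$ needs to aggregate. Once $\gamma$ is defined this way, both conditions drop out immediately, and taking the infimum over $\gamma'$ yields $\fcn{H[B]} \le \fcn{H[B']}$.
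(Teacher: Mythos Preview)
Your argument is essentially the paper's own: fix an optimal fractional edge cover $\gamma'$ of $H[B']$, aggregate weights over all $e'$ with the same intersection $e'\cap B$, and check both the cover condition and the total-weight bound. One small technical point to watch: the paper's definition of a fractional edge cover requires the codomain to be $[0,1]$, so your aggregated $\gamma(f)$ could fall outside this range when several $e'$ share the same $f\cap B$; the paper handles this by setting $\gamma(f)=\min\{1,\sum_{e':\,e'\cap B=f}\gamma'(e')\}$, which preserves the cover property and can only decrease the total weight.
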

\begin{proof}
Consider a fractional edge cover $\gamma'$ of $H[B']$
such that $\fcn{H[B']} = \sum_{e'\in E(H[B'])} \gamma'(e')$.
Define a fractional edge cover $\gamma$ of $H[B]$ as follows.
Note that $E(H[B]) = \{ e' \cap B \mid e' \in E(H[B']), e'\cap B\neq \emptyset \}$.
For each $e\in E(H[B])$ let $\gamma(e) = \min\{1,\sum_{e' \in E(H[B']), e' \cap B = e} \gamma'(e')\}$.
We next show that~$\gamma$ is a fractional edge cover. First, if 
$v\in V(H[B])$ is contained in a hyperedge $e''\in E(H[B])$ where $\sum_{e' \in E(H[B']), e' \cap B = e''} \gamma'(e')\}>1$ then $\gamma(e'')=1$ so $ 
\sum_{e\in E(H[B]), v\in e} \gamma(e) \geq \gamma(e'') =1$.
Otherwise, note that for every $v\in V(H[B])$ we have
$$\sum_{e\in E(H[B]), v\in e} \gamma(e) = 
\sum_{e\in E(H[B]), v\in e} \sum_{e' \in E(H[B']), e' \cap B = e} \gamma'(e')
= 
\sum_{e'\in E(H[B']), v\in e'} \gamma'(e') 
\geq 1.
$$

Since $\gamma$ is a fractional edge cover of~$H[B]$,   $$\fcn{H[B]} \leq \sum_{e\in E(H[B])} \gamma(e) \leq 
\sum_{e'\in E(H[B'])} \gamma'(e')= \fcn{H[B']}.$$
\end{proof}

The following definition of fractional hypertreewidth, from~\cite{Grohe14:talg, Marx13:jacm},
builds on the definition of $f$-width (Definition~\ref{def:fwidth}).
 
 \begin{defn}[fractional hypertreewidth, $\fhw{H}$]
 The \emph{fractional hypertreewidth} of $H$,
 denoted by $\fhw{H}$,
 is its $f$-width where $f(X) = \fcn{H[X]}$. 
 \end{defn}

 The proof of Theorem~\ref{thm:fractionalFPRAS} follows the same approach that
 Arenas et al.~\cite{ArenasNew} used to prove Theorem~\ref{thm:Arenas} --- namely, given an input  $(\phi,\calD)$
 the following are constructed in polynomial time.
 First,
  a tree decomposition $(T,\boldB)$ of $H(\phi)$ with bounded fractional hypertree width.
 Then, using $(T,\boldB)$,
   a tree automaton~$\calT$  with the property that $|\ans{\phi,\calD}|$ is  exactly the 
 same as the  
 number of ``size~$N$'' inputs that are
 accepted by~$\calT$, where $N= |V(T)|$.
The theorem then follows  from Corollary~4.9 of~\cite{ArenasNew} which gives an FPRAS for
counting these accepted inputs.

\subsubsection{Finding a nice tree decomposition and enumerating solutions}

We will be interested in certain tree decompositions called
\emph{nice} tree decompositions \cite[Section 7.2]{ParamAlgBook}.

\begin{defn} (nice tree decomposition)
A tree decomposition $(T,\boldB)$ of a hypergraph is said to be \emph{nice} 
if the following conditions are satisfied:
\begin{itemize} 
\item the bags assigned to the root and leaf nodes of~$T$ are empty, 
\item every internal node of~$T$ has at most two children,
\item every internal node~$t$ of~$T$ with exactly two children $t_1$ and~$t_2$ has $B_t = B_{t_1} = B_{t_2}$,
\item
and every internal node~$t$ of~$T$ with exactly one child~$t_1$ has the property that the symmetric
difference of~$B_t$ and~$B_{t_1}$ has exactly one element.
\end{itemize}
\end{defn}

The following lemma builds on a tree decomposition of~\cite{Marx10:talg}.

\begin{lem}\label{lem:getniceTD}
Let $b$ be a positive integer and let  
$C$ be the set of hypergraphs with fractional treewidth at most~$b$.
There is a  polynomial-time algorithm that takes as
input a CQ $\phi\in \Phi_C$  
and a database~$\calD$ 
with $\sig(\phi) \subseteq \sig(\calD)$
and 
returns a nice tree decomposition   of~$H(\phi)$ with fractional hypertreewidth 
at most $7b^3+31b+7$
 \end{lem}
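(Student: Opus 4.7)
The plan is to derive this lemma in two stages: invoke a known polynomial-time approximation for fractional hypertreewidth to obtain \emph{some} tree decomposition with the desired width guarantee, and then make it nice in a second pass without harming the width bound.

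First, I would apply the polynomial-time approximation algorithm for fractional hypertreewidth of Marx~\cite{Marx10:talg} to the hypergraph $H(\phi)$. Since $\phi \in \Phi_C$, we have $\fhw{H(\phi)} \leq b$, and Marx's algorithm, when applied to an input hypergraph of fractional hypertreewidth at most $b$, returns in polynomial time a tree decomposition of fractional hypertreewidth at most $7b^3 + 31b + 7$; this cubic-in-$b$ approximation guarantee is precisely what dictates the constant appearing in the statement. The hypergraph $H(\phi)$ can be read off from $\phi$ in linear time, and its size is bounded by $\structsize{\phi}$, so this step runs in polynomial time in $\structsize{\phi} + \structsize{\calD}$. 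The database $\calD$ itself plays no role in the construction and is mentioned only so that the total running time is expressed relative to $\structsize{\phi} + \structsize{\calD}$.

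Second, I would apply the standard transformation (see \cite[Section 7.2]{ParamAlgBook}) that turns an arbitrary tree decomposition into a nice one in polynomial time by rooting the tree, padding with empty bags at the root and leaves, duplicating bags at join nodes, and interpolating introduce/forget nodes along each edge. This transformation blows up the number of tree nodes by at most a polynomial factor in $|V(T)| + |V(H(\phi))|$, so it preserves polynomial running time. Crucially, every bag $B_t$ that appears in the resulting nice tree decomposition either equals a bag of the original decomposition, or is a subset of some such bag (introduce/forget nodes sit on a path between two original bags, and their bags are obtained from one of those bags by either removing a vertex or by already being a subset of the parent bag along the inclusion chain; empty bags are trivially subsets). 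By Observation~\ref{obs:subsets}, $\fcn{H(\phi)[B]} \leq \fcn{H(\phi)[B']}$ whenever $B \subseteq B'$, so the $f$-width (with $f(X) = \fcn{H(\phi)[X]}$) cannot increase in the transformation. Hence the nice tree decomposition also has fractional hypertreewidth at most $7b^3 + 31b + 7$.

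The only real obstacle is the first step: one has to cite the correct form of Marx's approximation, specifically the cubic bound $7b^3+31b+7$, which is precisely what Marx's algorithm achieves. The second step is essentially routine bookkeeping, since Observation~\ref{obs:subsets} makes the subset-monotonicity of $\fcn{\cdot}$ explicit and the standard "nice" transformation never creates bags outside the subset closure of the original ones.
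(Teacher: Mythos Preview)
Your proposal is correct and follows essentially the same approach as the paper: invoke Marx's approximation algorithm~\cite{Marx10:talg} to obtain a tree decomposition of fractional hypertreewidth at most $7b^3+31b+7$, then apply the standard ``nicification'' procedure, using Observation~\ref{obs:subsets} to argue that the width bound is preserved because every new bag is a subset of some original bag. Your observation that $\calD$ plays no real role is also accurate.
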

  
\begin{proof}

Let $n = \lVert\phi\rVert$ and $m= \lVert\calD\rVert$.
It is immediate from its definition that the hypergraph $H := H(\phi)$ can be constructed in~$\poly(n)$ time.
Since $H$ has fractional hypertreewidth
 at most~$b$, from \cite[Theorem 4.1]{Marx10:talg}  
there is a polynomial-time algorithm (in~$n$) for finding a  tree decomposition $(T,\boldB)$ of~$H$ with fractional hypertreewidth 
at most $c=7b^3+31b+7$. 

In $\poly(n)$ time, the tree decomposition~$(T,\boldB)$ can be turned into a nice one. The construction is standard, so we just give a sketch.
First, add a new root and new leaves with empty bags. (The new root has one child, which is the original root. Each original leaf has exactly one child, which is a new leaf.)
Then process the nodes from the root 
working down to the leaves as follows.
If $t$ has $k\geq 2$ children
$t_1,\ldots,t_k$ then this is replaced with a (nearly) complete binary tree below~$t$ with $k$ leaves $t'_1,\ldots,t'_k$ --- all nodes
in this nearly complete binary tree are given
bag~$B_t$.   
The new node~$t'_i$ is given the child~$t_i$.

Finally,
if a node~$t$ has exactly one child~$t_1$
but it is not true that the symmetric difference of~$B_t$ and~$B_{t_1}$ has exactly one element, then we replace the edge from~$t$ to~$t_1$ with a path from~$t$ to~$t_1$,  Along this path, vertices in $B_t \setminus B_{t_1}$ are dropped one-by-one then vertices in $B_{t_1} \setminus B_t$ are added.

Every bag in the nice tree decomposition is a subset of a bag in the original tree decomposition. Therefore Observation~\ref{obs:subsets}
ensures that the  nice tree composition also has fractional hypertreewidth at most~$c$.
 \end{proof}

We now extend the definitions concerning assignments from Section~\ref{sec:Intro}.
 To use these in the context
of tree decompositions, it helps to remember  (taking
$H(\phi)$ to be the hypergraph associated with
a CQ~$\phi$ from Definition~\ref{def:Hphi}) that  the vertex set of~$H(\phi)$
is defined by $V(H(\phi)) = \vars{\phi}$.

\begin{defn} (consistent assignments, $\proj$)
Let $\phi$ be a CQ and  
let $\calD$ be a database.
Suppose that $B$ and $B'$ are subsets of $\vars{\phi}$. We
say that assignments $\tau\colon B\to U(\calD)$ and $\tau' \colon B' \to U(\calD)$ are \emph{consistent} if,
for every $v\in B\cap B'$, $\tau(v) = \tau'(v)$. 
We use $\proj(\tau,B')$ to denote $\tau$'s projection onto $B'$, which is the
assignment from $B\cap B'$ to $U(\calD)$ that is consistent with $\tau$.

\end{defn}

Note that the definition of
$\proj(\tau,B')$ is consistent with
the one given in  Definition~\ref{def:ans}
for the special case where $B=\vars{\phi}$ 
and $B' = \free{\phi}$.

\begin{defn} ($\comp$)
Let $\phi$ be a CQ and  
let $\calD$ be a database.
Suppose that $B$ and $B'$ are subsets of $\vars{\phi}$
and that  $\tau\colon B\to U(\calD)$ and $\tau' \colon B' \to U(\calD)$ are  {consistent}.
We define their
composition $\comp(\tau,\tau')$ to be the unique assignment 
from $  B\cup B'$ to $ U(\calD)$ that is consistent with both~$\tau$ and~$\tau'$.\end{defn}

\begin{defn}($\proj$, as applied to sets of assignments)
Let $\phi$ be a CQ and  let $\calD$ be a database. 
If $\calL$ is a set of assignments, each from a subset of~$\vars{\phi}$ to~$U(\calD)$,
we use $\proj(\calL,B)$ to denote $\{ \proj(\tau,B) \mid \tau \in \calL\}$.
\end{defn}

Using this notation, 
the set $\ans{\phi,\calD}$  is
$\proj(\sol{\phi,\calD},\free{\phi})$.

\begin{defn} ($\sol{\phi,\calD,B}$)
Let $\phi$ be a CQ and  let $\calD$ be a database with $\sig(\phi) \subseteq \sig(\calD)$. 
Let $B$   be  a subset of~$\vars{\phi}$
A \emph{solution} 
of $(\phi,\calD,B)$ is 
an assignment $\alpha \colon B \to U(\calD)$
such that for every atom $R_i(x_{i,1},\ldots,x_{i,j})$ of~$\phi$
there is an assignment $\alpha_i \colon  \vars{\phi}\to U(\calD)$
which is consistent with~$\alpha$ and
has the property 
  that 
  $R_i^{\calD}(\alpha_i(x_{i,1}),\ldots,\alpha_i(x_{i,j}))$    is a fact in~$\calD$. 
  Let $\sol{\phi,\calD,B}$ be
the set of solutions of $(\phi,\calD,B)$.
\end{defn}

Note that a solution of $(\phi,\calD)$, as defined in
Definition~\ref{def:sol}, is the same as a solution of $(\phi,\calD,\vars{\phi})$.

The following Lemma is immediate from Theorem~3.5 of~\cite{Grohe14:talg} (though their version is stated in the language of CSPs).
\begin{lem}\label{lem:Sol} (Grohe, Marx)
Let $c$ be a positive integer.
There is a  polynomial-time algorithm that takes as
input 
\begin{itemize}
\item a CQ $\phi$  
and a database~$\calD$ 
with $\sig(\phi) \subseteq \sig(\calD)$, and
\item a subset $B \subseteq \vars{\phi}$ such that
$\fcn{H[B]} \leq c$, where $H= H(\phi)$.
\end{itemize}
The algorithm 
returns $\sol{\phi,\calD,B}$.
 \end{lem}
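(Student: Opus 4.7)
The plan is to reformulate the set $\sol{\phi,\calD,B}$ as a natural join whose underlying hypergraph is exactly $H[B]$, and then appeal to the Grohe--Marx algorithm for conjunctive queries whose hypergraph has bounded fractional edge cover number.

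First, I would unfold the definition of $\sol{\phi,\calD,B}$. For each atom $R_i(x_{i,1},\ldots,x_{i,j_i})$ of $\phi$, let $X_i = \{x_{i,1},\ldots,x_{i,j_i}\}$ denote its variable set. I claim that an assignment $\alpha \colon B \to U(\calD)$ lies in $\sol{\phi,\calD,B}$ if and only if, for every atom, the partial assignment $\proj(\alpha, X_i \cap B)$ agrees with the $(X_i \cap B)$-projection of some tuple in $R_i^{\calD}$ (after identifying repeated variables in the atom). The ``if'' direction follows by building a witnessing $\alpha_i$: extend $\alpha$ on $X_i \setminus B$ from the witnessing tuple and choose values arbitrarily on $\vars{\phi} \setminus X_i$. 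The ``only if'' direction is immediate from the assumed witness $\alpha_i$. Thus $\sol{\phi,\calD,B}$ coincides with the natural join (taken over $B$) of the relations $\proj(R_i^{\calD}, X_i \cap B)$.

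Second, I would construct an auxiliary database $\calD'$ over $B$ whose relations are exactly these projections; this takes a single scan through the facts of $\calD$, in time polynomial in $\structsize{\calD}$. The corresponding CQ $\phi'$ on $B$ has hypergraph $H(\phi') = H[B]$, whose fractional edge cover number is at most $c$ by hypothesis. By the AGM bound, applied with any fractional edge cover of $H[B]$ of total weight at most $c$, the cardinality of the join, and hence of $\sol{\phi,\calD,B}$, is bounded by $\structsize{\calD}^c = \poly(\structsize{\calD})$, since $c$ is a constant.

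Finally, I would invoke Theorem~3.5 of~\cite{Grohe14:talg}, which computes the entire set of solutions of a CQ whose hypergraph has bounded fractional edge cover number in polynomial time, applied to the instance $(\phi',\calD')$; its output is precisely $\sol{\phi,\calD,B}$. The only mildly delicate step is the translation at the start: the local-extension condition in the definition of $\sol{\phi,\calD,B}$ permits \emph{different} witnesses $\alpha_i$ for different atoms, and one must verify that this matches the independent per-atom projection constraints that define the join. Once this translation is in hand, everything else reduces cleanly to AGM and a black-box invocation of the Grohe--Marx algorithm.
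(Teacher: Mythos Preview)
Your proposal is correct and takes essentially the same approach as the paper: the paper gives no proof at all, simply stating that the lemma ``is immediate from Theorem~3.5 of~\cite{Grohe14:talg} (though their version is stated in the language of CSPs).'' You have supplied exactly the translation the paper omits --- rewriting $\sol{\phi,\calD,B}$ as a join of per-atom projections over $B$, observing that the resulting query has hypergraph $H[B]$, and then invoking the same Grohe--Marx theorem.
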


 \subsubsection{Tree automata}

We now give the tree automaton definitions from \cite{ArenasNew}. The tree automata that we define are
a restricted form of tree automata, corresponding to those whose trees have degree at most~$2$, but this is all that we will need.

 \begin{defn} ($\Trees_2[\Sigma]$)
 Let $\Sigma$ be a finite alphabet.
    $\Trees_2[\Sigma]$ is 
  the set of pairs $(T,\lab)$ where
  $T$ is a   rooted tree in which each vertex has at most $2$ children
  and $\lab\colon V(T) \to \Sigma$  assigns a label to each node of~$T$.
 \end{defn} 
   
  \begin{defn} (tree automaton, run, accepts, $N$-slice)
   A \emph{tree automaton}~$\calA$  
  is a tuple $(S,\Sigma,\Delta,s_0)$ where
  $S$ is a finite set of states and
  $s_0\in S$ is the initial state. $\Sigma$ is a finite alphabet. The 
   transition function~$\Delta$ is a 
   function from a subset of $S\times \Sigma$ to  
   $\{ \emptyset \} \cup S \cup (S\times S)$. 
 A \emph{run}~$\rho$ over a pair $(T,\lab)\in \Trees_2[\Sigma]$ is
 a function $\rho \colon V(T) \to S$ that assigns a state to each node of~$T$ in such a way that
 \begin{itemize}
     \item  
 for each leaf $t\in V(T)$ the transition
$(\rho(t), \lab(t)) \to \emptyset$ is in~$\Delta$,
\item
for each node $t\in V(T)$ with exactly one child~$t_1$,
$(\rho(t), \lab(t)) \to  \rho(t_1)$ is in~$\Delta$, and
\item
for each node $t\in V(T)$ with exactly two children $t_1$ and~$t_2$
(ordered left-to-right),
$(\rho(t), \lab(t)) \to  (\rho(t_1), \rho(t_2))$ is in~$\Delta$.
\end{itemize}
   The automaton~$\calA$ \emph{accepts}~$(T,\lab)$ if there is a run over~$(T,\lab)$ with $\rho(t^*) = s_0$, where $t^*$ denotes the root of~$T$.
 The $N$-slice  
 $\calL_N(\calA)$ is the set of  pairs $(T,\lab)\in  \Trees_2[\Sigma]$
 with $|V(T)| = N$
 that are accepted by~$\calA$.
  \end{defn}
 
 Given a finite alphabet~$\Sigma$,
 Arenas et al.~\cite{ArenasNew} consider the following computational problem.
 \vbox{
		\begin{description}\setlength{\itemsep}{0pt}
		\setlength{\parskip}{0pt}
		\setlength{\parsep}{0pt}   			
		        \item[\bf Name:] $\TA$ 
			    \item[\bf Input:]   A tree automaton~$\calA$   and an integer~$N$ in unary.	     			    \item[\bf Output:]  $|\calL_N(\calA)|$.
		\end{description}
	}
 
 Their Corollary~4.9 gives us the following lemma.
   \begin{lem}(Arenas, Croquevielle, Jayaram, Riveros)\label{lem:TA} 
 There is an FPRAS for $\TA$. 
 \end{lem}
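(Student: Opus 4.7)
\textbf{Proof plan for Lemma~\ref{lem:TA}.}

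I would prove this by adapting the FPRAS for $\#\mathrm{NFA}$---the main technical contribution of~\cite{ArenasNew}---to the tree-automaton setting. For each state $s \in S$ and size $n \le N$, let $\calL_n(s)$ denote the set of pairs $(T,\lab) \in \Trees_2[\Sigma]$ with $|V(T)|=n$ that admit an accepting run assigning $s$ to the root. The goal is to approximate $|\calL_N(s_0)|$.

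First, I would compute, via a standard dynamic program on state--size pairs, the total number $g(s,n)$ of accepting \emph{runs} (not of accepted pairs) in time $\poly(|S|,|\Sigma|,N)$. The recurrence splits a tree of size $n$ at its root, which has $0$, $1$, or $2$ children, summing over applicable transitions of $\Delta$ and, in the two-child case, over splits of the remaining $n-1$ nodes between the two subtrees. If $\calA$ were deterministic this would already equal $|\calL_n(s)|$, but when $\calA$ is nondeterministic a single pair $(T,\lab)$ may admit many accepting runs and exact counting of accepted pairs becomes $\#\mathrm{P}$-hard in general.

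To correct for this overcounting, I would use a Karp--Luby-style self-reducible sampling scheme. The table $g$ allows one to sample an accepting run uniformly at random by choosing, at each node, a transition with probability proportional to the corresponding DP entry. The induced distribution on the resulting pair $(T,\lab)$ assigns mass proportional to $R(T,\lab,s)$, the number of accepting runs on that pair, so $|\calL_n(s)| = g(s,n)\cdot \mathbb{E}[1/R(T,\lab,s)]$, where the expectation is over the run-sampling distribution. For the \emph{fixed} sampled tree $T$, the quantity $R(T,\lab,s)$ is itself the value of a polynomial-size DP over the nodes of $T$ and is exactly computable in polynomial time. Averaging the resulting unbiased estimator over polynomially many independent samples should, by Chernoff, yield an $(\epsilon,\delta)$-approximation.

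The hardest part will be bounding the variance of the estimator so that polynomially many samples really do suffice. Without additional care, $1/R(T,\lab,s)$ can be exponentially small on most samples while $|\calL_n(s)|$ is dominated by a few ``rare'' pairs with small $R$, inflating the variance. This is precisely the central technical obstacle resolved by~\cite{ArenasNew} in the $\#\mathrm{NFA}$ case, using a reweighted proposal distribution together with a recursive self-reduction that estimates ratios of the form $|\calL_n(s)|/|\calL_{n-1}(s')|$ one state at a time. I expect the same strategy to extend to the tree setting, the only new twist being that each internal node now splits the computation between two subtrees rather than extending a single NFA path; the recursive ratio estimation can be arranged in bottom-up order so that the accumulated multiplicative error remains polynomial in $|S|$, $|\Sigma|$, $N$, and $1/\epsilon$.
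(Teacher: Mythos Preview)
The paper does not prove this lemma at all: it is quoted verbatim as Corollary~4.9 of~\cite{ArenasNew}, which already treats tree automata directly. So no adaptation from $\#\mathrm{NFA}$ to trees is needed; the result you want is exactly what is cited.

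Regarding your sketch on its own merits: the first two steps (the DP for $g(s,n)$ and the unbiased estimator $g(s,n)\cdot\mathbb{E}[1/R(T,\lab,s)]$) are fine but, as you acknowledge, the variance of $1/R$ can be exponential and this is where the whole difficulty lies. The fix you outline---``reweighted proposal distribution'' plus ``recursive ratio estimation''---is not really what~\cite{ArenasNew} does, and I do not see how to make a run-based importance-sampling scheme converge here. Their actual mechanism is different: for each pair $(s,n)$ they maintain, in addition to an approximate count $\widehat{N}(s,n)$ of $|\calL_n(s)|$, a polynomial-size multiset that is (approximately) a uniform sample from $\calL_n(s)$. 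These two objects are built together, bottom-up in $n$: given sketches and estimates for smaller sizes, one samples a labelled tree rooted at $s$ by first choosing a transition and a size split with probability proportional to the product of the children's $\widehat{N}$ values, and then drawing the children from their stored sketches. The approximate counts are in turn obtained from the sketches by estimating, for the union over transitions, how much overlap there is. The analysis controls the accumulated multiplicative error across the $N$ levels. None of this goes through the ``number of accepting runs on the sampled pair'' quantity $R(T,\lab,s)$, and replacing their sketch machinery by your $1/R$ estimator would reintroduce exactly the variance blow-up you are worried about. So if you want an actual proof rather than a citation, you should follow the sketch-maintenance argument of~\cite{ArenasNew} rather than a Karp--Luby correction over runs.
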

 
  \subsubsection{The proof of Theorem~\ref{thm:fractionalFPRAS}}
  
  Theorem~\ref{thm:fractionalFPRAS} now follows from Lemma~\ref{lem:TA} and
  from Lemma~\ref{lem:toTA}, whose proof is inspired by the reduction of
  Arenas et al.~\cite{ArenasNew} for the case with bounded hypertreewidth.
   
\begin{lem}\label{lem:toTA}
Let $b$ be a positive integer.
Let $C$ be a class of hypergraphs such that every member of $C$ has fractional hypertreewidth at most $b$.
 There is a parsimonious reduction from
 $\CQ{\Phi_C}$  to $\TA$.\end{lem}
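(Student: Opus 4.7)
The plan is to follow the template used by Arenas et al.\ for Theorem~\ref{thm:Arenas}, replacing their hypertree-guard enumeration of bag solutions by the fractional-edge-cover enumeration supplied by Lemma~\ref{lem:Sol}. Given $(\phi, \calD)$ with $\phi \in \Phi_C$, I would first apply Lemma~\ref{lem:getniceTD} to obtain, in polynomial time, a nice tree decomposition $(T, \boldB)$ of $H(\phi)$ whose fractional hypertreewidth is at most $c := 7b^3+31b+7$, and set $N := |V(T)|$. Because $\fcn{H(\phi)[B_t]} \leq c$ at every bag, Lemma~\ref{lem:Sol} lets me compute $\calS_t := \sol{\phi, \calD, B_t}$ for every $t \in V(T)$ in polynomial time, with $|\calS_t|$ bounded polynomially in $\structsize{\calD}$ by an AGM-style argument.

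I would then construct a tree automaton $\calA = (S, \Sigma, \Delta, s_0)$ whose alphabet symbols carry a node identity $t \in V(T)$ together with a projected candidate answer $\eta \in \proj(\calS_t, B_t \cap \free{\phi})$, whose states $s_{t, \alpha}$ record the full bag assignment $\alpha \in \calS_t$ (with $s_0 := s_{t^*, \emptyset}$ for the root $t^*$), and whose transitions propagate the state along $T$ by restriction at introduce nodes, duplication at join nodes, and extension at forget nodes; at a forget node for a free variable $v$ the label determines the extending value $\alpha(v)$, while at a forget node for an existential variable $v$ the automaton selects a value for $v$ from those keeping the target state inside $\calS_{t_1}$. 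A transition is valid only if $\proj(\alpha, B_t \cap \free{\phi}) = \eta$ and the target assignment lies in $\calS_{t_1}$; labels also enforce that the shape of the input tree matches $T$, because each symbol carries the identity of its node and the transitions are indexed by $T$'s local structure. Since $|V(T)|$ and $\max_t |\calS_t|$ are both polynomial in $\structsize{\phi}$ and $\structsize{\calD}$, the entire automaton has polynomial size and is computable in polynomial time.

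Finally I would verify the bijection between $\ans{\phi, \calD}$ and $\calL_N(\calA)$ as follows. To each answer $\tau$ the canonical labelling $\lab_\tau(t) := (t, \proj(\tau, B_t \cap \free{\phi}))$ is accepted, as any global solution $\alpha^\tau$ extending $\tau$ produces, via $\alpha_t := \proj(\alpha^\tau, B_t) \in \calS_t$, a compatible family of bag assignments that is a valid witnessing run. Conversely, each accepted labelling forces the underlying tree to be isomorphic to $T$; its labels determine a map $\tau \colon \free{\phi} \to U(\calD)$, and the witnessing compatible family $\{\alpha_t\}$ reassembles into a global solution of $\phi$, because every atom lies in some bag $B_t$ where $\alpha_t \in \calS_t$ satisfies it, and the free projection of this solution is precisely $\tau$. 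Distinct answers differ on some free variable, which sits in some bag and is therefore recorded distinctly by the corresponding labellings, so the correspondence is bijective. The main obstacle is ensuring this parsimoniousness: a naive construction exposing existential values in the labels would count each answer once per global extension, and it is essential that the labels record only the free-variable restrictions of an answer while the existential values are absorbed into the run, as in the Arenas et al.\ construction for bounded hypertreewidth.
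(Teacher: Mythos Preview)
Your proposal is correct and follows essentially the same approach as the paper's proof: obtain a nice tree decomposition via Lemma~\ref{lem:getniceTD}, enumerate $\sol{\phi,\calD,B_t}$ at each bag via Lemma~\ref{lem:Sol}, and build a tree automaton whose states are pairs $(t,\alpha)$ with $\alpha\in\sol{\phi,\calD,B_t}$ and whose labels are pairs $(t,\proj(\alpha,\free{\phi}))$, with transitions dictated by the nice-decomposition node types. Your emphasis on keeping existential choices in the run and only free-variable projections in the labels is exactly the point that makes the reduction parsimonious; the one minor imprecision is that at a forget node~$t$ the label at~$t$ itself does not record the extending value (since $v\notin B_t$), rather it is the label at the child~$t_1$ that does---but this does not affect correctness.
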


\begin{proof}
  
Let $(\phi,\calD)$ be an input to  $\CQ{\Phi_C}$. 
Let $n = \lVert\phi\rVert$ and $m= \lVert\calD\rVert$.

 By Lemma~\ref{lem:getniceTD}
 it takes $\poly(n,m)$ time to
 construct  a nice tree decomposition  
 $(T,\boldB)$ of $H(\phi)$
 with fractional hypertreewidth 
at most $c:=7b^3+31b+7$

For each $t\in V(T)$, let $\blanksol_t = \sol{\phi,\calD,B_t}$.
By Lemma~\ref{lem:Sol} this can be computed in $\poly(n,m)$   time.
Let   $\blanksol{}'_t = \proj(\blanksol_t,\free{\phi})$.
 Clearly, 
this can also be constructed in  
 $\poly(n,m)$ time.

 Recall that
$t^*$ is the root of $T$. Note that $\blanksol_{t^*}= \sol{\phi,\calD,\emptyset}$.
If this is empty, then  there are no answers of~$(\phi,D)$,
so the algorithm returns~$0$.
We assume from now on that $\blanksol_{t^*}$ is non-empty. In this case, it contains exactly one assignment, which is the
empty assignment~$\epsilon$.
 
 We now use $(T,\boldB)$ and $\{\blanksol_t\}$ and $\{\blanksol'_t\}$  to construct a tree automaton~$\calA = (S,\Sigma,\Delta,s_0)$.

 \begin{itemize}
     \item The state space~$S$ is defined by
     $S = \{(t,\alpha) \mid t\in V(T), \alpha \in \blanksol_t\}$.
     \item The initial state is $s_0 = (t^*,\epsilon)$.

     \item The label set~$\Sigma$ is defined by
     $\Sigma = \{(t,\beta) \mid t\in V(T), \beta \in \blanksol'_t\}$.
     
     \item  Suppose that~$t\in V(T)$ has two children $t_1$ and $t_2$.
     Recall that $B_t = B_{t_1} = B_{t_2}$. Thus, $\blanksol_t = \blanksol_{t_1} = \blanksol_{t_2}$. For each $\alpha\in \blanksol_t$ there is a   transition 
     $((t,\alpha),(t,\proj(\alpha,\free{\phi}))) \to ((t_1,\alpha),(t_2,\alpha))$. 
 
     \item Suppose that $t$ has one child~$t_1$ and $B_{t_1} \subseteq B_t$ and $B_t \setminus B_{t_1} = \{v\}$.  
     For each $\alpha\in \blanksol_t$ we have $\proj(\alpha,B_{t_1}) \in \blanksol_{t_1}$ and 
     there is a   transition 
     $((t,\alpha),(t,\proj(\alpha,\free{\phi}))) \to (t_1,\proj(\alpha,B_{t_1}))$.

     \item
   Suppose that $t$ has one child~$t_1$ and 
$B_{t} \subseteq B_{t_1}$ and $B_{t_1} \setminus B_{t} = \{v\}$.  For each 
$\alpha \in \blanksol_t$ let 
   $A_\alpha = \{ \alpha_1 \in \blanksol_{t_1} \mid \mbox{$\alpha_1$ is consistent with~$\alpha$}\}$.  
   For each $\alpha_1 \in A_\alpha$, there is a 
    transition 
     $((t,\alpha),(t,\proj(\alpha,\free{\phi}))) \to (t_1,\alpha_1)$.   
     
    \item Suppose that $t$ is a leaf so that $B_t=\emptyset$. For the empty assignment~$\epsilon$,
    there is a transition $((t,\epsilon),(t,\epsilon)) \to \emptyset$.

 \end{itemize}
 Let $N = |V(T)|$.
  The bijection from 
 $\calL_N(\calA)$  to  $\ans{\phi,\calD}$ is described as follows.
 \begin{itemize}
  \item
 Consider any element~$(T',\psi)$ of $\calL_N(\calA)$ and any run~$\rho$ that accepts~$(T',\psi)$. 
 It is immediate from the construction of~$\calA$ (and the definition of accept and run) that $T'=T$.
  
 It also follows from the construction of~$\calA$ that  the assignments~$\alpha$ in the states~$(t,\alpha)$ of~$\rho$ are all consistent. This is from the construction of the transitions, together with item (ii) in the definition of tree decomposition, which ensures that, for each variable $v\in \var(\phi)$,
 the set $\{t \in V(T) \mid v \in B_t\}$ is
 connected in~$T$.
  
  Every $v\in \var(\phi)$ appears in some set~$B_t$ (this follows from our assumption in the definition of CQs that every $v$ is in at least one atom of~$\phi$,
  together with item (i) in the definition of tree decomposition). Thus,
 composing all of the $\alpha$'s in a run  gives an assignment~$\tau\colon \var(\phi) \to U(\calD)$. 
 Consider any atom~$R$ of~$\phi$. By item~(i), the variables of~$R$ are all contained in some bag~$B_t$. 
 Let $\alpha = \proj(\tau,B_t)$.
 Since $(t,\alpha)$ is a state, $\alpha \in \blanksol_t$, so $\alpha$ satisfies~$R$ and so does~$\tau$.  
 We conclude that $\tau$ is a solution of~$(\phi,\calD)$.
 
 Therefore, $\proj(\tau,\free{\phi})$,
 which is equal to $\psi$ (the composition of
 the labels in the run~$\rho$) is an answer
 of~$(\phi,\calD)$. 
 
 \item On the other hand, for any answer~$\psi$
 of~$(\phi,\calD)$ there is a consistent   assignment~$\tau\colon \vars{\phi} \to U(\calD)$ that satisfies every atom.
 $\tau$ is a solution of~$(\phi,\calD)$.
 The run which assigns each $t\in V(T)$ state $(t,\proj(\tau,B_t))$ is
 a run accepting~$(T,\psi)$.
\end{itemize}

\end{proof}

\section{Extensions}\label{sec:extensions}
All of our algorithmic results can be extended from approximately counting answers to approximately uniformly sampling them. 
For the classes of hypergraphs $C$ for which we present algorithms for $\ECQ{\Phi_C}$ it is not hard to see that $\ECQ{\Phi_C}$ is self-partitionable (see~\cite{Dyer99:lmslns}), and thus the technique from Jerrum, Valiant and Vazirani~\cite{Jerrum86:tcs} shows that approximate sampling and approximate counting are equivalent.

Alternatively, all of our algorithms can be  
adjusted specifically to solve  the 
corresponding sampling problem.
\begin{itemize}
    \item  
A key ingredient of the algorithms presented in the proofs of Theorems~\ref{thm:mainbounded} and~\ref{thm:submod} is the framework of Dell, Lapinskas, and Meeks~\cite[Theorem 1]{DellLM20} (stated in Theorem~\ref{thm:colreduction}).
In their Theorem 2, Dell, Lapinskas, and Meeks also provide a framework for approximate sampling and this can be used  in our algorithms instead of their Theorem 1. 
\item  A key ingredient of the algorithm presented in the proof of Theorem~\ref{thm:fractionalFPRAS}
is the tree automaton algorithm of Arenas, Croquevielle, Jayaram and Riveros~\cite{ArenasNew} (stated as Lemma~\ref{lem:TA} in this work). In their Corollary~4.9, Arenas et al.\ also provide a fully-polynomial almost uniform sampler for the set of pairs that are accepted by a tree automaton and this can be used in our algorithms instead of Lemma~\ref{lem:TA}.
\end{itemize} 

Using a standard technique for approximate counting from Karp and Luby~\cite{KarpL83:focs}, our results can also be extended to counting answers of unions of (extended) conjunctive queries. This technique has been used before in other works about fixed-parameter tractability~\cite{ArvindR02:isaac}.

\bibliographystyle{plainurl}
\bibliography{CQs}

\end{document}